\def\ZZ{\mathbb{Z}}
\def\RR{\mathbb{R}}
\def\cal{\mathcal}
\def\bf{\mathbf}
\def\calL{\mathcal{L}}
\def\Pr{\mathrm{Pr}}
\def\u{\bf{u}}
\def\e{\bf{e}}
\def\x{\bf{x}}
\def\L{\Lambda}
\def\Lp{\Lambda^{\perp}}
\def\b{\bf{b}}
\def\s{\bf{s}}
\lstdefinelanguage{Sage}[]{Python}
{morekeywords={True,False,sage,singular},
	sensitive=true}
\definecolor{dblackcolor}{rgb}{0.0,0.0,0.0}
\definecolor{dbluecolor}{rgb}{.01,.02,0.7}
\definecolor{dredcolor}{rgb}{0.8,0,0}
\definecolor{dgraycolor}{rgb}{0.30,0.3,0.30}
\newcommand{\dblack}{\color{dblackcolor}\textbf}
\renewcommand{\emph}[1]{{\dblack{#1}}}
\providecommand{\keywords}[1]{\textbf{\textit{Key words: }} #1}
\begin{document}
	\title{\textbf{ Puncturable Encryption:\\ A Generic Construction from Delegatable Fully Key-Homomorphic Encryption }}
\titlerunning{DFKHE-based Puncturable Encryption}

	\author{Willy Susilo\inst{1}, Dung Hoang Duong\inst{1},  Huy Quoc Le\inst{1,2}, Josef Pieprzyk\inst{2}}
	
	\authorrunning{W. Susilo, D. H. Duong, H. Q. Le and J. Pieprzyk}
	\institute{
		Institute of Cybersecurity and Cryptology, School of Computing and Information Technology, University of Wollongong, 
		Northfields Avenue, Wollongong NSW 2522, Australia.\\
		 \email{\{wsusilo,hduong\}@uow.edu.au},
		 	\email{qhl576@uowmail.edu.au}
	\and
		CSIRO Data61, Sydney, NSW  2015, Australia,\\
		and 
		Institute of Computer Science, Polish Academy of Sciences, Warsaw, Poland.
		\email{Josef.Pieprzyk@data61.csiro.au } 
	}

	\maketitle

\begin{abstract} %Recently, puncturable encryption (PE) (Green and Miers, IEEE S\&P 2015) has been becoming an essential tool for construcion of many other applications, such as asynchronous messaging system (Green and Miers, IEEE S\&P 2015)),  forward--secret zero round--trip time protocol (G{\"u}nther et al., Eurocypt 2017/ Derler et al., Eurocrypt 2018), and public--key watermarking schemes (Cohen et al., STOC 2016), CCA--secure fully homomorphic encryption (Canetti et al., PKC 2017), or forward--secret proxy re-encryption  (Derler et al., PKC 2018).  A PE scheme is a public key encryption which provides the puncture property allowing  to withdraw the decryption  ability for, e.g., messages, time periods, identities,  by repeatedly updating the decryption keys, possibly without requiring the role of the key distribution center in the system. So far, there have been a few generic contructions as well as instantiations of PE based either on pairings, which are insecure against quantum attackers or on indistinguishability obfuscation, which is still impractical. 
	
%	Recently, puncturable encryption (PE) (Green and Miers, IEEE S\&P 2015) has been becoming an essential tool for construction of many interesting applications, such as asynchronous messaging system (Green and Miers, IEEE S\&P 2015)),  forward--secret zero round--trip time protocol (Eurocypt 2017, Eurocrypt 2018), and public--key watermarking schemes (STOC 2016), CCA--secure fully homomorphic encryption (PKC 2017), or forward--secret proxy re-encryption  (PKC 2018). 
Puncturable encryption (PE), proposed by Green and Miers at IEEE S\&P 2015, is a kind of public key encryption that  allows recipients to revoke individual messages by repeatedly updating decryption keys without communicating with senders. PE is an essential tool for constructing many interesting applications, such as asynchronous messaging systems,  forward-secret zero round-trip time protocols, public-key watermarking schemes and forward-secret proxy re-encryptions. This paper  revisits PEs  from the observation that the puncturing property can be implemented as efficiently computable functions. From this view, we propose a generic PE construction from the fully key-homomorphic encryption, augmented with a key delegation mechanism (DFKHE) from Boneh et al. at Eurocrypt 2014. We show that our PE construction enjoys the selective security under chosen plaintext attacks (that can be converted into the adaptive security with some efficiency loss)  from that of DFKHE in the standard model.  Basing on the framework, we obtain the first post-quantum secure PE instantiation that is based on the learning with errors problem, selective secure under chosen plaintext attacks (CPA) in the standard model. We also discuss about the ability of modification our framework to support the unbounded number of ciphertext tags inspired from the work of Brakerski and Vaikuntanathan at CRYPTO 2016.

\end{abstract}

\keywords{Puncturable encryption, attribute-based encryption, learning with errors, arithmetic circuits, fully key-homomorphic encryption, key delegation}

\section{Introduction}

Puncturable encryption (PE), proposed by Green and Miers \cite{GM15} in 2015,  is a kind of public key encryption, 
which can also be seen as a tag-based encryption (TBE), where both encryption and decryption are controlled by tags. 
Similarly to TBE, a plaintext in PE is encrypted together with tags, which are called \textit{ciphertext tags}. 
In addition, the puncturing property of PE allows to produce new punctured secret keys associated 
some \textit{punctures} (or \textit{punctured tags}).  
Although the new keys (\textit{puncture keys}) differ from the old ones, they still allow recipients to decrypt old ciphertexts as long as chosen \textit{punctured tags} are different from tags embedded in the ciphertext. 
The puncturing  property is very useful when the current decryption key is compromised.
In a such situation, a recipient merely needs  to update his key using the puncturing mechanism.
PE is also useful when there is a need to revoke decryption capability from many users 
in order to protect some sensitive information (e.g., a time period or user identities).
In this case, the puncturing mechanism is called for time periods or user identities.

 Also, PE can provide  forward security in a fine-grained level.
Forward security, formulated in \cite{Gun90} in the context of key-exchange protocols, is a desired security property 
that helps to reduce a security risk caused by key exposure attacks.
In particular, forward secure encryption (FSE) guarantees confidentiality of old messages, 
when the current secret key has been compromised. 
%And the first instantiation of FSE is given in \cite{CHK03} and is based on parings.
Compared to PE, FSE provides a limited support for revocation of decryption capability.
For instance, it is difficult for FSE to control decryption capability for any individual ciphertext
(or all ciphertexts) produced during a certain time period, 
which, in contrast, can be easily done with PE.

Due to the aforementioned advantages, 
PE has become more and more popular and has been used in many important applications in such as
asynchronous messaging transport systems \cite{GM15},
forward-secure zero round--trip time  (0-RTT) key-exchange protocols \cite{GHJ+17, DJSS18}, 
public-key watermarking schemes \cite{CHN+16} and
forward-secure proxy re-encryptions  \cite{DKL+18}.\\

\noindent \textbf{Related Works.} Green and Miers \cite{GM15}  
propose the notion of  PE  and also present a specific ABE-based PE instantiation. 
The instantiation is based on the decisional bilinear Diffie-Hellman assumption (DBDH) in bilinear groups 
and is proven to be CPA secure in the random oracle model (ROM). 
Following the work \cite{GM15}, many other constructions have been proposed such as \cite{CHN+16, CRRV17, GHJ+17,  DJSS18, SSS+20}
(see Table \ref{tab2} for a summary). For instance, 
G{\"u}nther et al. \cite{GHJ+17} have provided a generic PE construction
from \textit{any} selectively secure hierarchical identity-based key encapsulation (HIBEKEM) 
combined with an \textit{any} one time signature (OTS). 
In fact, the authors of \cite{GHJ+17} claim that their framework can be instantiated as the first post-quantum PE. 
Also, in the work \cite{GHJ+17}, the authors present the first PE-based forward-secret zero round-trip time protocol with full forward secrecy. 
However, they instantiate PE that is secure in the standard model (SDM) by 
combining a (DDH)-based HIBE with a OTS based on discrete logarithm.
The construction supports a predetermined  number of ciphertext tags as well as a limited number of punctures.
Derler et al. \cite{DJSS18} introduce the notion of Bloom filter encryption (BFE),
 which can be converted to PE.
They show how to instantiate
BFE using identity-based encryption (IBE) with a 
specific construction that assumes intractability of
the bilinear computational
Diffie-Hellman (BCDH) problem. 
Later, 
Derler et. al. \cite{DGJ+18-ePrint} extend the result of \cite{DJSS18} 
and give a generic BFE construction from identity-based broadcast encryption (IBBE). 
The instantiation in \cite{DGJ+18-ePrint} is based on a generalization of the Diffie-Hellman exponent (GDDHE) assumption in parings. 
However, the construction based on BFE suffers from \textit{non-negligible correctness error}. 
This excludes it from applications that require 
negligible correctness error, as discussed in \cite{SSS+20}.
Most recently, Sun et al. \cite{SSS+20} have introduced a new concept, 
which they call key-homomorphic identity-based revocable key encapsulation mechanism (KH-IRKEM) with extended correctness, 
from which 
they obtain a modular design of PE with \textit{negligible correctness errors}.
In particular, they describe four modular and compact instantiations of PE, which are secure in SDM.
However, all of them are  based on  hard problems in pairings, 
namely $q$-decision bilinear Diffie-Hellman exponent problem ($q$--DBDHE), 
the decision bilinear Diffie-Hellman problem(DBDH), 
the $q$-decisional multi-exponent
 bilinear Diffie-Hellman (q-MEBDH) problem and  
 the decisional linear problem (DLIN). 
 We emphasize that all existing instantiations mentioned above are insecure against quantum adversaries. 
Some other works like \cite{CRRV17, CHN+16}  based PE on the notion of indistinguishability obfuscation, which is still impractical. 
The reader is referred to \cite{SSS+20} for a state-of-the-art discussion. 

To the best of our knowledge, there has been no specific lattice-based PE instantiation, which simultaneously enjoys  negligible correctness error as well as post-quantum security in the standard model.

 %One question raised here is that whether or not we can approach to PE via the arithmetic circuit ABE method as the work \cite{BGG+14}, where we treat ciphertext tags as attributes, the puncture property as a policy, and  apply the fully key-homomorphic mechanism from \cite{BGG+14} to evaluate the circuit policies?  In this work, we give the affirmative answer to the question by defining an efficiently computable family of  \textit{equality test functions} for the puncture property of PE.  \\ 
 
\subsubsection{Our Contribution.}  
We first give a \textit{generic} construction of PE from \textit{delegatable fully key-homomorphic encryption} (DFKHE) framework. 
The framework is a generalisation of 
fully key-homomorphic encryption (FKHE) \cite{BGG+14} by adding
a key delegation mechanism.
The framework is closely related to the functional encryption \cite{BSW11}. 

We also present an  explicit  PE construction based on lattices.
Our design is obtained from LWE-based  DFKHE that we build using
FKHE for the learning with errors (LWE) setting \cite{BGG+14}. 
This is combined with the key delegation ability supplied by the lattice trapdoor techniques \cite{GPV08, CHKP10, ABB10}.  
Our lattice FE construction has the following characteristics:
\begin{itemize}
	\item It supports a predetermined number of ciphertext tags per ciphertext. 
		The ciphertext size is short and depends linearly on the number of ciphertext tags, 
		which is fixed in advance. However, we note that following the work of Brakerski and Vaikuntanathan \cite{BV16}, 
		our construction  might be extended to obtain a variant that supports unbounded number of 
		ciphertext tags (see Section \ref{unbounded} for a detailed discussion),
	\item
It works for a predetermined number of punctures.
		% as long as the noise error is still short enough. 
		The size of decryption keys (i.e., puncture keys) increases quadratically with the number of punctured tags,
	\item  It offers selective CPA security in the standard model (that can be converted into full CPA security 
		using the complexity leveraging technique as discussed in \cite{CHK04}, \cite{Kil06}\cite{BB11}, \cite{BGG+14}).  
		This is due to CPA security of LWE-based underlying DFKHE (following the security proof for the generic framework).
	\item It enjoys post-quantum security and negligible correctness errors.
\end{itemize}

Table \ref{tab2} compares our work with the results obtained by other authors.
At first sight, the FE framework based on key homomorphic revocable identity-based (KH-IRKEM) \cite{SSS+20} 
looks similar to ours. 
However, both frameworks are different.
While key-homomorphism used by us means the capacity of transforming 
(as claimed in \cite[Subsection 1.1]{BGG+14}) 
``\textit{an encryption under key $\mathbf{x}$ into an encryption under key $f(\mathbf{x})$}", 
key-homomorphism  defined in \cite[Definition 8]{SSS+20}  
reflects the ability of  preserving the algebraic structure of (mathematical) groups. \\

\noindent \textbf{Overview and Techniques.}  
We start with a high-level description of fully-key homomorphism encryption (FHKE), 
which was proposed by Boneh et al. \cite{BGG+14}. 
Afterwards, we introduce what we call the \textit{delegetable  fully-key homomorphism encryption (DFHKE)}. 
At high-level description, FKHE possesses a mechanism that allows to convert a ciphertext $ct_\mathbf{x}$ 
(associated with a public variable
 $\mathbf{x}$) into the evaluated one $ct_f$ for the same plaintext (associated with the pair $(y,f)$),
 where $f$ is an efficiently computable function and $f(\mathbf{x})=y$. 
 In other words,
 FKHE requires a special key-homomorphic evaluation algorithm, 
 called $\mathsf{Eval}$, such that
  $ct_f \leftarrow \mathsf{Eval}(f, ct_\mathbf{x})$.
In order to successfully decrypt an evaluated ciphertext, 
the decryptor needs to evaluate the initial secret $sk$ to get $sk_f$.
An extra algorithm, called $\mathsf{KHom}$, is needed to do this, i.e.
$sk_f \leftarrow \mathsf{KHom}(sk,(y,f))$.  
A drawback of FKHE is that it supports only a single function $f$.  

Actually, we'd  like to perform
key-homomorphic evaluation for many functions $\{f_1,\cdots, f_k\}$ that belong to a family $\mathcal{F}$. 
To meet the requirement and obtain DFKHE,
we generalise FKHE  by endowing  it with two algorithms $\mathsf{ExtEval}$ and $\mathsf{KDel}$.
The first algorithm transforms 
 ($ct_\mathbf{x}, \mathbf{x}$) into ($ct_{f_1,\cdots, f_k}, (y,f_1, \cdots f_k)$), where $f_1(\mathbf{x})=\cdots=f_k(\mathbf{x})=y$. 
 This is written as $ct_{f_1,\cdots, f_k} \leftarrow \mathsf{ExtEval}(f_1,\cdots, f_k, ct_\mathbf{x})$.
 The second algorithm 
 allows to delegate the secret key step by step
 for the next function or $sk_{f_1, \cdots, f_k} \leftarrow \mathsf{KDel}(sk_{f_1, \cdots, f_{k-1}}, (y,f_k))$.
 % \textcolor{red}{At this point, the DFKHE  can be viewed as the FKHE endowed with the two algorithms $\mathsf{ExtEval}$ and $\mathsf{KDel}$. }

 \begin{table}[h]
 	\centering
 	\medskip
 	\smallskip
 	%	\raisebox{\dimexpr 0.6\baselineskip-\height}% align tops
 	\small\addtolength{\tabcolsep}{0pt}
 	\begin{tabular}{ c | c| c|c|c|c |c|c}
 	\hline
 		Literature&From& Assumption&\makecell{Security\\ Model} &\#Tags &\#Punctures &\makecell{Post-\\quantum} &\makecell{Negl.\\ Corr. \\Error} \\
 		\hline
 		\hline 
 	Green	\cite{GM15}&ABE& DBDH& ROM &$<\infty$ &$\infty$ & $\times$ &$\checkmark$\\
 		\hline
 	G{\"u}nther 	\cite{GHJ+17} &\makecell{any HIBE \\+ any OTS}&\makecell{DDH (HIBE)\\+ DLP (OTS) }&SDM & $<\infty$ &$<\infty$ &$\times$ &$\checkmark$\\
 		\hline
% 		 \cite{CRRV17} &&iO& && && &$\times$&\\
 	%	\hline
 		 %\cite{CHN+16} &&iO& && && &$\times$&\\
 		%\hline
 			Derler  \cite{DGJ+18-ePrint} &BFE (IBBE)& GDDHE& \textbf{ROM$^*$} &1& $< \infty$&$\times$ &$\times$\\
 		\hline
 		Derler \cite{DJSS18} &BFE (IBE)&BCDH&ROM &1& $<\infty$&$\times$&$\times$\\
 		\hline
 		%\cite{XY19}  &IBE& &$\checkmark$& & && &$\times$\\
 	%	\hline
 		Sun\cite{SSS+20}  &KH-IRKEM&\makecell{$q$--DBDHE\\DBDH\\ $q$--MEBDH\\ DLIN} &SDM& \makecell{$<\infty$\\$<\infty$\\$\infty$\\$<\infty$}&\makecell{$\infty$\\$\infty$\\$\infty$\\$\infty$} &\makecell{$\times$\\$\times$\\$\times$\\$\times$}&$\checkmark$\\
 		\hline
 		\hline
 		\textbf{This work}&	DFKHE &  DLWE& SDM& \makecell{$< \infty$}&$< \infty$&$\checkmark$ &$\checkmark$\\
 		\hline \hline
 	\end{tabular} 
 	
 	\caption{ Comparison of some existing PE constructions in the literature with ours.  Note that, here all works are being considered in the CPA security setting. The notation  ``$<\infty$" means "bounded" or ``predetermined", while ``$\infty $" means  ``unlimited" or ``arbitrary".  The column entitled ``Post-quantum" says whether the specific construction in each framework is post-quantum secure or not regardless its generic framework. The last column mentions to supporting the negligible correctness error. \textbf{ROM$^*$}: For the BFE-based  FE basing on the IBBE instantiation of Derler et al. \cite{DGJ+18-ePrint}, we note that, the IBBE instantiation can be modified to remove ROM, as claimed by Delerabl\'ee in \cite[Subection 3.2]{Del07} }
 	\label{tab2}
 \end{table}
 
 \iffalse
 \begin{table}[pt]
 	\centering
 	\medskip
 	\smallskip
 	%	\raisebox{\dimexpr 0.6\baselineskip-\height}% align tops
 	\small\addtolength{\tabcolsep}{0pt}
 	\begin{tabular}{ c | c| c|c|c}
 		Literature&$|pk|$&$|sk|$&\makecell{$|sk_\eta|$} &$|ct|$\\
 		\hline
 		\hline 
 		Green	\cite{GM15}&& & &$a$ \\
 		\hline
 		G{\"u}nther 	\cite{GHJ+17} &\makecell{}&\makecell{}&& $<\infty$\\
 		\hline
 	
 		Derler  \cite{DGJ+18-ePrint} & &  & &1\\
 		\hline
 		Derler \cite{DJSS18} & & & &1\\
 		\hline
 		%\cite{XY19}  &IBE& &$\checkmark$& & && &$\times$\\
 		%	\hline
 		Sun\cite{SSS+20}  &&\makecell{ \\      \\    \\   } &  & \makecell{$<\infty$\\$<\infty$\\$\infty$\\$<\infty$}\\
 		\hline
 		
 		\textbf{Ours}&$O((d+1)\cdot n^2 \log^2 q)$& $O(n^2 \log^2 q\cdot \log( n\log q))$  & $O(\eta^2)$& \makecell{$O((d+2)\cdot n\log^2q))$}\\
 		\hline \hline
 	\end{tabular} 
 	
 	\caption{ \textcolor{red}{ Performance of our LWE-based PE  $(\eta+1) \cdot n \log q \cdot( O(\log(\beta_{\mathcal{F}})+\eta\cdot \log (n \log q)))$}}
 	\label{tab21}
 \end{table}
 
 \fi

 Our generic PE framework is inspired by a simple but subtle observation that puncturing property
 requires equality of ciphertext tags and punctures.
This can be provided by functions that can be efficiently computed by arithmetic circuits.
 We call such functions \textit{equality test functions}. 
% the requirement for the puncturing property 
% (i.e., the equality of ciphertext tags and punctures) 
% can be formalized through functions that can be efficiently computed by arithmetic circuits. 
% We call such functions the \textit{equality test functions}. 
 	%Roughly speaking, in  PE the puncturing property allows to produce new keys from the old ones which still can decrypt successfully ciphertexts that were previously generated. However, this happens if and only if the punctures chosen to produce new keys differ from all tags being embedded in the old ciphertexts. 
 	%In order to trigger the puncturing property for decryption, we should deploy efficient functions that can compare punctures with ciphertext tags. 
Note that for PE, ciphertext tags play the role of variables $\textbf{x}$'s and equality test functions act as functions $f$'s  
defined in FKHE. 
For FE, one more puncture added defines one extra equality test function,
which needs a delegation mechanism to take the function into account. 
We note that the requirement can be 
easily met using the same idea as the key delegation mentioned above.
In order to be able to employ the idea of DFKHE for $(y_0, \mathcal{F})$ to PE, 
we define an efficiently computable family $\mathcal{F}$ of equality test functions $f_{t^*}(\mathbf{t})$ allowing us to compare the puncture $t^*$ with ciphertext tags $\mathbf{t}=(t_1, \cdots, t_d)$ under the definition that $f_{t^*}(\mathbf{t})=y_0$ iff $t^* \neq t_j \forall j\in [d]$, for some fixed value $y_0$. 

For concrete  DHKHE and PE constructions, we employ the LWE-based FKHE proposed in \cite{BGG+14}. 
 In this system, the ciphertext is $ct= (\textbf{c}_{\textsf{in}}, \textbf{c}_1, \cdots, \textbf{c}_d,  \textbf{c}_{\textsf{out}})$, 
where  $\textbf{c}_{i}=(t_i\textbf{G}+\textbf{B}_i)^T \textbf{s}+\textbf{e}_{i}$ for $i\in [d]$.
 Here the gadget matrix $\textbf{G}$ is a special one,
whose associated trapdoor $\textbf{T}_\textbf{G}$ 
(i.e., a short basis  for the $q$-ary lattice $\Lambda_q^{\bot}(\textbf{G})$)  
is publicly known (see \cite{MP12} for details).  
Also, there exist three evaluation algorithms named 
$\textsf{Eval}_{\textsf{pk}}$, $\textsf{Eval}_{\textsf{ct}}$ and $\textsf{Eval}_{\textsf{sim}}$ \cite{BGG+14},
which help us to homomorphically evaluate a circuit (function) for a ciphertext $ct$.  
More specifically, from $\textbf{c}_i:=[t_i\textbf{G}+\textbf{B}_i]^T\textbf{s}+\textbf{e}_i, \text{ where } \| \textbf{e}_i \|<\delta $ 
for all  $i\in [d]$, and a function $f:(\mathbb{Z}_q)^d \rightarrow \mathbb{Z}_q$, 
we get $\textbf{c}_f=[f(t_1, \cdots, t_d)\textbf{G}+\textbf{B}_f]^T\textbf{s}+\textbf{e}_f, \| \textbf{e}_f \|<\Delta $,
where
$ \mathbf{B}_f \leftarrow\mathsf{Eval}_\mathsf{pk}(f, (\mathbf{B}_i )_{i=1}^d) $, 
$  \mathbf{c}_f \leftarrow \mathsf{Eval}_\mathsf{ct}(f, ((t_i, \mathbf{B}_i,\mathbf{c}_i))_{i=1}^d)$,  
and $\Delta < \delta \cdot \beta$ for some $\beta$ sufficiently small. 
The algorithm $\textsf{ExtEval}$ mentioned above can be implemented calling many times $\mathsf{Eval}_\mathsf{pk}$,$\textsf{Eval}_{\textsf{ct}}$, each time for each function. Meanwhile,
$\mathsf{Eval}_\mathsf{sim}$ is only useful in the simulation for the security proof.  In the LWE-based DFKHE construction, secret keys are trapdoors for $q$-ary lattices of form $\Lambda_q^{\bot}([\textbf{A}|\textbf{B}_{f_1}|\cdots| \textbf{B}_{f_k}])$.  For the key delegation $\mathsf{KDel}$, we can utilize the trapdoor techniques \cite{GPV08, ABB10, CHKP10} .
% to compute  the trapdoor for $\Lambda_q^{\bot}([\textbf{A}|\textbf{B}_{f}])$), given the trapdoor of either $\Lambda_q^{\bot}(\textbf{A})$ or $\Lambda_q^{\bot}(\textbf{B}_{f})$. Note that the initial secret key is the trapdoor $\textbf{T}_\textbf{A}$, a short basis  for the $q$-ary lattice $\Lambda_q^{\bot}(\textbf{A})$). This is the principle of the algorithm $\mathsf{KDel}$ as we expected. 	
For the LWE-based PE instantiation, we employ the equality test function with $y_0:=0 \text{ (mod } q)$.  Namely, for a puncture $t^*$ and a list of ciphertext tags $t_1, \cdots, t_d$ we define $f_{t^*}(t_1, \cdots, t_d):=eq_{t^*}(t_1)+\cdots+ eq_{t^*}(t_d)$, where $eq_{t^*}: \mathbb{Z}_q\rightarrow \mathbb{Z}_q $ satisfying that $\forall t\in \mathbb{Z}_q$, $eq_{t^*}(t)=1 \text{ (mod } q)$ iff $t=t^*$,
otherwise $eq_{t^*}(t)=0 \text{ (mod } q)$. Such functions has also been employed in \cite{BKM17} to construct a privately puncturable pseudorandom function. It follows from generic construction that our PE instantiation is selective CPA-secure.\\

\noindent \noindent \textbf{Efficiency.} Table \ref{tab21} summarizes the asymptotic bit-size of public key, secret key, 
punctured key and ciphertext. 
We can see that the public key size is a linear function in the number of ciphertext tags (i.e., $d$). 
The (initial) secret key size is independent of both $d$ and  $\eta$ (the number of punctures).  
The punctured key (decryption key) size  is a quadratic function of $\eta$. 
Lastly, the ciphertext size is a linear function of $d$.\\
	
\noindent \noindent{\textbf{On  unbounded ciphertext tags.}} 
We believe that our framework can be extended to support unbounded number of ciphertext tags 
by exploiting the interesting technique of \cite{BV16}. 
The key idea of \cite{BV16} is to use homomorphic evaluation of a family pseudorandom functions.
This helps to stretch a predetermined parameter (e.g., the length of a seed) to an arbitrary number of ciphertext tags. 
The predetermined parameter will be used to generate other public parameters (e.g., public matrices).  
More details is given in Section \ref{unbounded}.\\

\noindent \noindent{\textbf{Paper Organization.}}  In Section \ref{pre}, we review some background related to this work. Our main contributions are presented in Section  \ref{generic} and Section \ref{instan}. We formally define DFKHE and the generic PE construction from DFKHE in Section \ref{generic}. Section \ref{instan} is dedicated to the LWE-based instantiation of DFKHE and the induced PE. Section \ref{unbounded} discusses on the feasibility of transforming our proposed LWE-based PE to work well with  unbounded ciphertext tags. This work is concluded in Section \ref{conclude}.

\begin{table}[pt]
	\centering
	\medskip
	\smallskip
	%	\raisebox{\dimexpr 0.6\baselineskip-\height}% align tops
	\small\addtolength{\tabcolsep}{0pt}
	\begin{tabular}{ c|  c}
		\hline
		Public key size& $O((d+1)\cdot n^2 \log^2 q)$\\
		Secret key size&  $O(n^2 \log^2 q\cdot \log( n\log q))$  \\
		Punctured key size& $(\eta+1) \cdot n \log q \cdot( O(\log(\beta_{\mathcal{F}})+\eta\cdot \log (n \log q)))$\\
		Ciphertext size& $O((d+2)\cdot n\log^2q))$\\
		\hline
	\end{tabular} 
	
	\caption{  Keys and ciphertext's size of our LWE-based PE as functions in number of ciphertext tags $d$ and number of punctures $\eta$. }
	\label{tab21}
\end{table}

\section{Preliminaries} \label{pre}
\subsection{Framework of Puncturable Encryption } \label{peer}

\noindent \textbf{Syntax of puncturable encryption.} 
For a security parameter $\lambda$, let $d=d(\lambda)$, $\mathcal{M}=\mathcal{M}(\lambda)$ and $\mathcal{T}=\mathcal{T}(\lambda)$ be maximum number of tags per ciphertext,  the space of plaintexts and the set of valid tags, respectively.  
Puncturable encryption (PE)  is a collection of the following four algorithms 
\textsf{KeyGen}, \textsf{Encrypt}, \textsf{Puncture} and \textsf{Decrypt}:
\begin{itemize}
	\item \underline{$(pk, sk_0) \leftarrow \textsf{KeyGen}(1^\lambda, d)$: }
	For a security parameter $\lambda$ and
	the maximum number $d$ of tags per ciphertext, 
	the probabilistic polynomial time (PPT) algorithm \textsf{KeyGen} outputs a public key $pk$ and an initial secret key $sk_0$. 
	
	\item  \underline{ $ct \leftarrow \textsf{Encrypt}(pk,\mu, \{t_1, \cdots, t_d\})$: }
	For a public key $pk$, a message $\mu$, and a list of tags $t_1, \cdots, t_d$, 
	the PPT algorithm \textsf{Encrypt} returns a ciphertext $ct$.  
	
	\item  \underline{$sk_{i} \leftarrow \textsf{Puncture}(pk, sk_{i-1}, t^*_{i})$: }
	For any $i>1$, 
	on input $pk$, $sk_{i-1}$ and a tag $t^*_i$, the PPT algorithm \textsf{Puncture} outputs a punctured key $sk_{i}$ 
	that decrypts any ciphertexts, except for the ciphertext encrypted under any list of tags containing $t^*_i$.
	\item  \underline{ $\mu/\bot \leftarrow \textsf{Decrypt}(pk, sk_{i},(ct, \{t_1, \cdots, t_d\}))$: }
	For input $pk$, a ciphertext $ct$, a secret key $sk_{i}$, and a list of tags $\{t_1, \cdots, t_d\}$, 
	the deterministic polynomial time (DPT) algorithm \textsf{Decrypt} outputs either a message $\mu$ if the decryption succeeds or $\bot$ if it fails.
\end{itemize}

\noindent \textbf{Correctness.} The correctness requirement for PE is as follows: \\
For all  $\lambda, d, \eta\geq 0$, $t^*_1, \cdots, t^*_\eta ,t_1, \cdots, t_d \in \mathcal{T}$,   $(pk, sk_0) \leftarrow \textsf{KeyGen}(1^\lambda, d)$, $sk_i\gets\textsf{Punc}(pk, sk_{i-1},t^*_i),$ $ \forall i \in  [\eta]$, $ct=\textsf{Encrypt}(pk,\mu, \{t_1, \cdots, t_d\}),$ we have 
\begin{itemize}
\item If $\{t^*_1, \cdots, t^*_\eta\}\cap \{t_1, \cdots, t_d\}=\emptyset $, then  $\forall i\in \{0,\cdots, \eta\}$,
\[
\Pr[\textsf{Decrypt}(pk, sk_i, (ct, \{t_1, \cdots, t_d\}))=\mu]\geq 1-\textsf{negl}(\lambda).
\] 
\item If there exist $j\in[d]$ and $k\in [\eta]$ such that $t^*_k=  t_j$,  then $\forall i \in \{k,\cdots, \eta\}$,
\[
\Pr[\textsf{Decrypt}(pk, sk_i, (ct, \{t_1, \cdots, t_d\}))=\mu]\leq \textsf{negl}(\lambda).
\] 
\end{itemize}

\begin{definition}[Selective Security of PE]
	PE is IND-sPUN-ATK if the advantage of any PPT adversary $\mathcal{A}$ in 
	the game $\mathsf{IND}$-$\mathsf{sPUN}$-$\mathsf{ATK}^{\mathsf{sel},\mathcal{A}}_{\mathsf{PE}}$ is negligible,
	where ATK $\in $ \{CPA, CCA\}. 
	Formally,
	$$\mathsf{Adv}_{\mathsf{PE}}^{\mathsf{IND}\text{-}\mathsf{sPUN}\text{-}\mathsf{ATK}}(\mathcal{A})=|\Pr[b'=b]-\frac{1}{2}| \leq \mathsf{negl}(\lambda).$$
\end{definition}
%% Huy, perhaps you can move the definition after the game (just a suggestion)
The game $\mathsf{IND}$-$\mathsf{sPUN}$-$\mathsf{ATK}^{\mathsf{sel},\mathcal{A}}_{\mathsf{PE}}$ proceeds as follows.

\begin{enumerate}
	\item \textbf{Initialize.} 
	The adversary announces the target tags $\{\widehat{t_1}, \cdots, \widehat{t_d}\}$. 
	\item \textbf{Setup.} 
	The challenger initializes a set punctured tags $\mathcal{T}^* \leftarrow \emptyset$, 
	a counter  $i\leftarrow 0$ that counts the current number of punctured tags in $\mathcal{T}^*$ 
	and a set of corrupted tags $\mathcal{C}^* \leftarrow \emptyset$ containing all punctured tags at the time of the first corruption query. 
	Then, it runs $(pk, sk_0) \leftarrow \textsf{KeyGen}(1^\lambda, d)$. Finally, it gives $pk$ to the adversary.
	\item \textbf{Query 1.}
	\begin{itemize}
		\item Once the adversary makes a puncture key query PQ($t^*$), 
		the challenger updates $i \leftarrow i+1$, 
		returns $sk_{i} \leftarrow \textsf{Punc}(pk, sk_{i-1},t^*)$ and adds $t^*$ to $\mathcal{T}^*$.
		\item The first time the adversary makes a corruption query CQ(), 
		the challenger returns $\bot$ if it finds out that $\{\widehat{t_1}, \cdots, \widehat{t_d}\} \cap \mathcal{T}^*= \emptyset$. 		
		Otherwise, the challenger returns the most recent punctured key 
		$sk_\eta$, then sets $\mathcal{C}^*$ as  the most recent 
		$\mathcal{T}^*$ (i.e., $\mathcal{C}^* \leftarrow \mathcal{T}^*=\{t_1^*, \cdots, t^*_\eta\}$). 
		All subsequent puncture key queries and corruption queries are answered with $\bot$. 
		\item If $ATK=CCA$: Once the adversary makes a decryption query  DQ$(ct,\{t_1, $ $\cdots, t_d\})$, 
		the challenger runs $\textsf{Decrypt}(pk, sk_{\eta},  (ct,\{t_1, \cdots, t_d\}))$ 
		using the most recent punctured key $sk_\eta$ and returns its output. \\
		If $ATK=CPA$: the challenger returns $\bot$.
	\end{itemize}
	\item \textbf{Challenge.} 
	The adversary submits two messages $\mu_0, \mu_1$. 
	The challenger rejects the challenge if it finds out
	that $\{\widehat{t_1}, \cdots, \widehat{t_d}\} \cap \mathcal{C}^*= \emptyset$\footnote{Note that, after making some queries that are different from the target tags, the adversary may skip making corruption query but goes directly to the challenge phase and trivially wins the game. This rejection prevents the adversary from such a trivial win. It also force the adversary to make the corruption query before challenging.}. 
	Otherwise, the challenger chooses $b \xleftarrow{\$} \{0,1\}$ and 
	returns $\widehat{ct} \leftarrow \textsf{Encrypt}(pk,\mu_b, \{\widehat{t_1}, \cdots, \widehat{t_d}\})$.  
	\item \textbf{Query 2.} The same as Query 1 with the  restriction that
		%\item For CQ(), the challenger returns $\bot$ if it finds out  that $\{\widehat{t_1}, \cdots, \widehat{t_d}\} \cap \mathcal{T}^*= \emptyset$.
		for DQ$(ct,\{t_1, \cdots, t_d\})$, the challenger returns $\bot$ if $(ct,\{t_1, \cdots, t_d \})$ 
		$=(\widehat{ct}, \{\widehat{t_1}, \cdots, \widehat{t_d}\} )$.
	
	\item \textbf{Guess.} The adversary outputs $b'\in \{0,1\}$. It wins if $b'=b$. 
\end{enumerate}
 The full security for PE is defined in the same way, 
except that the adversary can choose target tags at Challenge phase,  
after getting the public key and after Query 1 phase. 
In this case, the challenger does not need to check the condition $\{\widehat{t_1}, \cdots, \widehat{t_d}\} \cap \mathcal{T}^*= \emptyset$
in the first corruption query CQ() of the adversary in Query 1 phase. 

\subsection{Background on Lattices} \label{background}
In this work, all vectors are written as columns. The transpose of a vector $\textbf{b}$ (resp., a matrix $\textbf{A}$) is denoted as $\textbf{b}^T$ (resp., $\textbf{A}^T$). The Gram-Schmidt (GS) orthogonaliation of  $\textbf{S}:=[\s_1,\cdots, \s_k]$ is denoted by $\widetilde{\textbf{S}}:=[\widetilde{\s}_1,\cdots,\widetilde{\s}_k ]$  in the same order.
 
\noindent \textbf{Lattices.} A lattice is a set  $\calL=\calL(\textbf{B}):=\left\{\sum_{i=1}^m\b_ix_i : x_i\in\ZZ~\forall i\in[m ] \right\}\subseteq\ZZ^m$ 
generated by a basis $\textbf{B}=[\textbf{b}_1|\cdots |\textbf{b}_m]\in \ZZ^{n\times m}.$ 
We are interested in the following  lattices:\\
$	\Lambda^{\bot}_q(\textbf{A}) :=\left\{ \e\in\ZZ^m \text{ s.t. } \textbf{A}\e=0 ~(\text{mod } q) \right\}, $
	$\L_q^{\textbf{u}}(\textbf{A}) :=  \left\{ \e\in\ZZ^m~\rm{s.t.}~ \textbf{A}\e=\textbf{u}~(\text{mod } q)  \right\},$\\
	$\L_q^{\textbf{U}}(\textbf{A}) :=  \left\{ \mathbf{R}\in\ZZ^{m\times k}~\rm{s.t.}~ \textbf{A}\mathbf{R}=\textbf{U} (\text{mod } q)  \right\},$
where $\textbf{A}\xleftarrow{\$}\ZZ^{n\times m}$, $\textbf{u}\in\ZZ_q^n$ and $\textbf{U}\in\ZZ_q^{n \times k}$.  

For a vector $\textbf{s}=(s_1,\cdots, s_n)$, $\|\textbf{s}\|:=\sqrt{s_1^2+\cdots+s_n^2}$, $\|\textbf{s}\|_\infty:=\max_{i \in [n]}|s_i|$.  For a matrix $\textbf{S}=[\s_1\cdots\s_k]$ and any vector $\textbf{x}=(x_1,\cdots, x_k)$,   we define $\|\textbf{S}\|:=\max_{i\in [k]}\|\s_i\|$, the GS norm of $\textbf{S}$ is $\|\widetilde{\textbf{S}}\|$, the sup norm is $\| \mathbf{S}\|_{sup}=\sup_{\textbf{x}}\frac{\| \textbf{S} \textbf{x}\|}{\| \textbf{x}\|}$. This yields for all $\textbf{x}$ that $\| \textbf{S} \textbf{x}\| \leq \| \textbf{S}\|_{sup} \cdot \|\textbf{x}\|$.
We call a basis $\textbf{S}$  of some lattice \textit{short} if $\|\widetilde{\textbf{S}}\|$ is short. \\

\noindent \textbf{Gaussian Distributions.}  
Assume $m\geq 1$, $\mathbf{v}\in \mathbb{R}^m$, $\sigma>0$, and  $\mathbf{x}\in \mathbb{R}^m$.
We define  the function $\rho_{\sigma,\mathbf{v}}(\mathbf{x})= \exp({{-\pi \Vert \mathbf{x}-\mathbf{v}\Vert^2 }/{ \sigma^2}})$. 
\begin{definition}[Discrete Gaussians]\label{def12}
	Suppose that $\mathcal{L}\subseteq\ZZ^m$ is  a lattice, and  $\mathbf{v}\in\RR^m$ and $\sigma>0$. 
	The discrete Gaussian distribution over $\mathcal{L}$ with center $\mathbf{v}$ and parameter $\sigma$ is defined by
	$\cal{D}_{\mathcal{L},\sigma,\mathbf{v}}(\mathbf{x})=\frac{\rho_{\sigma,\mathbf{v}}(\mathbf{x})}{\rho_{\sigma,\mathbf{v}}(\mathcal{L})}$ 
	for $\mathbf{x}\in\mathcal{L},$ where  $ 	\rho_{\sigma,\mathbf{v}}(\mathcal{L}):=\sum_{\x\in\mathcal{L}}\rho_{\sigma,\mathbf{v}}(\x).$
\end{definition}

\begin{lemma}[{\cite[Lemma 4.4]{MR04}}]\label{thm:Gauss}
	Let $q> 2$ and let $\mathbf{A},\mathbf{B}$ be a matrix in $\ZZ_q^{n\times m}$ with $m>n$. Let $\mathbf{T}_\mathbf{A}$  
	be a basis for  $\Lp_q(\mathbf{A})$. Then, for $\sigma\geq\|\widetilde{\mathbf{T}_\mathbf{A}}\|\cdot  \omega(\sqrt{\log n})$, 
	$\Pr[\x\gets \mathcal{D}_{\Lambda_q^{\bot}(\mathbf{A}),\sigma}:~\|\x\|>\sigma\sqrt{m}]\leq\mathsf{negl}(n).$
\end{lemma}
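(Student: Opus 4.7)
The plan is to reduce the claim to the classical Banaszczyk tail bound for discrete Gaussians, once one has certified that $\sigma$ lies above the smoothing parameter of the lattice $\Lambda_q^{\perp}(\mathbf{A})$. (Note that the matrix $\mathbf{B}$ mentioned in the statement appears to be a typographical remnant and plays no role; only $\mathbf{A}$ and its trapdoor basis $\mathbf{T}_\mathbf{A}$ are used.) The two ingredients to combine are: (i) an upper bound on the smoothing parameter $\eta_\epsilon(\Lambda)$ in terms of the Gram--Schmidt norm of any basis, and (ii) the fact that a discrete Gaussian above the smoothing parameter is concentrated inside a ball of radius $\sigma\sqrt{m}$.

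First I would invoke the Micciancio--Regev bound, which states that for any full-rank lattice $\Lambda \subseteq \mathbb{Z}^m$ of rank $m$ with basis $\mathbf{T}$, and any $\epsilon > 0$,
\[
\eta_{\epsilon}(\Lambda) \;\leq\; \|\widetilde{\mathbf{T}}\| \cdot \sqrt{\ln(2m(1+1/\epsilon))/\pi}.
\]
Applied to $\Lambda = \Lambda_q^{\perp}(\mathbf{A})$ with the basis $\mathbf{T}_\mathbf{A}$, and choosing $\epsilon = 2^{-n}$ (which is negligible in $n$), one obtains $\eta_{\epsilon}(\Lambda_q^{\perp}(\mathbf{A})) \leq \|\widetilde{\mathbf{T}_\mathbf{A}}\| \cdot O(\sqrt{n + \log m})$. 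Since $m$ is polynomial in $n$ and the hypothesis on $\sigma$ carries an $\omega(\sqrt{\log n})$ factor, the choice $\sigma \geq \|\widetilde{\mathbf{T}_\mathbf{A}}\| \cdot \omega(\sqrt{\log n})$ is more than sufficient to guarantee $\sigma \geq \eta_{\epsilon}(\Lambda_q^{\perp}(\mathbf{A}))$ for some negligible $\epsilon = \mathsf{negl}(n)$.

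Second, I would apply Banaszczyk's tail inequality: for any lattice $\Lambda$ of rank $m$ and any $\sigma \geq \eta_{\epsilon}(\Lambda)$,
\[
\Pr_{\mathbf{x} \leftarrow \mathcal{D}_{\Lambda,\sigma}}\!\bigl[\,\|\mathbf{x}\| > \sigma\sqrt{m}\,\bigr] \;\leq\; \frac{1+\epsilon}{1-\epsilon}\cdot 2^{-m}.
\]
With $\epsilon = \mathsf{negl}(n)$ and $m > n$, the right-hand side is bounded by $2^{-m+O(1)} = \mathsf{negl}(n)$, which is exactly the stated conclusion. Putting the two ingredients together yields the lemma.

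The only real delicacy is the calibration of $\epsilon$: one must pick $\epsilon$ small enough that the Banaszczyk bound is negligible, yet compatible with the smoothing bound so that $\sigma$ remains above $\eta_{\epsilon}$. The $\omega(\sqrt{\log n})$ slack built into the hypothesis is precisely what makes this calibration painless, so no substantive obstacle arises. The rest is direct substitution; in a formal write-up I would simply cite Banaszczyk's tail inequality and the Micciancio--Regev smoothing bound rather than reprove them.
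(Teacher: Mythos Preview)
The paper does not prove this lemma at all: it is stated as a citation of \cite[Lemma~4.4]{MR04} and used as a black box. Your sketch is a correct reconstruction of the standard argument behind that cited result --- bound $\eta_\epsilon(\Lambda_q^\perp(\mathbf{A}))$ via the Gram--Schmidt norm of a basis, then apply the Banaszczyk tail inequality --- so there is nothing to compare against and nothing to fix.
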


\iffalse
\begin{lemma}[{\cite[Lemma 4.3]{Lyu12}}] \label{lem2}  For any $\mathbf{v} \in \mathbb{R}^m$, and any $\sigma, r>0$,  $$\Pr[|\mathbf{z}^T\mathbf{v}| >r, \mathbf{z }\leftarrow \mathcal{D}_{\mathbb{Z}^m,\sigma}] \leq 2 e^{\frac{-r^2}{2\| \mathbf{v}\|^2\sigma^2}}.$$
\end{lemma}
\begin{lemma}[{ \cite[Lemma 4.4]{Lyu12}}] \label{lem2}  For any $k>0$, $$\Pr[|z| >k\sigma, z \leftarrow \mathcal{D}_{\mathbb{Z},\sigma}] \leq 2 e^{\frac{-k^2}{2}}.$$
\end{lemma}
\fi

\subsubsection{Learning with Errors.} 
The security for our construction relies on the decision variant of the learning with errors  (DLWE) problem defined below.
\begin{definition}[DLWE, {\cite{Reg05}}] \label{lwe}
	Suppose that $n$ be a positive integer, $q$ is prime, and $\chi$ is a distribution  over $\ZZ_q$. 
	\iffalse The $(n, m, q, \chi)$-$\mathsf{DLWE}$ problem requires to distinguish  $\cal{O}_\s$ and $\cal{O}_\$ $, where 
	\begin{enumerate}
		\item \underline{$\cal{O}_\s$}: $m$ pairs of the form $(\mathbf{a}_i, c_i):=(\mathbf{a}_i,\mathbf{a}_i^T\s+e_i)\in\ZZ_q^n\times\ZZ_q$, where $\s \xleftarrow{\$} \ZZ_q^n$  fixed, $\u_i \xleftarrow{\$}\ZZ_q^n$ and $e_i \gets \chi$ for $i\in [m]$.
		\item \underline{$\cal{O}_\$ $}: $m$ uniform pairs $(\mathbf{a}_i,c_i) \xleftarrow{\$} \ZZ_q^n\times\ZZ_q$.
	\end{enumerate}
	In other words, 
	\fi 
	The $(n, m, q, \chi)$-$\mathsf{DLWE}$ problem requires to distinguish 
	$(\mathbf{A}, \mathbf{A}^T\mathbf{s}+\mathbf{e})$ from $(\mathbf{A}, \mathbf{c}),$
	where  $\mathbf{A} \xleftarrow{\$}\ZZ_q^{n \times m} , \s \xleftarrow{\$} \ZZ_q^n, \mathbf{e}\gets \chi^m, \mathbf{c}\xleftarrow{\$} \ZZ_q^m.$ 
\end{definition}
%It is the well-known fact that breaking the intractability of the $\textsf{GapSVP}_{\gamma}$ requires at least $2^{\tilde{\Omega}(n/\log \gamma)}$ \cite{Sch87}.\\
Let $\chi$ be a $\chi_0$-bounded noise distribution, i.e.,  its support belongs to $[-\chi_0,\chi_0]$. 
The hardness of DLWE is measured by $q/\chi_0$, which is always greater than 1 as $\chi_0$ is chosen such that $\chi_0<q$. 
Specifically, the smaller $q/\chi_0$ is, the harder DLWE is. (See \cite[Subsection 2.2]{BGG+14} and \cite[Section 3]{BV16}  for further discussions.) 

\begin{lemma}[{\cite[Corollary 3.2]{BV16}}] \label{dlwehard}
For all $\epsilon>0$, there exist functions $q=q(n)\leq 2^n$, $m=\Theta(n\log q)=\mathsf{poly}(n)$, $\chi=\chi(n)$ such that $\chi$ is a $\chi_0$-bounded for some $\chi_0=\chi_0(n)$, $q/\chi_0 \geq 2^{n^\epsilon}$ and such that $DLWE_{n, m, q, \chi}$ is at least as hard as the classical hardness of GapSVP$_{\gamma}$ and the quantum hardness of SIVP$_\gamma$ for $\gamma=2^{\Omega(n^\epsilon)}$.
\end{lemma}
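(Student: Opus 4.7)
The plan is to instantiate the standard worst-case to average-case reductions for DLWE at a super-polynomial modulus-to-noise ratio. First, I would fix the target gap factor by setting $\gamma := 2^{n^\epsilon}$, then pick a prime modulus $q$ in the range roughly $2^{n^\epsilon}\cdot\omega(\sqrt{\log n}) \leq q \leq 2^n$ and set $m = \Theta(n\log q)$; this is consistent with the statement's requirements and leaves slack for the truncation step below.

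Next, I would define the noise distribution $\chi$ as a discrete Gaussian $\mathcal{D}_{\mathbb{Z},\sigma}$ of parameter $\sigma := \alpha q$ with $\alpha := 1/\gamma$, then truncate its support to $[-\chi_0,\chi_0]$ where $\chi_0 := \sigma\cdot\omega(\sqrt{\log n})$. By the standard discrete-Gaussian tail bound (in the spirit of Lemma~\ref{thm:Gauss}), truncation removes only a negligible fraction of the mass, so DLWE under the truncated $\chi$ is as hard as under the untruncated Gaussian up to negligible statistical loss in any distinguishing game. By construction $\chi$ is $\chi_0$-bounded and $q/\chi_0 = \gamma/\omega(\sqrt{\log n}) = 2^{\Omega(n^\epsilon)}$, as required.

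To conclude, I would invoke the two classical worst-case hardness results for DLWE. Regev's quantum reduction \cite{Reg05} shows that DLWE$_{n,m,q,\chi}$ with Gaussian noise parameter $\alpha q \geq 2\sqrt{n}$ is at least as hard as quantumly approximating SIVP on arbitrary $n$-dimensional lattices to within $\tilde{O}(n/\alpha) = 2^{\Omega(n^\epsilon)}$. For classical hardness of GapSVP, I would appeal to the Peikert-style dequantization together with its refinements via modulus switching, which yield an analogous classical reduction for the same approximation factor in the super-polynomial noise-ratio regime; our choice of $q$ and $\alpha$ comfortably lies in this regime.

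The main obstacle is the joint parameter calibration: one must simultaneously (i) keep $\alpha q \geq 2\sqrt{n}$ so that Regev's quantum reduction applies, (ii) ensure the truncation is statistically close enough that the loss is negligible, and (iii) achieve $q/\chi_0 \geq 2^{n^\epsilon}$ while respecting $q \leq 2^n$. Taking $\gamma = 2^{n^\epsilon}$ and picking $q$ only polylogarithmically larger than $\gamma$ satisfies all three constraints; the remaining details are routine bookkeeping and appear in \cite[Corollary 3.2]{BV16}, which is the result being cited verbatim here.
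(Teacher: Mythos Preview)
The paper does not prove this lemma at all; it is quoted verbatim from \cite[Corollary~3.2]{BV16} and used as a black box, so there is no ``paper's own proof'' to compare against. Your sketch is a correct outline of the standard argument behind that corollary (truncated discrete Gaussian noise, Regev's quantum reduction for SIVP, and the Peikert-style classical reduction for GapSVP in the super-polynomial modulus-to-noise regime), and you yourself note that the bookkeeping is exactly what \cite{BV16} carries out. One small calibration point: with $\gamma=2^{n^\epsilon}$ and $\chi_0=\alpha q\cdot\omega(\sqrt{\log n})$ you obtain $q/\chi_0 = 2^{n^\epsilon}/\omega(\sqrt{\log n})$, which is $2^{\Omega(n^\epsilon)}$ but not literally $\geq 2^{n^\epsilon}$; to match the stated bound exactly you would start from a slightly larger exponent (e.g.\ work with some $\epsilon'>\epsilon$) and absorb the polylogarithmic slack, which is the routine adjustment you allude to.
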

The GapSVP$_{\gamma}$ problem is the one, given a basis for a lattice and a positive number $d$, requires to distinguish between two cases; (i)the lattice has a vector shorter than $d$, and (ii) all lattice vector have length bigger than $\gamma\cdot d$. And SIVP$_\gamma$ is the problem that, given a basis for a lattice of rannk $n$, requires to find a set of $n$ ``short" and independent lattice vectors.\\

\noindent \textbf{Leftover Hash Lemma.} 
The following variant of the so-called leftover hash  lemma will be used in this work to support our arguments.
\begin{lemma}[{\cite[Lemma 13]{ABB10}}] \label{lhl}
	Let $m,n,q$ be such that $m>(n+1)\log_2 q+ \omega(\log n)$ and that $q> 2$ is prime. 
	Let $\mathbf{ A}$ and $\mathbf{B }$ are uniformly chosen from   
	$\mathbb{Z}_q^{n \times m}$ and  $\mathbb{Z}_q^{n \times k}$, respectively. 
	Then for any uniformly chosen matrix $\mathbf{S}$ from $\{-1, 1\}^{m \times k} \text{ (mod } q)$
	and for all vectors $\mathbf{e}\in \mathbb{Z}_q^{m}$, 
	$$(\mathbf{A}, \mathbf{A}\mathbf{S}, \mathbf{S}^T\mathbf{e}) \stackrel{\text{s}}{\approx} (\mathbf{A}, \mathbf{B}, \mathbf{S}^T\mathbf{e}).$$
\end{lemma}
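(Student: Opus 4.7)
The plan is to reduce this to the (conditional) leftover hash lemma, treating $\mathbf{S}$ as a high-entropy source, the family $\{h_\mathbf{A}: \mathbf{s} \mapsto \mathbf{A}\mathbf{s}\}_{\mathbf{A} \in \mathbb{Z}_q^{n\times m}}$ as the extractor, and $\mathbf{S}^T\mathbf{e}$ as side information. First I would pass to a single-column statement: since the columns $\mathbf{s}_1,\ldots,\mathbf{s}_k$ of $\mathbf{S}$ are independent uniform elements of $\{-1,1\}^m$, and $\mathbf{B}$ has independent uniform columns in $\mathbb{Z}_q^n$, a standard hybrid over the $k$ columns reduces the claim to showing, for a single $\mathbf{s}\in\{-1,1\}^m$, that $(\mathbf{A},\mathbf{A}\mathbf{s},\mathbf{e}^T\mathbf{s}) \stackrel{\text{s}}{\approx} (\mathbf{A},\mathbf{b},\mathbf{e}^T\mathbf{s})$ with $\mathbf{b}$ uniform in $\mathbb{Z}_q^n$. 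The resulting factor of $k$ in the final statistical distance is absorbed into the $\omega(\log n)$ slack in the hypothesis on $m$.

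Next I would verify that $\{h_\mathbf{A}\}$ is universal on $\{-1,1\}^m$. For distinct $\mathbf{s}\neq \mathbf{s}' \in \{-1,1\}^m$, the difference $\mathbf{s}-\mathbf{s}' \in \{-2,0,2\}^m$ is nonzero and, because $q$ is an odd prime, every $\pm 2$ coordinate is a unit modulo $q$; hence $\mathbf{s}-\mathbf{s}'$ is a nonzero vector of $\mathbb{Z}_q^m$. For uniform $\mathbf{A}$ this yields $\mathrm{Pr}_\mathbf{A}[\mathbf{A}(\mathbf{s}-\mathbf{s}') = \mathbf{0}] = q^{-n}$, which matches the reciprocal of the output-space size $|\mathbb{Z}_q^n|$ and establishes universality.

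Finally I would invoke the generalized leftover hash lemma with average conditional min-entropy (Dodis--Ostrovsky--Reyzin--Smith): for a universal family $\{h_K\}$ into $\mathcal{Y}$ and a source $X$ with average conditional min-entropy at least $\log_2|\mathcal{Y}| + 2\log_2(1/\epsilon)$ given side information $Z$, the triple $(K,h_K(X),Z)$ is $\epsilon$-close to $(K,U_\mathcal{Y},Z)$. With $X = \mathbf{s}$ (min-entropy $m$) and $Z = \mathbf{e}^T\mathbf{s}$ (taking at most $q$ values), a standard bound gives average conditional min-entropy at least $m-\log_2 q$, and the hypothesis $m > (n+1)\log_2 q + \omega(\log n)$ then provides exactly the required slack for a negligible $\epsilon$ with $k\epsilon$ still negligible, completing the proof. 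The main obstacle is precisely the handling of the side information: one cannot simply stack $\mathbf{e}^T$ on top of $\mathbf{A}$ and apply the unconditional LHL to an $(n+1)\times m$ hash matrix, since $\mathbf{e}$ is adversarially fixed rather than uniformly random; the conditional-entropy formulation charges at most $\log_2 q$ bits for the $\mathbb{Z}_q$-valued leak, which accounts exactly for the difference between the hypothesis and the ``plain'' LHL requirement $m > n\log_2 q + \omega(\log n)$.
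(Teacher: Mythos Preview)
The paper does not give its own proof of this lemma: it is stated in the preliminaries with the attribution \cite[Lemma 13]{ABB10} and used as a black box (notably in the transition from Game~0 to Game~1 in the proof of Theorem~\ref{varcpa}). So there is no in-paper argument to compare your proposal against.

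For what it is worth, your sketch is the standard route and matches the proof in \cite{ABB10}: reduce to a single column by a hybrid, check universality of $\mathbf{s}\mapsto\mathbf{A}\mathbf{s}$ over $\mathbb{Z}_q$ (using that $q$ is an odd prime so $\pm 2$ are units), and apply the Dodis--Ostrovsky--Reyzin--Smith generalized LHL with the side information $\mathbf{e}^T\mathbf{s}$ charged as at most $\log_2 q$ bits of entropy loss. Your remark that one cannot simply fold $\mathbf{e}^T$ into the hash matrix because $\mathbf{e}$ is fixed rather than uniform is exactly the point, and it correctly explains the $(n{+}1)\log_2 q$ rather than $n\log_2 q$ in the hypothesis.
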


We conclude this section with some standard results regarding trapdoor 
mechanism often used in lattice-based cryptography. \\

\noindent \textbf{Lattice Trapdoor Mechanism.}	In our context, 
a (lattice) trapdoor is a short basis $\textbf{T}_\textbf{A}$ for the  $q$-ary  lattice $\Lambda^{\bot}_q(\textbf{A})$,
i.e., $\textbf{A}\cdot\textbf{T}_\textbf{A}=0 \text{ (mod } q)$ (see \cite{GPV08}). 
We call  $\textbf{T}_\textbf{A}$ the associated trapdoor for $\Lambda^{\bot}_q(\textbf{A})$ or even for $\textbf{A}$. 
\begin{lemma}\label{trapdoor}
	Let $n, m, q>0$ and $q$ be prime. 
	\begin{enumerate}
		\item $(\mathbf{A},\mathbf{T}_\mathbf{A}) \leftarrow \mathsf{TrapGen}(n,m,q)$ (\cite{AP09}, \cite{MP12}): 
		This is a PPT algorithm that  outputs a pair 
		$(\mathbf{A},\mathbf{T}_\mathbf{A}) \in \mathbb{Z}_q^{n \times m}\times \mathbb{Z}_q^{m \times m}$,
		where $\mathbf{T}_\mathbf{A}$ is a trapdoor for $\Lambda^{\bot}_q(\mathbf{A})$
		such that $\mathbf{A}$ is negligibly close to uniform and $\| \widetilde{\mathbf{T}_\mathbf{A}} \|=O(\sqrt{n \log q})$.
		The algorithm works if $m=\Theta(n \log q)$.  
		\item $\mathbf{T}_\mathbf{D}\leftarrow \mathsf{ExtBasisRight}(\mathbf{D}:=[\mathbf{A}|\mathbf{A}\mathbf{S}+\mathbf{B}], \mathbf{T}_\mathbf{B})$ 
		( \cite{ABB10}): This is a DPT algorithm that, 
		for the input $(\mathbf{D}, \mathbf{T}_\mathbf{B})$,
		outputs a trapdoor 
		$\mathbf{T}_\mathbf{D}$ for $\Lambda^{\bot}_q(\mathbf{D})$ 
		such that $\| \widetilde{\mathbf{T}_\mathbf{D}}\| \leq \| \widetilde{\mathbf{T}_\mathbf{B}}\|(1+\|\mathbf{S}\|_{sup})$,
		where $\mathbf{A}, \mathbf{B}  \in \mathbb{Z}_q^{n \times m}$.	
		\item $\mathbf{T}_\mathbf{E}\leftarrow \mathsf{ExtBasisLeft}(\mathbf{E}:=[\mathbf{A}|\mathbf{B}], \mathbf{T}_\mathbf{A})$ (\cite{CHKP10}): 
		This is a DPT algorithm that 
		for $\mathbf{E}$ of the form  $\mathbf{E}:=[\mathbf{A}|\mathbf{B}]$ and a trapdoor $\mathbf{T}_\mathbf{A}$ for $\Lambda^{\bot}_q(\mathbf{A})$, 
		outputs a trapdoor $\mathbf{T}_\mathbf{E}$ for $\Lambda^{\bot}_q(\mathbf{E})$ 
		such that $\| \widetilde{\mathbf{T}_\mathbf{E}}\| =\| \widetilde{\mathbf{T}_\mathbf{A}}\|$, where $\mathbf{A}, \mathbf{B}  \in \mathbb{Z}_q^{n \times m}$. 
		\item $\mathbf{R}\leftarrow\mathsf{SampleD}(\mathbf{A},\mathbf{T}_\mathbf{A}, \mathbf{U}, \sigma)$ ( \cite{GPV08}): 
		This is a PPT algorithm that takes 
		a matrix $\mathbf{A} \in \mathbb{Z}_q^{n \times m}$,  
		its associated  trapdoor $\mathbf{T}_\mathbf{A} \in \mathbb{Z}^{m \times m}$, 
		a matrix $\mathbf{U} \in \mathbb{Z}_q^{n\times k}$ and a real number $\sigma>0$ and returns a
		short matrix $\mathbf{R} \in \mathbb{Z}_q^{m \times k} $ chosen randomly according to a distribution 
		that is statistically close to $\mathcal{D}_{\Lambda^{\mathbf{U}}_q(\mathbf{A}),\sigma}$.
		The algorithm works if $\sigma=\| \widetilde{\mathbf{T}_\mathbf{A}}\|\cdot \omega(\sqrt{\log m})$. Furthermore, $\|\textbf{R}^T\|_{sup} \leq \sigma\sqrt{mk}$, $\|\textbf{R}\|_{sup} \leq \sigma\sqrt{mk}$ (see also in \cite[Lemma 2.5]{BGG+14}).
		 \item $\mathbf{T}'_\mathbf{A} \leftarrow\mathsf{RandBasis}(\mathbf{A},\mathbf{T}_\mathbf{A}, \sigma)$ ( \cite{CHKP10}): 
		This is a PPT algorithm that takes 
		a matrix $\mathbf{A} \in \mathbb{Z}_q^{n \times m}$,  
		its associated  trapdoor $\mathbf{T}_\mathbf{A} \in \mathbb{Z}^{m \times m}$, and a real number $\sigma>0$ 
		and returns a new basis  $\mathbf{T}'_\mathbf{A}$ for $\Lambda^{\bot}_q(\mathbf{A})$ chosen randomly according to a distribution 
		that is statistically close to $(\mathcal{D}_{\Lambda^{\bot}_q(\mathbf{A}),\sigma})^m$, and  $\| \widetilde{\mathbf{T}'_\mathbf{A}}\| \leq \sigma\sqrt{m}$.
		The algorithm works if $\sigma=\| \widetilde{\mathbf{T}_\mathbf{A}}\|\cdot \omega(\sqrt{\log m})$.
	\end{enumerate}
\end{lemma}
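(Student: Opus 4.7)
The plan is to prove each of the five items separately, since they correspond to independent constructions lifted from the cited literature; in each case the task is to exhibit the construction and verify the Gram--Schmidt/sup-norm bounds together with the statistical-closeness claim.

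For item (1), the plan is to invoke the Micciancio--Peikert construction: pick $\bar{\mathbf{A}}\in \mathbb{Z}_q^{n\times \bar m}$ uniformly with $\bar m = \Theta(n\log q)$, sample a short Gaussian matrix $\mathbf{R}$, and set $\mathbf{A} := [\bar{\mathbf{A}}\mid \mathbf{G} - \bar{\mathbf{A}}\mathbf{R}]$. Combining $\mathbf{R}$ with the publicly-known short basis $\mathbf{T}_\mathbf{G}$ of $\Lambda_q^{\bot}(\mathbf{G})$ yields a basis of $\Lambda_q^{\bot}(\mathbf{A})$ of GS norm $O(\sqrt{n\log q})$, while uniformity of $\mathbf{A}$ is an application of Lemma~\ref{lhl}.

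For items (2) and (3), the plan is to write down explicit short spanning sets of the target kernels. Concretely, for \textsf{ExtBasisRight}, given $\mathbf{D}=[\mathbf{A}\mid \mathbf{A}\mathbf{S}+\mathbf{B}]$, every column $\mathbf{v}$ of $\mathbf{T}_\mathbf{B}$ lifts to $(-\mathbf{S}\mathbf{v},\mathbf{v})^T \in \Lambda_q^{\bot}(\mathbf{D})$, which I would combine with the trivial kernel vectors living on the $\mathbf{A}$-side; the bound $\|\widetilde{\mathbf{T}_\mathbf{D}}\|\leq \|\widetilde{\mathbf{T}_\mathbf{B}}\|(1+\|\mathbf{S}\|_{\mathrm{sup}})$ drops out of the estimate $\|(-\mathbf{S}\mathbf{v},\mathbf{v})^T\|\leq \|\mathbf{v}\|(1+\|\mathbf{S}\|_{\mathrm{sup}})$ after orthogonalization. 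For \textsf{ExtBasisLeft} the construction is more direct still: each kernel vector $\mathbf{v}$ of $\mathbf{A}$ lifts to $(\mathbf{v},\mathbf{0})^T \in \Lambda_q^{\bot}([\mathbf{A}\mid \mathbf{B}])$, and the GS norm is preserved exactly.

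For items (4) and (5), both rely on the GPV Gaussian sampler. I would implement \textsf{SampleD} as a Klein-style nearest-plane routine on the coset $\Lambda_q^{\mathbf{U}}(\mathbf{A})$, whose output is statistically close to $\mathcal{D}_{\Lambda_q^{\mathbf{U}}(\mathbf{A}),\sigma}$ once $\sigma$ exceeds the smoothing parameter, which is itself controlled by $\|\widetilde{\mathbf{T}_\mathbf{A}}\|\cdot \omega(\sqrt{\log m})$; the Euclidean bound then follows from Lemma~\ref{thm:Gauss}. For \textsf{RandBasis}, I would call \textsf{SampleD} with $\mathbf{U}=\mathbf{0}$ repeatedly to obtain $m$ independent short Gaussian vectors in $\Lambda_q^{\bot}(\mathbf{A})$ (rejecting linear dependents), then convert these to a genuine basis via the Hermite-normal-form/Gram--Schmidt procedure of Cash--Hofheinz--Kiltz--Peikert, which does not inflate the GS norm beyond $\sigma\sqrt{m}$. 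The main obstacle I anticipate is not the constructions themselves, which are standard, but the quantitative bookkeeping of the GS norms and smoothing-parameter thresholds as they propagate along the chain \textsf{TrapGen}$\to$\textsf{ExtBasis}$\to$\textsf{RandBasis}$\to$\textsf{SampleD}, since each call inflates $\|\widetilde{\mathbf{T}}\|$ multiplicatively and the cumulative growth must remain compatible with the DLWE noise budget exploited later in the paper.
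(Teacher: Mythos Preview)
Your proposal is sound, but it does substantially more work than the paper itself. In the paper, Lemma~\ref{trapdoor} appears in the preliminaries (Section~\ref{background}) and is stated \emph{without proof}: each item simply cites the source where the construction and its quantitative guarantees are established (\textsf{TrapGen} to \cite{AP09,MP12}, \textsf{ExtBasisRight} to \cite{ABB10}, \textsf{ExtBasisLeft} and \textsf{RandBasis} to \cite{CHKP10}, \textsf{SampleD} to \cite{GPV08} with the sup-norm bound to \cite{BGG+14}). The lemma functions as a toolbox summary, not as a result the paper proves.

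Your sketches for each item are correct and match the standard constructions in the cited works: the Micciancio--Peikert gadget-based trapdoor for item~(1), the explicit lifted bases for items~(2) and~(3), and the GPV/Klein sampler plus the Cash--Hofheinz--Kiltz--Peikert basis-randomization procedure for items~(4) and~(5). The ``quantitative bookkeeping'' obstacle you flag at the end is real, but the paper addresses it not here but later, in the \textbf{Setting Parameters} paragraph following Theorem~\ref{varcpa}, where the chain of Gaussian parameters $\sigma_1,\dots,\sigma_\eta$ is tracked explicitly. So if you are writing this up, the appropriate move is to defer the norm-propagation analysis to that later discussion and treat Lemma~\ref{trapdoor} as a black-box import.
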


\section{Generic PE Construction from DFKHE} \label{generic}

\subsection{Delegatable Fully Key-homomorphic Encryption}

Delegatable fully key-homomorphic encryption (DFKHE) can be viewed as a generalised notion of 
the so-called fully key-homomorphic encryption (FKHE) \cite{BGG+14} augmented with a key delegation mechanism \cite{BGG+14}. 

Informally, FKHE enables one to transform an encryption, say $ct_\mathbf{x}$, 
of a plaintext $\mu$ under a public variable $\mathbf{x}$ into the one, say $ct_f$, 
of the same $\mu$ under  some value/function pair $(y, f)$, with the restriction that one is only able to decrypt  the ciphertext $ct_f$ if $f(\mathbf{x})=y$. 
Similarly, DFHKP together with the key delegation mechanism allows one to do the same
but with  more functions, i.e., $(y, f_1, \cdots, f_k)$, and the condition for successful decryption is that $f_1(\mathbf{x})=\cdots=f_k(\mathbf{x})=y$.

\begin{definition}[DFKHE] \label{dfkhe2} Let $\lambda, d=d(\lambda) \in \mathbb{N}$ 
be two positive integers and  let  $\mathcal{T}=\mathcal{T}(\lambda)$ and $\mathcal{Y}=\mathcal{Y} (\lambda)$ be two finite sets. 
Define $\mathcal{F}=\mathcal{F}(\lambda)=\{f| f: \mathcal{T}^{d}  \rightarrow \mathcal{Y} \}$ to be a family of efficiently computable functions. 
% In what follows, we respectively refer $d_{\lambda}$, $\mathcal{T}_\lambda$, $\mathcal{Y}_\lambda$, $\mathcal{F}_\lambda$ as $d$, $\mathcal{T}$, $\mathcal{Y}$  and $\mathcal{F}$, meaning that $\lambda$ is implicitly mentioned.  
$(\lambda, d,$ $\mathcal{T}, \mathcal{Y}, \mathcal{F})$--DFKHE  is a tuple  %$\Pi=(\mathsf{DFKHE.KGen},  \mathsf{DFKHE.KHom}, $ $\mathsf{DFKHE.Eval},  \mathsf{DFKHE.ExtEval}$,  $ \mathsf{DFKHE.KDel}$, $\mathsf{DFKHE.Dec}$, $\mathsf{DFKHE.Enc}$) specified as follows. 
consisting of algorithms as follows.
\begin{description}
\item \underline{$( \mathsf{dfkhe}.pk, \mathsf{dfkhe}.sk) \leftarrow \mathsf{DFKHE.KGen}(1^\lambda,\mathcal{F} )$: } 
This PPT algorithm takes as input a security parameter $\lambda$ and outputs
 a public key $\mathsf{dfkhe}.pk$ and a secret key $\mathsf{dfkhe}.sk$. 
\item\underline{ $\mathsf{dfkhe}.sk_{y,f} \leftarrow \mathsf{DFKHE.KHom}(\mathsf{dfkhe}.sk, (y,f) )$: }  
This PPT algorithm takes as input the secret key $\mathsf{dfkhe}.sk$ and a pair $(y,f)\in \mathcal{Y} \times \mathcal{F} $
and returns a secret homomorphic  key $sk_{y,f}$.

\item \underline{ $\mathsf{dfkhe}.sk_{y, f_1,\cdots, f_{k+1}} \leftarrow \mathsf{DFKHE.KDel}(\mathsf{dfkhe}.pk, \mathsf{dfkhe}.sk_{y, f_1,\cdots, f_{k}}, (y,f_{k+1}) )$:}   
This PPT algorithm takes as input the public key $\mathsf{dfkhe}.pk$, a function $f_{k+1}\in \mathcal{F}$
and the secret key $\mathsf{dfkhe}.sk_{y, f_1,\cdots, f_{k}}$
and returns the delegated secret key $\mathsf{dfkhe}.sk_{y, f_1,\cdots, f_{k+1}}$. 
Further, the key $\mathsf{dfkhe}.sk_{y, f_1,\cdots, f_{k}}$ is produced either by $\mathsf{DFKHE.KHom}$ if $k=1$,
 or iteratively  by  $ \mathsf{DFKHE.KDel}$ if $k>1$.

\item \underline{$(\mathsf{dfkhe}.ct,\mathbf{t} ) \leftarrow \mathsf{DFKHE.Enc}(\mathsf{dfkhe}.pk,  \mu, \mathbf{t} )$:}    
This PPT algorithm takes as input the public key $\mathsf{dfkhe}.pk$, a plaintext $\mu$
 and a variable $\mathbf{t} \in \mathcal{T}^d$ 
 and returns a ciphertext $\mathsf{dfkhe}.ct$-- an encryption of $\mu$ under  the variable  $\mathbf{t} $.

%\item $\mathsf{dfkhe}.ct_{f}\leftarrow \mathsf{DFKHE.Eval}(f,(\mathsf{dfkhe}.ct,\mathbf{t} ))$:  This  DPT algorithm takes as input a function $f\in \mathcal{F}$, a  ciphertext $\mathsf{dfkhe}.ct$ together with the associated value $\mathbf{t} \in \mathcal{T}^d$ to return  an evaluated ciphertext $\mathsf{dfkhe}.ct_{f}$. If $f(\mathbf{t})=y$, then we say that $\mathsf{dfkhe}.ct_{f}$ is an encryption  of $\mu$ using the public key $(y,f)$.

\item \underline{$\mathsf{dfkhe}.ct_{f_1,\cdots, f_k} \leftarrow \mathsf{DFKHE.ExtEval}(f_1,\cdots, f_{k},(\mathsf{dfkhe}.ct,\mathbf{t} ))$: } 
The DPT algorithm takes as input a ciphertext $\mathsf{dfkhe}.ct$ and the associated variable $\mathbf{t} \in \mathcal{T}^d$
and returns an evaluated ciphertext $\mathsf{dfkhe}.ct_{f_1,\cdots, f_k}$. If $f_1(\mathbf{t})=\cdots=f_k(\mathbf{t})=y$ ,  then we say that $\mathsf{dfkhe}.ct_{f_1, \cdots, f_k}$ is an encryption  of $\mu$ using the public key $(y,f_1, \cdots, f_k)$.
% Furthermore, $\mathsf{DFKHE.ExtEval}$ may be performed by  calling $\mathsf{DFKHE.Eval}(f_i,\mathsf{dfkhe}.ct,\mathbf{t} )$ for $i \in [k]$ in succession;  if $k=1$, $\mathsf{DFKHE.ExtEval}$ is precisely equal to  $\mathsf{DFKHE.Eval}$

\item \underline{$\mu/\bot \leftarrow \mathsf{DFKHE.Dec}(\mathsf{dfkhe}.sk_{y,f_1, \cdots, f_k}, (\mathsf{dfkhe}.ct,\mathbf{t}))$:}   
The DPT algorithm takes as input  a delegated secret key $\mathsf{dfkhe}.sk_{y,f_1, \cdots, f_k}$
and a ciphertext $\mathsf{dfkhe}.ct$  associated with $ \mathbf{t}\in \mathcal{T}^d$
and recovers a plaintext $\mu$. 
It succeeds if $f_i(\mathbf{t})=y$ for all $i\in[k]$. 
Otherwise, it fails and returns $\bot$. 
To recover $\mu$, the algorithm first calls $ \mathsf{DFKHE.ExtEval}(f_1,\cdots, f_{k},(\mathsf{dfkhe}.ct,\mathbf{t} ))$ 
and gets $\mathsf{dfkhe}.ct_{f_1,\cdots, f_k}$.
Next it uses $\mathsf{dfkhe}.sk_{y, f_1, \cdots, f_k}$ and opens $\mathsf{dfkhe}.ct_{f_1,\cdots, f_k}$.

\end{description}
\end{definition}
Obviously, DFKHE from Definition \ref{dfkhe2} is identical to FKHE \cite{BGG+14}  if $k=1$.\\

%========================================================================
\noindent \textbf{Correctness.}
For all $\mu \in \mathcal{M}$, all $k\in \mathbb{N}$, all $f_1, \cdots, f_k\in \mathcal{F}$ and $\mathbf{t} \in \mathcal{T}^d$, $y\in \mathcal{Y}$, 
over the randomness of $(\mathsf{dfkhe}.pk, \mathsf{dfkhe}.sk) \leftarrow \mathsf{FKHE.KGen}(1^\lambda,\mathcal{F} )$, 
$(\mathsf{dfkhe}.ct,\mathbf{t} ) \leftarrow \mathsf{FKHE.Enc}(\mathsf{dfkhe}.pk,  \mu, \mathbf{t} )$, 
$\mathsf{dfkhe}.sk_{y,f_1} \leftarrow \mathsf{FKHE.KHom}(\mathsf{dfkhe}.sk, (y,f_1) )$ and  \\
$\mathsf{dfkhe}.sk_{y, f_1,\cdots, f_{i}}$ $\leftarrow \mathsf{FKHE.KDel}(\mathsf{dfkhe}.sk_{y, f_1,\cdots, f_{i-1}}, $ $(y,  f_{i} )),$  $\mathsf{dfkhe}.ct_{f_1,\cdots, f_k} \leftarrow \mathsf{DFKHE.ExtEval}$ $(f_1,\cdots, f_{k},(\mathsf{dfkhe}.ct,\mathbf{t} ))$ for all $ i\in \{2,\cdots, k\}$, then
%and $\mathsf{dfkhe}.ct_{f_1,\cdots, f_k} \leftarrow \mathsf{FKHE.ExtEval}(f_1,\cdots, f_k,(\mathsf{dfkhe}.ct,\mathbf{t} ))$, 
 \begin{itemize}
  \item $ \Pr[\mathsf{FKHE.Dec}(\mathsf{dfkhe}.sk, (\mathsf{dfkhe}.ct, \mathbf{t}))=\mu]\geq 1-negl(\lambda),$
\item if $y=f_1(\mathbf{t})=\cdots=f_k(\mathbf{t})$, then
\begin{eqnarray*}
 && \Pr[\mathsf{FKHE.Dec}(\mathsf{dfkhe}.sk, (\mathsf{dfkhe}.ct_{f_1,\cdots, f_k}, \mathbf{t}))=\mu]\geq 1-negl(\lambda),\\
&& \Pr[\mathsf{FKHE.Dec}(\mathsf{dfkhe}.sk_{y,f_1,\cdots, f_i}, (\mathsf{dfkhe}.ct, \mathbf{t}))=\mu]\geq 1-negl(\lambda), \forall i\in [k],
\end{eqnarray*}
\item  For any $ i\in [k]$, if $ y\neq f_i(\mathbf{t}),$ 
\[
 \Pr[\mathsf{FKHE.Dec}(\mathsf{dfkhe}.sk_{y,f_1,\cdots, f_j}, (\mathsf{dfkhe}.ct, \mathbf{t}))=\mu]\leq negl(\lambda), \forall j\in \{i, k\}.
 \]
\end{itemize}

%===============================================================
\noindent \textbf{Security.}  Security of DFKHE is similar to that of FKHE from \cite{BGG+14}
with an extra evaluation that includes the key delegation mechanisms. 
\begin{definition}[Selectively-secure CPA of DFKHE]
	DFKHE is IND-sVAR-CPA if for any polynomial time adversary $\mathcal{B}$ in the game $\mathsf{IND}$-$\mathsf{sVAR}$-$\mathsf{CPA}^{\mathsf{sel},\mathcal{B}}_{\mathsf{DFKHE}}$, the adversary advantage  
	$\mathsf{Adv}_{\mathsf{DFKHE}}^{\mathsf{IND}\text{-}\mathsf{sVAR}\text{-}\mathsf{CPA}}(\mathcal{B})=|\Pr[b'=b]-\frac{1}{2}| \leq \mathsf{negl}(\lambda).$
\end{definition}
%% Huy, perhaps you can move the definition after the game (just a suggestion)

The $\mathsf{IND}$-$\mathsf{sVAR}$-$\mathsf{CPA}^{\mathsf{sel},\mathcal{B}}_{\mathsf{DFKHE}}$ 
game  is as follows.
\begin{enumerate}
	\item \textbf{Initialize.} On the security parameter $\lambda$ and $\lambda$--dependent tuple $(d, (\mathcal{T}, \mathcal{Y}, \mathcal{F}))$,  $\mathcal{B}$ releases the target variable $\widehat{\mathbf{t}}=(\widehat{t_1}, \cdots, \widehat{t_d})\in \mathcal{T}^d$. 
	\item \textbf{Setup.} 
	The challenger runs $(\textsf{dfkhe}.pk, \textsf{dfkhe}.sk) \leftarrow \textsf{DFKHE.KGen}(1^\lambda, \mathcal{F} )$. Then, it gives $\textsf{dfkhe}.pk$ to $\mathcal{B}$.
	\item \textbf{Query.}   $\mathcal{B}$ adaptively makes delegated key queries DKQ($y,(f_1, \cdots, f_k)$) to get the corresponding delegated secret keys. 
	Specifically, $\mathcal{B}$ is allowed to have an access to the oracle  $KG(\mathsf{dfkhe}.sk,\widehat{\mathbf{t}},y,(f_1, \cdots, f_k))$, which takes as input $\mathsf{dfkhe}.sk,$ $\widehat{\mathbf{t}},$ a list of functions $f_1, \cdots, f_k\in \mathcal{F}$ and $y\in \mathcal{Y}$ 
and returns either $\bot$ if all $f_j(\widehat{\mathbf{t}})=y$, 
or the delegated secret key $\mathsf{dfkhe}.sk_{y, f_1,\cdots, f_{k}}$ otherwise. 
The delegated secret key $\mathsf{dfkhe}.sk_{y, f_1,\cdots, f_{k}}$ is computed calling $\mathsf{dfkhe}.sk_{y, f_1}: =\mathsf{DFKHE.KHom}$ $(\mathsf{dfkhe}.sk, (y,f_1))$ and\\
   $\mathsf{dfkhe}.sk_{y, f_1,\cdots, f_{i}}\leftarrow \mathsf{DFKHE.KDel}$ $(\mathsf{dfkhe}.pk, \mathsf{dfkhe}.sk_{y, f_1,\cdots, f_{i-1}}, (y,f_{i} )),~ \forall i\in \{2, \cdots, k\}$.		
	%	\item if $ATK=CCA$, $\mathcal{B}$  makes a decryption query DQ$(\mathsf{dfkhe}.ct,\mathbf{t})$ on input a ciphertext $\mathsf{dfkhe}.ct$ together with $\mathbf{t} \in  \mathcal{T}^d $. The challenger returns the output of $ \mathsf{DFKHE.Dec}(\mathsf{dfkhe}.sk_{y,g_1, \cdots, g_\ell}, (\mathsf{dfkhe}.ct,\mathbf{t}))$.  If $ATK=CPA$, the challenger aborts DQ$(\mathsf{dfkhe}.ct,\mathbf{t})$ by returning $\bot$.

	\item \textbf{Challenge.} 
	The adversary submits two messages $\mu_0, \mu_1$ (with $\widehat{\mathbf{t}}$). The challenger in turn chooses $b \xleftarrow{\$} \{0,1\}$ and 
	returns the output $(\mathsf{dfkhe}.\widehat{ct}, \widehat{\mathbf{t}})$ of $ \textsf{DFKHE.Enc}(\mathsf{dfkhe}.pk,\mu_b, $ $\widehat{\mathbf{t}})$.  
%	\item \textbf{Query 2.} The same as Query 1.%, except that the challenger will abort  any DQ$(ct,(g_1,\cdots, g_\ell),\mathbf{t})$, satisfying that $(ct, \mathbf{t})=(\mathsf{dfkhe}.\widehat{ct}_{y,g_1, \cdots, g_\ell},\widehat{\mathbf{t}})$.
	% where  $\mathsf{dfkhe}.\widehat{ct}_{y,g_1, \cdots, g_\ell}$ $ \leftarrow \mathsf{DFKHE.ExtEval}(f_1,\cdots, f_{k},(\mathsf{dfkhe}.\widehat{ct},\widehat{\mathbf{t}} ))$.
 %For CQ(), the challenger returns $\bot$ if it finds out  that $\{\widehat{t_1}, \cdots, \widehat{t_d}\} \cap \mathcal{T}^*= \emptyset$.
			\item \textbf{Guess.} The adversary outputs $b'\in \{0,1\}$. It wins if $b'=b$. 
\end{enumerate}

\iffalse

\begin{enumerate}

\item $(\widehat{\mathbf{t}}\in \mathcal{T}^d, state_1)\leftarrow \mathcal{A}(\lambda)$ // $\mathcal{A}$ announces its target variable
	
	\item $(\mathsf{dfkhe}.pp, \mathsf{dfkhe}.pk, \mathsf{dfkhe}.sk) \leftarrow \mathsf{DFKHE.KGen}(1^\lambda,\mathcal{F} )$
	
	\item $(\mu_0,\mu_1, state_2)\leftarrow \mathcal{A}^{KG(\mathsf{dfkhe}.sk,\widehat{\mathbf{t}},\cdot)}(\mathsf{dfkhe}.pk, state_1)$, \\
	%If ATK=CCA: $\mathcal{A}^{DC(\mathsf{dfkhe}.ct_{g_1, \cdots, g_\ell},\mathbf{x})}(\mathsf{dfkhe}.pk, state_1)$\\
	  $b\xleftarrow{\$} \{0,1\}$, $ (\mathsf{dfkhe}.\widehat{ct},\widehat{\mathbf{t}} ) \leftarrow \mathsf{DFKHE.Enc}(\mathsf{dfkhe}.pk,  \mu_b, \widehat{\mathbf{t}} )$ // Challenge Phase
	\item $b' \leftarrow \mathcal{A}^{KG(\mathsf{dfkhe}.sk,\widehat{\mathbf{t}},\cdot )}(\mathsf{dfkhe}.\widehat{ct}, state_2)$, // $\mathcal{A}$ guesses $b$
	\item output $b'\in \{0,1\}$,	
	\end{enumerate}
where $KG(\mathsf{dfkhe}.sk,\widehat{\mathbf{t}},y,(f_1, \cdots, f_k))$ is an oracle that takes as input a list of functions $f_1, \cdots, f_k\in \mathcal{F}$ and $y\in \mathcal{Y}_\lambda$ to either return  $\bot$ if all $f_j(\widehat{\mathbf{t}})=y$, or return the delegated secret key $\mathsf{dfkhe}.sk_{y, f_1,\cdots, f_{k}}$ otherwise. The delegated secret key $\mathsf{dfkhe}.sk_{y, f_1,\cdots, f_{k}}$ is computed by first producing $\mathsf{dfkhe}.sk_{y, f_1}: =\mathsf{DFKHE.KHom}$ $(\mathsf{dfkhe}.sk, (y,f_1))$ and then doing delegation $\mathsf{dfkhe}.sk_{y, f_1,\cdots, f_{i}}\leftarrow \mathsf{DFKHE.KDel}$ $(\mathsf{dfkhe}.pk, \mathsf{dfkhe}.sk_{y, f_1,\cdots, f_{i-1}}, f_{i} )$ for all $i\in \{2, \cdots, k\}$.
\fi

\subsection{Generic PE Construction from DFKHE.} 
The main idea behind our construction is an observation that 
ciphertext tags  can be treated as  
 variables  $\textbf{t}=(t_1, \cdots, t_d) \in \mathcal{T}^d$. The puncturing property, which is related to the ``equality", suggests us to construct a family $\mathcal{F}$ of equality test functions, allowing to compare each pair of ciphertext tags  and punctures. 
 Using this idea, we then can have a PE  construction from DFKHE.

Let $\lambda, d=d(\lambda) \in \mathbb{N}$ be two positive integers. 
Let $\mathcal{T}=\mathcal{T}({\lambda})$ be a finite set (that henceforth called the \textit{tag space}) and $\mathcal{Y}=\mathcal{Y}({\lambda})$ 
be also a finite set. In addition, let  $y_0\in \mathcal{Y}$ be a some fixed special element. 
Define a family of all equality test functions indicated by $\mathcal{T}$,  

\begin{equation}\label{eq10}
\mathcal{F}=\mathcal{F}({\lambda}):=\left \{ f_{t^*}| t^* \in \mathcal{T}, \forall  \mathbf{t}=(t_1, \cdots, t_d), f_{t^*}: \mathcal{T}^d  \rightarrow \mathcal{Y}  \right \},
\end{equation}
where $f_{t^*}(\mathbf{t}):=y_0$  if $t^* \neq  t_i, \forall i\in[d]$, $f_{t^*}(\mathbf{t}):= y_{t^*,\mathbf{t}}\in \mathcal{Y}\setminus \{y_0\}$.  
Here, $y_{t^*,\mathbf{t}}$ means depending on the value of $t^*$ and $\mathbf{t}$. 
Now, let  $\Pi=(\mathsf{DFKHE.KGen},  \mathsf{DFKHE.KHom},  $ $ \mathsf{DFKHE.Enc}, $  $ \mathsf{DFKHE.ExtEval},$ $ \mathsf{DFKHE.KDel},$ $\mathsf{DFKHE.Dec} )$ be  $(\lambda, d,$ $\mathcal{T}, \mathcal{Y}, \mathcal{F})$--DFHKE. 
   Using  $\Pi$, we can construct a PE system $\Psi=(\mathsf{PE.key}, \mathsf{PE.enc}, \mathsf{PE.pun},$ $ \mathsf{PE.dec} )$ of which both tags and punctures reside in $ \mathcal{T}$. The description of $\Psi$ is below:
\begin{description}
%=======
%\item \underline{\textbf{Parameters:}} Set $\lambda$ as a security parameter, and  $d=d(\lambda)$  as the maximum number of tags per ciphertext.
%=========
	\item \underline{$(\mathsf{pe}.pk, \mathsf{pe}.sk_0) \leftarrow \mathsf{PE.key}(1^\lambda, d)$:}  For input a security parameter $\lambda$ and
	the maximum number $d$ of tags per ciphertext,   
	run $(\mathsf{dfkhe}.pk, \mathsf{dfkhe}.sk) $ $\leftarrow \mathsf{DFKHE.KGen}(1^\lambda, \mathcal{F})$, and 
	return  $\mathsf{pe}.pk:=\mathsf{dfkhe}.pk$, and $ \mathsf{pe}.sk_0:=\mathsf{dfkhe}.sk$.
%========	
	\item \underline{$\mathsf{pe}.ct \leftarrow \mathsf{PE.enc}(\mathsf{pe}.pk,\mu, \mathbf{t}=(t_1, \cdots, t_d))$:} For a public key $\mathsf{pe}.pk$, a message $\mu$, and ciphertext tags $\mathbf{t}=(t_1, \cdots, t_d)$,  return $\mathsf{pe}.ct \leftarrow \mathsf{DFKHE.Enc}(\mathsf{pe}.pk, \mu, \mathbf{t} )$.
	
%==============
	
	\item \underline{$\mathsf{pe}.sk_{i} \leftarrow \mathsf{PE.pun}(\mathsf{pe}.pk, \mathsf{pe}.sk_{i-1}, t^*_{i})$:} 
	For input $\mathsf{pe}.pk$, $\mathsf{pe}.sk_{i-1}$ and a punctured tag $t^*_i$,
	
	\begin{itemize}
	\item  If $i=1$: run  $\mathsf{dfkhe}.sk_{y_0,f_{t^*_1}} \leftarrow \mathsf{DFKHE.KHom}(\mathsf{pe}.sk_{0}, (y_0,f_{t^*_1}))$ and output $\mathsf{pe}.sk_{1}:=\mathsf{dfkhe}.sk_{y_0,f_{t^*_1}} $.
	\item If $i  \geq 2$: compute 
	$\mathsf{pe}.sk_{i} \leftarrow \mathsf{DFKHE.KDel}(\mathsf{dfkhe}.pk, \mathsf{pe}.sk_{i-1},(y_0,f_{t^*_{i}}) ).$
	\item Finally, output $\mathsf{pe}.sk_{i}.$
	\end{itemize}
	
%============
	\item  \underline{$\mu/\bot \leftarrow \mathsf{PE.dec}(\mathsf{pe}.pk, (\mathsf{pe}.sk_{i}, (t^*_1, \cdots, t^*_i)),(\mathsf{pe}.ct, \mathbf{t}))$:}  
	For input the public key
	$\mathsf{pe}.pk$, a puncture key $\mathsf{pe}.sk_{i}$ together with punctures  $(t^*_1, \cdots, t^*_i)$, a ciphertext $\mathsf{pe}.ct$ and its associated tags $\mathbf{t}=(t_1, \cdots, t_d)$, the algorithm first checks whether or not $f_{t^*_1}(\mathbf{t})=\cdots=f_{t^*_i}(\mathbf{t})=y_0$. If not, the algorithm  returns $\bot$. Otherwise, it returns the output of $ \mathsf{DFKHE.Dec}(\mathsf{pe}.sk_{i}, \mathsf{pe}.ct)$. 
	%(calls $ \mathsf{DFKHE.Eval}$ and $ \mathsf{DFKHE.ExtEval}$  to evaluate $\mathsf{dfkhe}.ct_{f_{t^*_1},\cdots,f_{t^*_k}} \leftarrow \mathsf{DFKHE.ExtEval}(f_{t^*_1},\cdots,f_{t^*_i},(\mathsf{dfkhe}.ct,\mathbf{t} )).$ Finally it.)
	
\end{description}

\noindent \textbf{Correctness.}  Remark that, over the choice of  $(\lambda, d, \eta,  (t^*_1, \cdots, t^*_\eta),$ $ (t_1, \cdots, t_d)$,  $\eta\geq 0$, $t^*_1, \cdots, t^*_\eta \in \mathcal{T}$, $t_1, \cdots, t_d \in \mathcal{T}\setminus \{t^*_1, \cdots, t^*_\eta\}$, we have  $f_{t^*_j}(\mathbf{t})=y_0$ for all $j\in [\eta].$ Then, it is clear that, the induced PE $\Psi$ is correct if  and only if the DFKHE $\Pi$ is correct.

\begin{theorem} \label{pe}
PE $\Psi$ is selectively-secure CPA assuming that 
the underlying DFKHE $\Pi$ is selectively-CPA secure.
\end{theorem}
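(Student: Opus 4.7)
The approach is a direct black-box reduction: from any PPT adversary $\mathcal{A}$ against $\Psi$ in the $\mathsf{IND}\text{-}\mathsf{sPUN}\text{-}\mathsf{CPA}$ game I will build a PPT adversary $\mathcal{B}$ against $\Pi$ in the $\mathsf{IND}\text{-}\mathsf{sVAR}\text{-}\mathsf{CPA}$ game such that $\mathsf{Adv}_{\Psi}^{\mathsf{IND}\text{-}\mathsf{sPUN}\text{-}\mathsf{CPA}}(\mathcal{A}) = \mathsf{Adv}_{\Pi}^{\mathsf{IND}\text{-}\mathsf{sVAR}\text{-}\mathsf{CPA}}(\mathcal{B})$. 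This is natural because, by construction of $\Psi$, every PE object is literally a DFKHE object instantiated with the equality-test family $\mathcal{F}$ of \eqref{eq10}, so $\mathcal{B}$ can perform its task almost entirely by forwarding messages between $\mathcal{A}$ and its own DFKHE challenger.

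In detail, $\mathcal{B}$ receives the target tag list $\{\widehat{t_1},\ldots,\widehat{t_d}\}$ from $\mathcal{A}$ and declares $\widehat{\mathbf{t}}=(\widehat{t_1},\ldots,\widehat{t_d})\in\mathcal{T}^d$ as its own target variable. The DFKHE public key $\mathsf{dfkhe}.pk$ is relayed to $\mathcal{A}$ as $\mathsf{pe}.pk$. During the query phases, $\mathcal{B}$ only maintains the bookkeeping set $\mathcal{T}^*$: on each puncture query $\mathsf{PQ}(t^*)$ it simply appends $t^*$ to $\mathcal{T}^*$ without invoking its DFKHE oracle. When $\mathcal{A}$ issues the (single) corruption query $\mathsf{CQ}()$ with current $\mathcal{T}^*=\{t^*_1,\ldots,t^*_\eta\}$, $\mathcal{B}$ returns $\bot$ if $\mathcal{T}^*\cap\{\widehat{t_1},\ldots,\widehat{t_d}\}=\emptyset$; otherwise it calls its delegated-key oracle once on $(y_0,(f_{t^*_1},\ldots,f_{t^*_\eta}))$ and forwards the answer $\mathsf{dfkhe}.sk_{y_0,f_{t^*_1},\ldots,f_{t^*_\eta}}$ to $\mathcal{A}$ as $\mathsf{pe}.sk_\eta$. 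In the Challenge phase, $\mathcal{B}$ relays $(\mu_0,\mu_1)$ to its challenger and hands the returned $\mathsf{dfkhe}.\widehat{ct}$ back to $\mathcal{A}$ as $\widehat{\mathsf{pe}.ct}$; $\mathcal{B}$ finally outputs whatever bit $\mathcal{A}$ outputs.

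The reason this simulation is perfect is the equivalence, hard-wired into $\mathcal{F}$, between the PE corruption restriction and the DFKHE key-query restriction. By \eqref{eq10}, $f_{t^*}(\widehat{\mathbf{t}})=y_0$ iff $t^*\notin\{\widehat{t_1},\ldots,\widehat{t_d}\}$, so the condition $\mathcal{T}^*\cap\{\widehat{t_1},\ldots,\widehat{t_d}\}\neq\emptyset$ under which the real PE challenger hands out $\mathsf{pe}.sk_\eta$ is exactly the condition $\exists j\in[\eta]:f_{t^*_j}(\widehat{\mathbf{t}})\neq y_0$ under which the DFKHE oracle returns a non-$\bot$ delegated key. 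Hence $\mathcal{B}$'s single delegated-key query is always legal when it needs to succeed, the key it forwards has exactly the distribution produced by iterating $\mathsf{DFKHE.KHom}$ and $\mathsf{DFKHE.KDel}$ inside $\mathsf{PE.pun}$, and the challenge ciphertext has the correct distribution as well. The PE game's extra check $\mathcal{C}^*\cap\widehat{\mathbf{t}}\neq\emptyset$ is automatically satisfied since $\mathcal{C}^*$ is frozen to the $\mathcal{T}^*$ of the successful $\mathsf{CQ}$.

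The one subtlety to watch out for, which I expect to be the main (though modest) obstacle, is the order in which DFKHE queries are placed. A naive simulator that asked its oracle after every $\mathsf{PQ}$ would be answered by $\bot$ whenever the accumulated puncture list $(t^*_1,\ldots,t^*_i)$ has not yet intersected the target, and it would then have no way to supply an intermediate $\mathsf{pe}.sk_i$ if the PE game's description of $\mathsf{PQ}$ is read as handing the key to $\mathcal{A}$. The fix is to delay all oracle interaction until the corruption query, at which point the intersection condition guarantees at least one $f_{t^*_j}(\widehat{\mathbf{t}})\neq y_0$ and the delegated-key oracle is forced to answer. With this single-shot strategy, $\mathcal{B}$ runs in polynomial time, makes at most one oracle query, and wins its DFKHE game whenever $\mathcal{A}$ wins its PE game, giving the claimed reduction.
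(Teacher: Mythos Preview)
Your proposal is correct and follows essentially the same reduction as the paper: relay the target tags as the DFKHE target variable, relay the public key, answer the corruption query via a single delegated-key query on $(y_0,(f_{t^*_1},\ldots,f_{t^*_\eta}))$, forward the challenge, and output $\mathcal{A}$'s bit. Your explicit treatment of the timing issue---deferring the DFKHE oracle call to $\mathsf{CQ}()$ so that the intersection condition guarantees $\exists j:\,f_{t^*_j}(\widehat{\mathbf{t}})\neq y_0$ and the oracle must answer---is in fact cleaner than the paper's own exposition, which is ambiguous about whether intermediate $sk_i$'s are handed to $\mathcal{A}$ on $\mathsf{PQ}$ queries.
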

\begin{proof} \label{peproof}
	Assume that there exists an adversary $\mathcal{A}$ that is able to break the selective security of $\Psi$ 
	with probability $\delta$. We can construct a simulator $\mathcal{S}$, which takes advantage of  $\mathcal{A}$ 
	and breaks selective security of $\Pi$ with the same probability. 
	\begin{description}
		\item \textbf{Initialize.} $\mathcal{S}$ would like to break the selective security of the   ($\lambda, d, (\mathcal{T}, \mathcal{Y}, \mathcal{F}$)--DFHKE system $\Pi=( \mathsf{DFKHE.KGen},  \mathsf{DFKHE.KHom},  \mathsf{DFKHE.Enc},$ $ \mathsf{DFKHE.Dec} ,$  $ \mathsf{DFKHE.ExtEval}, $ $\mathsf{DFKHE.KDel})$, 
		where  $\lambda, d, \mathcal{T}, $ $\mathcal{Y},$ $ \mathcal{F}$ are specified as in and around Equation \eqref{eq10}.
		
		\item \textbf{Targeting.} $\mathcal{S}$ calls $\mathcal{A}$ to get  the target tags  $(\widehat{t}_1, \cdots, \widehat{t}_d)$ in the game for $\Psi$, and lets it be $\widehat{\mathbf{t}}$, playing the role of the target variable in the game for $\Pi$.
		
		\item \textbf{Setup.} $\mathcal{S}$ initializes a set of punctured tags $\mathcal{T}^* \leftarrow \emptyset$, 
	and a set of corrupted tags $\mathcal{C}^* \leftarrow \emptyset$ containing all punctured tags at the time of the first corruption query.  runs $(\mathsf{dfkhe}.pp, \mathsf{dfkhe}.pk, $ $\mathsf{dfkhe}.sk) \leftarrow \mathsf{DFKHE.KGen}(1^\lambda, \mathcal{F})$ and gives $\mathsf{dfkhe}.pp, \mathsf{dfkhe}.pk$ to $\mathcal{A}$.  Note that $ \mathsf{PE.key}(1^\lambda, d)$  $\equiv \mathsf{DFKHE.KGen}(1^\lambda, \mathcal{F})$ by construction.
		
		\item \textbf{Query 1.} In this phase, $\mathcal{A}$ adaptively makes puncture queries PQ($k, t^*_k$), 
where $k$ implicitly counts the number of PQ queries so far, and corruption queries CQ().  
To reply  PQ($k, t^*_k$), $\mathcal{S}$ simply returns the output of $\mathsf{DFKHE.KDel}(\mathsf{dfkhe}.pk, \mathsf{dfkhe}.sk_{y_0,f_{t^*_1},\cdots, f_{t^*_{k-1}}} )$, with noting that when $k=1$, then we have both $\mathsf{dfkhe}.sk_{y_0,f_{t^*_1},\cdots, f_{t^*_{k-1}}}:=\mathsf{dfkhe}.sk_{0}$ and $\mathsf{DFKHE.KDel}\equiv \mathsf{DFKHE.KHom}$ and finally appends $t^*_k$ to $\mathcal{T}^*$. 
		
		%Recall that, the first corruption query CQ() will be answered with the most recent secret key that the simulator is having. After that, all subsequent queries  will be replied with $\bot$, meaning ignored. 
		The simulator just cares about the time at which the first CQ() has been completed. At that time, $\mathcal{S}$ saves the value of the counter $k$ and makes $\mathcal{A}$'s puncture queries a list of functions $\{f_{t^*_1},\cdots, f_{t^*_{k}}\}$ and  sets $\mathcal{C}^* \leftarrow \mathcal{T}^* $. We can consider that $\mathcal{A}$ has made a sequence of $k$ queries to the $KG(\mathsf{dfkhe}.sk,\widehat{\mathbf{t}},y,(f_1, \cdots, f_k))$ oracle in the DFKHE's security game. 
Recall that, the requirement for a query to $KG$ to be accepted is that it must be \textit{not} all $j\in [k]$ sastyfying $f_j(\widehat{\mathbf{t}})=y$. This requirement is essentially fulfilled thanks to the condition in the FE's security game that there is at least one $t^*_j\in \{\widehat{t}_1, \cdots, \widehat{t}_d\}\cap \mathcal{C}^*$. 
		%In both of games, if the corresponding requirements do not hold, then the outputs are $\bot$, forcing that two games are restarted.
		
		\item \textbf{Challenge.} 
		$\mathcal{A}$ submits two messages $\mu_0, \mu_1$ (with $\widehat{\mathbf{t}}$). 
		$\mathcal{S}$ in turn chooses $b \xleftarrow{\$} \{0,1\}$ and 
		returns $(\mathsf{dfkhe}.\widehat{ct},\widehat{\mathbf{t}}) \leftarrow \textsf{DFKHE.Enc}(\mathsf{dfkhe}.pk,\mu_b, \widehat{\mathbf{t}})$.  
		\item \textbf{Query 2.} The same as Query 1.% with the restriction that having received DQ$(ct,\widehat{\mathbf{t}})$, the challenger returns $\bot$ if $(ct,\{t_1, \cdots, t_d \})$ $=(\widehat{ct}, \{\widehat{t_1}, \cdots, \widehat{t_d}\} )$.
		\item \textbf{Guess.} $\mathcal{S}$ outputs the same $b'\in \{0,1\}$ as $\mathcal{A}$ has guessed. 
	\end{description}
	
	It is clear that the FE adversary $\mathcal{A}$ is joining the DFKHE game, however it is essentially impossible to distinguish the DFKHE game from the FE one  as the simulated environment for $\mathcal{A}$ is \textit{perfect}.  
This concludes the proof.\qed
\end{proof}

\section{DFKHE and FE Construction from Lattices} \label{instan}

At first, in Subsection \ref{gad} below, we will review the key-homomorphic mechanism,
which is an important ingredient for our lattice-based construction. %One is the notion of the gadget matrix and the other is the key-homomorphic mechanism for arithmetic circuits, both of which form the FKHE enabling functions. Next, in Section \ref{dfkhe}, we demonstrate a LWE-based DFKHE construction. And the PE construction from LWE will be presented in Subsection \ref{cons}.

\subsection{Key-homomorphic Mechanism for Arithmetic Circuits} \label{gad}
\iffalse
\begin{lemma}[Bit decomposition]
	$\mathbf{K} \leftarrow \mathsf{BitDec}(\mathbf{A})$:  
	This is a DPT algorithm that takes a matrix $\mathbf{A}\in \mathbb{Z}_q^{n \times m}$ as its input and
	returns a matrix $\mathbf{R}\in \mathbb{Z}_q^{m \times m}$, in which each element $a \in \mathbb{Z}_q$ of $\mathbf{A}$ 
	is decomposed into a column binary vector $\mathbf{r}=[a_0 \cdots a_{k-1}]^T\in \mathbb{Z}_q^{k}$, where the bits $a_i$'s are ordered from least significant bit to most significant bit.
\end{lemma}

\begin{lemma}[{\cite[Claim 2.3]{BGG+14}}]
	Given a matrix $\mathbf{A} \in \mathbb{Z}_q^{n \times m}$ and $\mathbf{K} =\mathsf{BitDec}(\mathbf{A})$.  
	Then, $\| \mathbf{K}\|_{\text{sup}}\leq m$ and $\| \mathbf{K}^T\|_{\text{sup}} \leq m$.
\end{lemma}
\fi
Let $n, q>0$, $k:=\lceil \log q \rceil$ and $m:=n\cdot k$. 
We exploit the gadget matrix $\textbf{G}$ and its associated trapdoor $\textbf{T}_{\textbf{G}}$.
According to \cite[Section 4]{MP12}, the matrix $\textbf{G}:=\textbf{I}_n\otimes\textbf{g}^T\in \mathbb{Z}_q^{n \times m}$, 
where $\textbf{g}^T=[1 \; 2\; 4 \; \cdots\;  2^{k-1}]$. 
The associated trapdoor $\textbf{T}_{\textbf{G}}\in \mathbb{Z}^{m \times m}$ is publicly known and 
$\| \widetilde{\textbf{T}_{\textbf{G}}}\| \leq \sqrt{5}$ (see \cite[Theorem 4.1]{MP12}). 
%It is easy to show that$\textbf{G}\cdot \textsf{BitDec}(\textbf{A})=\textbf{A},$for any $\textbf{A}\in \mathbb{Z}_q^{n \times m}$.\\

\noindent \textbf{Key-homomorphic Mechanism.} 
We recap some basic facts useful for construction of 
evaluation algorithms for the family of polynomial depth and unbounded fan-in arithmetic circuits
(see  \cite[Section 4]{BGG+14} for details).
Let $\textbf{G}\in \mathbb{Z}_q^{n\times m}$ be the gadget matrix given above.    
For $x\in \mathbb{Z}_q$, $\textbf{B} \in \mathbb{Z}_q^{n\times m}$, $\textbf{s} \in \mathbb{Z}_q^{n}$ and $\delta>0$, 
define the following set
$ E_{\textbf{s},\delta}(x, \textbf{B}):=\{(x\textbf{G}+\textbf{B})^T\textbf{s}+\textbf{e}, \text{ where } \| \textbf{e}\|<\delta\}.$ More details can be found in  \cite{BGG+14}.

\begin{lemma}[{\cite[Section 4]{BGG+14}}]\label{eval} 
Let $n$, $q=q(n)$, $m=\Theta(n\log q)$ be positive integers,  
	$\mathbf{x}=(x_1, \cdots, x_d) \in \mathbb{Z}_q^d$, 
	$\mathbf{x}^*=(x_1^*, \cdots, x^*_d) \in \mathbb{Z}_q^d$, $\mathbf{B}_i\in \mathbb{Z}_q^{n\times m}$, 
	$\mathbf{c}_i \in E_{\mathbf{s},\delta}(x_i, \mathbf{B}_i)$ for some $\mathbf{s}\in \mathbb{Z}_q^n$ 
	and $\delta>0$, $ \mathbf{S}_i \in \mathbb{Z}_q^{m\times m}$ for all $i\in [d]$. Also, let $\beta_{\mathcal{F}}=\beta_{\mathcal{F}}(n):\mathbb{Z} \rightarrow \mathbb{Z}$ be a positive integer-valued function, and   
		$\mathcal{F}=\{f:(\mathbb{Z}_q)^d \rightarrow \mathbb{Z}_q\}$ be a family of functions, in which each function can be computed by some circuit of a family of depth $\tau$, polynomial-size arithmetic circuits $(C_{\lambda})_{\lambda\in \mathbb{N}}$. 
	Then there exist DPT algorithms $\mathsf{Eval}_\mathsf{pk}$,   $\mathsf{Eval}_\mathsf{ct}$,  
	$ \mathsf{Eval}_\mathsf{sim}$  associated with $\beta_{\mathcal{F}}$ and $\mathcal{F}$  such that the following properties hold.
	\begin{enumerate}
		\item If $\mathbf{B}_f \leftarrow \mathsf{Eval}_\mathsf{pk}(f\in \mathcal{F}, (\mathbf{B}_i )_{i=1}^d )$, 
		then  $\mathbf{B}_f\in \mathbb{Z}_q^{n\times m}$.
		\item Let $\mathbf{c}_f \leftarrow \mathsf{Eval}_\mathsf{ct}(f\in \mathcal{F}, ((x_i, \mathbf{B}_i,\mathbf{c}_i))_{i=1}^d)$,
		then $\mathbf{c}_f \in E_{\mathbf{s},\Delta}(f(\mathbf{x}), \mathbf{B}_f)$, 
		in which $\mathbf{B}_f \leftarrow \mathsf{Eval}_\mathsf{pk}(f, (\mathbf{B}_i)_{i=1}^d)$ and $\Delta < \delta \cdot \beta_{\mathcal{F}}.$
		\item The output $\mathbf{S}_f \leftarrow \mathsf{Eval}_\mathsf{sim}(f\in \mathcal{F}, ((x_i^*,\mathbf{S}_i))_{i=1}^d, \mathbf{A})$ satisfies the relation
		$\mathbf{A}\mathbf{S}_f-f(\mathbf{x}^*)\mathbf{G}=\mathbf{B}_f$ 
		and $\|\mathbf{S}_f \|_{sup} <\beta_{\mathcal{F}}$ with overwhelming probability,  
		where  $\mathbf{B}_f \leftarrow \mathsf{Eval}_\mathsf{pk}(f, (\mathbf{A}\mathbf{S}_i-x_i^*\mathbf{G})_{i=1}^d) $. 
		In particular, if $\mathbf{S}_1,\cdots, \mathbf{S}_d \xleftarrow{\$} \{-1,1\}^{m\times m}$, 
		then $\|\mathbf{S}_f \|_{sup} <\beta_{\mathcal{F}}$ with all but negligible probability for all $f\in \mathcal{F}$.
	\end{enumerate}
\end{lemma}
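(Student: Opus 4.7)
The plan is to construct the three algorithms gate by gate on the arithmetic circuit $C \in (C_\lambda)$ that computes $f$, and then argue by induction on the circuit depth that the stated invariants hold. Since the family $\mathcal{F}$ is computed by depth-$\tau$, polynomial-size arithmetic circuits with $+$ and $\times$ gates over $\mathbb{Z}_q$, it suffices to specify the three algorithms on each gate type, verify the three invariants wire-by-wire, and multiply the per-level blow-up to obtain a global bound $\beta_{\mathcal{F}}$ that depends only on $\tau$ and the fan-in.

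First I would handle addition gates. Given two input wires carrying $(x_a, \mathbf{B}_a, \mathbf{c}_a)$ and $(x_b, \mathbf{B}_b, \mathbf{c}_b)$ with $\mathbf{c}_a \in E_{\mathbf{s},\delta_a}(x_a,\mathbf{B}_a)$ and similarly for $b$, set
\[
\mathsf{Eval}_\mathsf{pk}^+ : \mathbf{B}_+ := \mathbf{B}_a + \mathbf{B}_b, \qquad \mathsf{Eval}_\mathsf{ct}^+ : \mathbf{c}_+ := \mathbf{c}_a + \mathbf{c}_b,
\]
so that $\mathbf{c}_+ = ((x_a+x_b)\mathbf{G} + \mathbf{B}_+)^T \mathbf{s} + (\mathbf{e}_a+\mathbf{e}_b)$, giving $\mathbf{c}_+ \in E_{\mathbf{s}, \delta_a+\delta_b}(x_a+x_b, \mathbf{B}_+)$. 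For the simulation algorithm set $\mathbf{S}_+ := \mathbf{S}_a+\mathbf{S}_b$; the identity $\mathbf{A}\mathbf{S}_+ - (x_a^*+x_b^*)\mathbf{G} = \mathbf{B}_+$ is then immediate from linearity, and $\|\mathbf{S}_+\|_{\sup} \le \|\mathbf{S}_a\|_{\sup}+\|\mathbf{S}_b\|_{\sup}$. Next I would handle multiplication gates, which are the delicate case. Using the public gadget trapdoor I can compute $\mathbf{G}^{-1}(\mathbf{B}_b) \in \{0,1\}^{m\times m}$ with $\mathbf{G}\cdot\mathbf{G}^{-1}(\mathbf{B}_b)=\mathbf{B}_b$, and define
\[
\mathbf{B}_\times := -\mathbf{B}_a \cdot \mathbf{G}^{-1}(\mathbf{B}_b), \qquad \mathbf{c}_\times := x_a \cdot \mathbf{c}_b \;-\; \mathbf{G}^{-1}(\mathbf{B}_b)^T \cdot \mathbf{c}_a.
\]
A direct expansion, using $\mathbf{G}\,\mathbf{G}^{-1}(\mathbf{B}_b)=\mathbf{B}_b$, cancels the cross term $x_a \mathbf{B}_b^T\mathbf{s}$ and yields $\mathbf{c}_\times \in E_{\mathbf{s},\Delta_\times}(x_a x_b, \mathbf{B}_\times)$ with fresh noise $x_a \mathbf{e}_b - \mathbf{G}^{-1}(\mathbf{B}_b)^T \mathbf{e}_a$, so $\Delta_\times \le |x_a|\cdot \delta_b + m\cdot \delta_a$. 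For $\mathsf{Eval}_\mathsf{sim}$ on multiplication I would set $\mathbf{S}_\times := x_a^* \mathbf{S}_b - \mathbf{S}_a\,\mathbf{G}^{-1}(\mathbf{B}_b)$ and verify $\mathbf{A}\mathbf{S}_\times - (x_a^* x_b^*)\mathbf{G} = \mathbf{B}_\times$ by the same algebraic cancellation (replacing $(x_i,\mathbf{c}_i,\mathbf{e}_i)$ by $(x_i^*, \mathbf{A}\mathbf{S}_i - x_i^*\mathbf{G}, 0)$ in the proof).

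With the per-gate transformations defined, the full algorithms $\mathsf{Eval}_\mathsf{pk}, \mathsf{Eval}_\mathsf{ct}, \mathsf{Eval}_\mathsf{sim}$ are obtained by evaluating the circuit $C$ topologically and applying the corresponding gate rule at every node; all three algorithms are clearly deterministic polynomial-time in $|C|$, $n$, $\log q$. The main obstacle is the quantitative bound on $\beta_{\mathcal{F}}$. I would set up an induction on the depth, tracking at each level an upper bound $N_\ell$ on the operator norms $\|\mathbf{S}_w\|_{\sup}$ and the noise magnitudes $\Delta_w$ on every wire $w$ of depth $\ell$. Addition contributes an additive factor, while multiplication contributes a factor of $m$ (from $\|\mathbf{G}^{-1}(\mathbf{B})^T\|_{\sup}\le m$) plus a factor of $|x_a|\le q$ for the coefficient; unrolling the recurrence over $\tau$ levels with polynomial fan-in produces the promised function $\beta_{\mathcal{F}}=\beta_{\mathcal{F}}(n)$ (which in practice behaves like $(m\cdot q)^{O(\tau)}$ times a polynomial in circuit size), establishing (2) and the norm bound in (3).

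Finally I would argue the ``with overwhelming probability'' clause in (3) when $\mathbf{S}_1,\dots,\mathbf{S}_d \xleftarrow{\$}\{-1,1\}^{m\times m}$. At the input wires, standard concentration bounds (subgaussian tail for $\pm1$-matrices) give $\|\mathbf{S}_i\|_{\sup} \le O(\sqrt{m})$ with all but negligible probability; taking a union bound over the polynomially many circuits in $\mathcal{F}$ and propagating through the induction above replaces each worst-case $\|\mathbf{S}_i\|_{\sup}$ by its high-probability bound, which only improves $\beta_{\mathcal{F}}$. This yields algorithms satisfying (1)–(3) and completes the proof. The essential subtlety, and what I would spend the most care on, is verifying that the simulation identity $\mathbf{A}\mathbf{S}_f - f(\mathbf{x}^*)\mathbf{G} = \mathbf{B}_f$ survives intact through multiplication gates, since that is precisely what links $\mathsf{Eval}_\mathsf{sim}$ to $\mathsf{Eval}_\mathsf{pk}$ and is used later when puncturing a trapdoor at the target variable $\widehat{\mathbf{t}}$ in the security reduction.
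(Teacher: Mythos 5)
This lemma is imported from \cite{BGG+14} and the paper gives no proof of its own, so the benchmark is the original argument of Boneh et al.; your gate-by-gate construction --- the additive rules at addition gates, $\mathbf{B}_\times=-\mathbf{B}_a\mathbf{G}^{-1}(\mathbf{B}_b)$, $\mathbf{c}_\times=x_a\mathbf{c}_b-\mathbf{G}^{-1}(\mathbf{B}_b)^T\mathbf{c}_a$ and $\mathbf{S}_\times=x_a^*\mathbf{S}_b-\mathbf{S}_a\mathbf{G}^{-1}(\mathbf{B}_b)$ at multiplication gates, then induction on depth --- is exactly that argument, and the algebraic identities and the norm propagation you verify are correct. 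Two minor points: the union bound ``over the polynomially many circuits in $\mathcal{F}$'' is neither available nor needed ($\mathcal{F}$ may be exponentially large, e.g.\ the equality-test family indexed by $t^*\in\mathbb{Z}_q$, but the random event concerns only the $d$ input matrices $\mathbf{S}_i$, and once $\|\mathbf{S}_i\|_{sup}=O(\sqrt{m})$ holds the bound on $\|\mathbf{S}_f\|_{sup}$ follows deterministically for every $f$); and your per-multiplication factor $|x_a|\le q$ is coarser than the convention of Lemma~\ref{eval2}, which assumes all but one input of each multiplication gate bounded by $p<q$ to obtain the concrete $\beta_{\mathcal{F}}=O((p^{d-1}m)^\tau\sqrt{m})$ --- also, $\mathbf{G}^{-1}(\cdot)$ is plain bit decomposition and does not require the trapdoor $\mathbf{T}_{\mathbf{G}}$.
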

%Note that if $\mathbf{S}_i\xleftarrow{\$} \{-1,1\}^{m\times m}$, then $\|\mathbf{S}_i \|_{sup}=20\sqrt{m}$ (see \cite[Lemma 2.5]{BGG+14}).
 
In general, for a family $\mathcal{F}$ of functions 
represented by polynomial-size and unbounded fan-in circuits of depth $\tau$, the function $\beta_{\mathcal{F}}$ is given by the following lemma.
\begin{lemma}[{\cite[Lemma 5.3]{BGG+14}}]\label{eval2} Let $n$, $q=q(n)$, 
$m=\Theta(n\log q)$ be positive integers. 
Let $\mathcal{C}_{\lambda}$ be a family of polynomial-size arithmetic circuits of depth $\tau$ 
and $\mathcal{F}=\{f:(\mathbb{Z}_q)^d \rightarrow \mathbb{Z}_q\}$ be the set of functions $f$ that can be computed by some circuit $\mathcal{C} \in \mathcal{C}_{\lambda}$ as stated in Lemma \ref{eval}. Also, suppose that all (but possibly one)  of the input values to the multiplication gates are bounded by $p<q$. Then, 	  $\beta_{\mathcal{F}}=(\frac{p^d-1}{p-1}\cdot m)^\tau \cdot 20\sqrt{m}=O((p^{d-1}m)^\tau\sqrt{m})$.
\end{lemma}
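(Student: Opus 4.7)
The plan is to analyze the noise growth gate-by-gate in the key-homomorphic evaluation algorithms $\mathsf{Eval}_\mathsf{pk}$, $\mathsf{Eval}_\mathsf{ct}$, $\mathsf{Eval}_\mathsf{sim}$ of Lemma \ref{eval}, then solve the resulting recurrence over the depth $\tau$ of the circuit. Concretely, I would first recall how addition and multiplication gates are implemented in BGG+14: an addition gate $h=f_1+f_2$ gives $\mathbf{B}_h=\mathbf{B}_{f_1}+\mathbf{B}_{f_2}$, $\mathbf{c}_h=\mathbf{c}_{f_1}+\mathbf{c}_{f_2}$, and $\mathbf{S}_h=\mathbf{S}_{f_1}+\mathbf{S}_{f_2}$; a multiplication gate $h=f_1\cdot f_2$ uses the publicly known short preimage operator $\mathbf{G}^{-1}(\cdot)$ (with $\|\mathbf{G}^{-1}(\cdot)\|_{sup}\le m$) to form $\mathbf{B}_h=-\mathbf{B}_{f_1}\cdot \mathbf{G}^{-1}(\mathbf{B}_{f_2})$, together with $\mathbf{c}_h$ and $\mathbf{S}_h$ expressed in the same style. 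Unbounded fan-in addition and multiplication are handled by iterating these binary gates.

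Next, I would set up the base case. For an input wire $i$, we have $\mathbf{c}_i\in E_{\mathbf{s},\delta}(x_i,\mathbf{B}_i)$, so noise is at most $\delta$, and the simulated matrix $\mathbf{S}_i$ is the one sampled by the simulator: when $\mathbf{S}_i\xleftarrow{\$}\{-1,1\}^{m\times m}$, a standard sub-Gaussian concentration bound gives $\|\mathbf{S}_i\|_{sup}\le 20\sqrt{m}$ with overwhelming probability. This is where the constant $20\sqrt{m}$ in the final expression originates.

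The core step is the inductive analysis on circuit depth. Let $B_\ell$ denote the noise bound and $M_\ell$ the $\|\cdot\|_{sup}$-bound on $\mathbf{S}_f$ after level $\ell$, and assume by induction that at the $\ell$-th level the intermediate value $f(\mathbf{x}^*)$ is bounded by some value $v_\ell$. For addition with unbounded fan-in over a level, the bounds grow additively in the number of inputs. For a multiplication gate $h=f_1f_2$ with the stipulation that (all but possibly one) input is bounded by $p$, the noise of the output satisfies roughly
\begin{equation*}
\|\mathbf{e}_h\|\ \le\ p\cdot\|\mathbf{e}_{f_2}\|\ +\ \|\mathbf{G}^{-1}(\mathbf{B}_{f_2})\|_{sup}\cdot\|\mathbf{e}_{f_1}\|\ \le\ p\cdot B_\ell\ +\ m\cdot B_\ell,
\end{equation*}
and the simulation matrix satisfies an analogous bound $M_{\ell+1}\le p\cdot M_\ell + m\cdot M_\ell$. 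Combining the multiplicative factor $m$ from $\mathbf{G}^{-1}$ with the geometric-series aggregation coming from the $d$ inputs that may be multiplied together along the circuit (using that at each multiplication at most one input is unbounded, yielding value bounds summing as $1+p+p^2+\cdots+p^{d-1}=\frac{p^d-1}{p-1}$) gives the per-level blow-up $\frac{p^d-1}{p-1}\cdot m$. Iterating over the $\tau$ levels then yields $B_\tau,\,M_\tau\le \bigl(\frac{p^d-1}{p-1}\cdot m\bigr)^\tau\cdot 20\sqrt{m}$.

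The main obstacle, I expect, is the careful bookkeeping at the multiplication gate: one must simultaneously track (i) the magnitude of the value being computed (to justify the $p$-factor), (ii) the noise bound, and (iii) the operator-norm bound on $\mathbf{S}_f$, and show that these three quantities satisfy a \emph{joint} recurrence. In particular, one must verify that the restriction ``at most one input to each multiplication is unbounded'' really does prevent a doubly-exponential blow-up of values, and that the geometric sum $\frac{p^d-1}{p-1}$ is indeed the correct aggregate of per-input contributions. Once this combinatorial accounting is nailed down, the explicit bound and the asymptotic form $O((p^{d-1}m)^\tau\sqrt{m})$ follow directly from the simplification $\frac{p^d-1}{p-1}=\Theta(p^{d-1})$.
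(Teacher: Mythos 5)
The paper does not prove this lemma at all: it is imported verbatim from \cite{BGG+14} (Lemma 5.3 there), so there is no in-paper argument to compare against, only the cited source. Your reconstruction — the gate identities $\mathbf{B}_h=-\mathbf{B}_{f_1}\mathbf{G}^{-1}(\mathbf{B}_{f_2})$ with $\mathbf{S}_h=x_1\mathbf{S}_{f_2}-\mathbf{S}_{f_1}\mathbf{G}^{-1}(\mathbf{B}_{f_2})$ and $\mathbf{e}_h=x_1\mathbf{e}_{f_2}-\mathbf{G}^{-1}(\mathbf{B}_{f_2})^T\mathbf{e}_{f_1}$, the $20\sqrt{m}$ base bound for random $\{-1,1\}^{m\times m}$ matrices, the per-gate recurrence $p\cdot(\cdot)+m\cdot(\cdot)$ justified by the ``all but one input bounded by $p$'' hypothesis, and the geometric-sum aggregation $\bigl(\frac{p^d-1}{p-1}\cdot m\bigr)$ per level iterated over depth $\tau$ — is precisely the noise-growth analysis of \cite{BGG+14}, so your approach is correct and essentially identical to the cited proof (modulo the routine bookkeeping you yourself flag as remaining).
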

 
\begin{definition}[FKHE enabling functions] The tuple
$(\mathsf{Eval}_\mathsf{pk}$,   $\mathsf{Eval}_\mathsf{ct}$,  
	$ \mathsf{Eval}_\mathsf{sim})$ together with the family $\mathcal{F}$ and the function $\beta_{\mathcal{F}}=\beta_{\mathcal{F}}(n)$ in the Lemma \ref{eval}  is called $\beta_{\mathcal{F}}$-FKHE enabling for the family $\mathcal{F}$.
\end{definition}

\subsection{LWE-based DFKHE Construction} \label{dfkhe}
Our LWE-based DFKHE construction $\Pi$ is adapted from LWE--based FKHE and the key delegation mechanism, 
both of which proposed in \cite{BGG+14}. Roughly speaking,  the key delegation mechanism in the lattice setting is triggered 
using the algorithms $\mathsf{ExtBasisLeft}$ and $\mathsf{ExtBasisRight}$ and $\mathsf{RandBasis}$ in Lemma \ref{trapdoor}.
Formally, LWE-based DFKHE $\Pi$ consists of the following algorithms:
\begin{description}	
\item \underline{\textbf{Parameters}:} Let $\lambda \in \mathbb{N}$ be a security parameter.  
Set $n=n(\lambda)$, $q=q(\lambda)$ and  $d=d(\lambda)$ to be fixed such that $d<q$. 
Let $\eta \in \mathbb{N}$ be the maximum number of variables that can be delegated and $\sigma_1, \cdots, \sigma_{\eta}$ 
be Gaussian parameters. Also, we choose a constant $\epsilon \in (0,1)$,
 which is mentioned in Lemma \ref{dlwehard}. 
 The constant is used to determine the tradeoff between the security level and the efficiency of the system. Let $\mathcal{F}:=\{ f| f: ( \mathbb{Z}_q)^d \rightarrow \mathbb{Z}_q\}$ be a family of efficiently computable functions over $\mathbb{Z}_q$ that can be computed by some circuit of a family of depth $\tau$, polynomial-size arithmetic circuits $(C_{\lambda})_{\lambda\in \mathbb{N}}$. Take the algorithms  $(\mathsf{Eval}_\mathsf{pk}$,   $\mathsf{Eval}_\mathsf{ct}$,  $ \mathsf{Eval}_\mathsf{sim})$ together with a function $\beta_{\mathcal{F}}=\beta_{\mathcal{F}}(n)$ to be $\beta_{\mathcal{F}}$--FKHE enabling for $\mathcal{F}$.% \textcolor{red}{Note that, by Lemma \ref{eval2}, we have  $\beta_{\mathcal{F}}=(\frac{p^d-1}{p-1}\cdot m)^\tau \cdot 20\sqrt{m}=O((p^{d-1}m)^\tau\sqrt{m})$ for some $p<q$.}
	%===================
\item \underline{$\textsf{DFKHE.KGen}(1^\lambda, \mathcal{F} )$:} 
	For the input pair (a security parameter $\lambda \in \mathbb{N}$ and a family $\mathcal{F}$) \footnote{Here, $d$ also appears implicitly as an input.}, do the following:
	\begin{enumerate}
		\item Choose  $m=\Theta{(n\log q)}$. The plaintext space is $\mathcal{M}:= \{0,1\}^m$,  $\mathcal{T}:=\mathbb{Z}_q$. Additionally, let $\chi$ be a $\chi_0$--bounded noise distribution (i.e, its support belongs to $[-\chi_0, \chi_0]$) for which the $(n,2m, q,\chi)$--DLWE is hard. 
		\item Generate $(\textbf{A},\textbf{T}_\textbf{A}) \leftarrow \textsf{TrapGen}(n,m,q)$,  sample $\textbf{U}, \textbf{B}_1, \cdots ,\textbf{B}_d \xleftarrow{\$} \mathbb{Z}_q^{n \times m}$. %\textcolor{red}{We have $\| \widetilde{\mathbf{T}_\mathbf{A}} \|=O(\sqrt{n \log q})$.},
		\item Output  the public key $pk=\{\textbf{A},\textbf{B}_1, \cdots \textbf{B}_d , \textbf{U}\}$ and the initial secret key $sk=\{\textbf{T}_\textbf{A}\}$.
	\end{enumerate}
	
	%====================

	\item \underline{$\textsf{DFKHE.KHom}(sk, (y,f_1) )$:} 
	For the input pair (the initial secret key  $sk$ and a pair $(y, f_1) \in \mathbb{Z}_q \times \mathcal{F}$) do the following:
	\begin{enumerate}
	\item $\textbf{B}_{f_1} \leftarrow \textsf{Eval}_\textsf{pk}(f_1, (\textbf{B}_k)_{k=1}^d)$,  $\textbf{E}_{y,f_1} \leftarrow \textsf{ExtBasisLeft}([\textbf{A}|y\mathbf{G}+\textbf{B}_{f_1}], \textbf{T}_\textbf{A})$. %\textcolor{red}{We have $\|\widetilde{ \textbf{E}_{y,f_1}} \|= \|\widetilde{\mathbf{T}_\mathbf{A}} \|$.}
	\item   $\textbf{T}_{y,f_1} \leftarrow \textsf{RandBasis}([\textbf{A}|y\mathbf{G}+\textbf{B}_{f_1}],\textbf{E}_{y,f_1}, \sigma_1)$, output the secret key  $sk_{y,f_1}=\{\textbf{T}_{y,f_1}\}$.
Here, we set $\sigma_1=\omega(\beta_\mathcal{F}\cdot \sqrt{\log (2m)})$  for the security proof to work. 
	\end{enumerate}
	
		%===================
		
		\item \underline{$\textsf{DFKHE.KDel}(sk_{y,f_1,\cdots, f_{\eta-1}}, (y, f_{\eta}) )$: }
	For the input pair (the delegated secret key  $sk_{y,f_1,\cdots, f_{\eta-1}} $ and a pair $(y, f_\eta) \in \mathbb{Z}_q \times \mathcal{F}$) do the following:
	\begin{enumerate}
	\item $\textbf{B}_{f_\eta} \leftarrow \textsf{Eval}_\textsf{pk}(f_\eta, (\textbf{B}_k)_{k=1}^d)$.
	%\item $\textbf{T}_{y,f_{\eta-1}} \leftarrow sk_{y,f_1,\cdots, f_{\eta-1}}$.
	\item   $\textbf{E}_{y,f_1,\cdots, f_{\eta}} \leftarrow \textsf{ExtBasisLeft}([\textbf{A}|y\mathbf{G}+\textbf{B}_{f_1}|\cdots|y\mathbf{G}+\textbf{B}_{f_{\eta-1}}|y\mathbf{G}+\textbf{B}_{f_\eta}], \textbf{T}_{y,f_1,\cdots, f_{\eta-1}})$.
	\item  $\textbf{T}_{y,f_1,\cdots, f_{\eta}} \leftarrow \textsf{RandBasis}([\textbf{A}|y\mathbf{G}+\textbf{B}_{f_1}|\cdots|y\mathbf{G}+\textbf{B}_{f_{\eta-1}}|y\mathbf{G}+\textbf{B}_{f_\eta}], \textbf{E}_{y,f_1,\cdots, f_{\eta}}, \sigma_\eta)$.
%	\textcolor{red}{We set $\sigma_2=\| \widetilde{\textbf{T}}_{y,f_1}\| \cdot \omega(\sqrt{\log (3m)})=\sigma_1 \cdot \sqrt{2m}\cdot \omega(\sqrt{\log (3m)})=\sigma_1 \cdot \omega(\sqrt{m\log m}).$ Note that, $\| \widetilde{\textbf{T}}_{y,f_1, f_2}\| <\sigma_2 \cdot \sqrt{3m}=\sigma_1\cdot \omega(\sqrt{m^2\log m})$  by Item 5 of Lemma \ref{trapdoor}. }
	\item Output the secret key $sk_{y,f_1,\cdots, f_{\eta}}=\{\textbf{T}_{y,f_1,\cdots, f_{\eta}}\}$.\\
	 We set $\sigma_\eta=\sigma_1\cdot(\sqrt{m\log m})^{\eta-1}$ and discuss on setting parameters in details later.
	\end{enumerate}

	%===================
	
\item \underline{$\textsf{DFKHE.Enc}(\mu,  pk, \mathbf{t})$: }
	For the input consiting of (a message $\mu=(\mu_1, \cdots, \mu_m)\in \mathcal{M}$,  the public key $pk$ and ciphertext tags $\mathbf{t}=(t_1, \cdots, t_d) \in \mathcal{T}^d$), 
	perform the following steps:
	\begin{enumerate}
		\item Sample $\textbf{s} \xleftarrow{\$}\mathbb{Z}_q^{n}$, $\mathbf{e}_{\textsf{out}} , \textbf{e}_\textsf{in} \leftarrow \chi^m$, and $\textbf{S}_1, \cdots, \textbf{S}_{d} \xleftarrow{\$} \{-1,1\}^{m \times m}$.
		\item Compute $\textbf{e}\leftarrow (\textbf{I}_m|\textbf{S}_1|\cdots|\textbf{S}_{d})^T\textbf{e}_\textsf{in}=(\textbf{e}_{ \textsf{in}}^T, \textbf{e}_{1}^T, \cdots, \textbf{e}_{d}^T)^T $.
		\item Form $\textbf{H}\leftarrow [\textbf{A}|t_1\textbf{G}+\textbf{B}_1|\cdots |t_d \textbf{G}+\textbf{B}_d]$ 
		and compute $\textbf{c}=\textbf{H}^T\textbf{s}+\textbf{e} \in \mathbb{Z}_q^{(d+1)m}$ ,\\
		$\textbf{c}=[\textbf{c}_\textsf{in}|\textbf{c}_1|\cdots |\textbf{c}_d]$, where $\textbf{c}_{\textsf{in}}=\textbf{A}^T \textbf{s}+\textbf{e}_{\textsf{in}}$
		and  $\textbf{c}_{i}=(t_i\textbf{G}+\textbf{B}_i)^T \textbf{s}+\textbf{e}_{i}$ for $i\in [d]$.
		\item Compute $\textbf{c}_{\textsf{out}} \leftarrow \textbf{U}^T \textbf{s}+\textbf{e}_{\textsf{out}}+\mu \lceil \frac{q}{2} \rceil$.
		\item Output the ciphertext $(ct_\mathbf{t}= (\textbf{c}_{\textsf{in}}, \textbf{c}_1, \cdots, \textbf{c}_d, \textbf{c}_{\textsf{out}}), \mathbf{t}) $.
	\end{enumerate}
	
%===================
\iffalse
	\item \underline{$\textsf{DFKHE.Eval}(f, ct_\mathbf{t})$: }
	For the input (a ciphertext $ct_\mathbf{t}=(\textbf{c}_{\textsf{in}}, \textbf{c}_1, $ $ \cdots, \textbf{c}_{d}, c_{\textsf{out}}) $ and its associated tags $\textbf{t}=(t_1, \cdots, t_d)$, 
	and a $f_1 \in \mathcal{F}$), execute the following steps:
	\begin{enumerate}
	
		\item Output $\textbf{c}_f\leftarrow \textsf{Eval}_\textsf{ct}(f,  ((t_k, \textbf{B}_k, \textbf{c}_{k}))_{k=1}^{d})$.
	\end{enumerate}  

\fi	
	%===================
	
	%===================
	\item \underline{$\textsf{DFKHE.ExtEval}(f_1,\cdots, f_\eta, ct_\mathbf{t})$: }
	For the input (a ciphertext $ct_\mathbf{t}=(\textbf{c}_{\textsf{in}}, \textbf{c}_1, $ $ \cdots, \textbf{c}_{d}, \textbf{c}_{\textsf{out}}) $ and its associated tags $\textbf{t}=(t_1, \cdots, t_d)$, 
	and a list of functions $f_1,\cdots, f_\eta \in \mathcal{F}$), execute the following steps:
	\begin{enumerate}
	
		\item Evaluate $\textbf{c}_{f_j}\leftarrow \textsf{Eval}_\textsf{ct}(f_j,  ((t_k, \textbf{B}_k, \textbf{c}_{k}))_{k=1}^{d})$ for $j\in[\eta]$.
		\item Output the evaluated ciphertext $\textbf{c}_{f_1,\cdots, f_\eta}:=(\textbf{c}_{f_1},\cdots, \textbf{c}_{f_\eta})$.
	
	\end{enumerate}

	%===================
	
	\item \underline{$\textsf{DFKHE.Dec}(ct_\mathbf{t}, sk_{y,f_1,\cdots, f_\eta} )$:} 
	For the input (a ciphertext $ct_\mathbf{t}=(\textbf{c}_{\textsf{in}}, \textbf{c}_1, $ $ \cdots, \textbf{c}_{d}, \textbf{c}_{\textsf{out}}) $, 
	the associated tags $\textbf{t}=(t_1, \cdots, t_d)$, 
	and a delegated secret key $sk_{y,f_1,\cdots, f_\eta}$, execute the following steps:
	\begin{enumerate}
		\item If  $\exists j\in [\eta]$ s.t. $f_{j}(\mathbf{t}) \neq y$, then output $\bot$. Otherwise, go to Step 2.
		%\item Parse $sk_{\eta}:=(\textbf{T}_{eq_{\eta}},(t^*_1,\cdots, t^*_{\eta}), (\textbf{B}_{eq_{1}}, \cdots, \textbf{B}_{eq_{\eta}}))$.
		\item Sample $\textbf{R} \leftarrow \textsf{SampleD}([\textbf{A} |y\mathbf{G}+\textbf{B}_{f_{1}}|\cdots |y\mathbf{G}+\textbf{B}_{f_{\eta}}], \textbf{T}_{y,f_1,\cdots, f_{\eta}}, \textbf{U}, \sigma_\eta)$.
		\item Evaluate  $(\textbf{c}_{f_1},\cdots, \textbf{c}_{f_\eta}) \leftarrow \textsf{DFKHE.ExtEval}(f_1,\cdots, f_\eta, ct_\mathbf{t})$.
		\item Compute $\bar{\mu}:=(\bar{\mu}_1,\cdots, \bar{\mu}_m) \leftarrow \textbf{c}_{\textsf{out}}-\textbf{R}^T(\textbf{c}_\textsf{in}|\textbf{c}_{f_{1}}|\cdots|\textbf{c}_{f_\eta})$.
		\item For $\ell \in [m]$,  if $ |\bar{\mu}_\ell | <q/4$ then output $\mu_\ell=0$;  otherwise, output $\mu_\ell=1$.
	\end{enumerate}  
\end{description}

%=============================

In the following, we will demonstrate the correctness and the security of the LWE-based DFKHE $\Pi$.

\begin{theorem}[Correctness of $\Pi$] \label{theo2}
	The proposed DFKHE $\Pi$ is correct 
	if the condition \begin{equation}\label{key}
	(\eta+1)^2\cdot \sqrt{m}\cdot \omega( (\sqrt{m\log m})^{\eta})\cdot \beta_{\mathcal{F}}^2+2<\frac{1}{4}(q/\chi_0)
	\end{equation} holds, assumming  that $f_j(\mathbf{t})=y$ for all $j\in [\eta]$. 
\end{theorem}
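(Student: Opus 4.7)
The plan is to unroll the decryption algorithm, isolate the plaintext-carrying term, and bound the residual noise coordinate-wise by $q/4$. First I would fix notation: set $\mathbf{F}:=[\mathbf{A}\,|\,y\mathbf{G}+\mathbf{B}_{f_1}\,|\,\cdots\,|\,y\mathbf{G}+\mathbf{B}_{f_\eta}]$, the matrix whose trapdoor $\mathbf{T}_{y,f_1,\ldots,f_\eta}$ the decryptor holds. By the specification of $\mathsf{SampleD}$ (Lemma~\ref{trapdoor}(4)), the matrix $\mathbf{R}$ produced in Step~2 of decryption satisfies $\mathbf{F}\mathbf{R}\equiv\mathbf{U}\pmod{q}$, hence $\mathbf{R}^T\mathbf{F}^T\equiv\mathbf{U}^T$. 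Next, invoking the hypothesis $f_j(\mathbf{t})=y$ for all $j\in[\eta]$ together with property~(2) of Lemma~\ref{eval} applied to each $\mathbf{c}_{f_j}$ produced by $\mathsf{DFKHE.ExtEval}$, I can write $\mathbf{c}_{f_j}=(y\mathbf{G}+\mathbf{B}_{f_j})^T\mathbf{s}+\mathbf{e}_{f_j}$ with $\|\mathbf{e}_{f_j}\|<\delta\cdot\beta_{\mathcal{F}}$ for the initial noise bound $\delta$. Stacking these with $\mathbf{c}_{\mathsf{in}}=\mathbf{A}^T\mathbf{s}+\mathbf{e}_{\mathsf{in}}$ yields $\mathbf{F}^T\mathbf{s}+\mathbf{E}$, where $\mathbf{E}$ is the vertical concatenation of $\mathbf{e}_{\mathsf{in}},\mathbf{e}_{f_1},\ldots,\mathbf{e}_{f_\eta}$. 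Substituting into the definition of $\bar{\mu}$ and cancelling $\mathbf{U}^T\mathbf{s}$ leaves
\[
\bar{\mu}=\mu\lceil q/2\rceil+\mathbf{e}_{\mathsf{out}}-\mathbf{R}^T\mathbf{E}.
\]

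The heart of the argument is then a coordinate-wise estimate of $\mathbf{e}_{\mathsf{out}}-\mathbf{R}^T\mathbf{E}$. The term $\mathbf{e}_{\mathsf{out}}$ contributes at most $\chi_0$ per entry by $\chi_0$-boundedness of $\chi$. For $\|\mathbf{E}\|$, I would propagate $\|\mathbf{e}_{\mathsf{in}}\|\leq\chi_0\sqrt{m}$ through the encoding $\mathbf{e}_i=\mathbf{S}_i^T\mathbf{e}_{\mathsf{in}}$ (using $\|\mathbf{S}_i\|_{\sup}=O(\sqrt{m})$ with overwhelming probability) to get an initial bound $\delta=O(m\chi_0)$, and then amplify by $\beta_{\mathcal{F}}$ via Lemma~\ref{eval}(2). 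For $\|\mathbf{R}^T\|_{\sup}$, Lemma~\ref{trapdoor}(4) gives $\|\mathbf{R}^T\|_{\sup}\leq\sigma_\eta\cdot\sqrt{(\eta+1)m\cdot m}$, and the chosen $\sigma_\eta=\sigma_1(\sqrt{m\log m})^{\eta-1}$ with $\sigma_1=\omega(\beta_{\mathcal{F}}\sqrt{\log 2m})$ contributes the extra $\omega((\sqrt{m\log m})^\eta)\cdot\beta_{\mathcal{F}}$ factor. Combining via $\|\mathbf{R}^T\mathbf{E}\|_\infty\leq\|\mathbf{R}^T\|_{\sup}\|\mathbf{E}\|$ and absorbing polynomial constants gives a total noise dominated by $(\eta+1)^2\sqrt{m}\,\omega((\sqrt{m\log m})^\eta)\,\beta_{\mathcal{F}}^2\,\chi_0+2\chi_0$. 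The hypothesis \eqref{key}, multiplied by $\chi_0$, forces this to be strictly below $q/4$, so Step~5 of decryption recovers each bit $\mu_\ell$ correctly.

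The main obstacle is the parameter accounting across $\eta$ successive delegations. Each invocation of $\mathsf{ExtBasisLeft}$ preserves the Gram--Schmidt norm (Lemma~\ref{trapdoor}(3)), but the subsequent $\mathsf{RandBasis}$ inflates it by roughly $\sigma\sqrt{m}$ (Lemma~\ref{trapdoor}(5)), so the quality $\|\widetilde{\mathbf{T}_{y,f_1,\ldots,f_\eta}}\|$ grows geometrically in $\eta$. One must verify simultaneously that $\sigma_\eta$ at every level meets the lower bound $\|\widetilde{\mathbf{T}_{y,f_1,\ldots,f_{\eta-1}}}\|\cdot\omega(\sqrt{\log m})$ required for correctness of $\mathsf{RandBasis}$ and $\mathsf{SampleD}$, while remaining small enough that the cumulative noise still respects \eqref{key}. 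This double-sided constraint is precisely what produces the $(\eta+1)^2$ and $(\sqrt{m\log m})^\eta$ factors appearing in the theorem's hypothesis.
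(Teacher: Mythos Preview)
Your proposal is correct and follows essentially the same route as the paper's proof: unroll decryption to obtain $\bar{\mu}=\mu\lceil q/2\rceil+\mathbf{e}_{\mathsf{out}}-\mathbf{R}^T(\mathbf{e}_{\mathsf{in}}|\mathbf{e}_{f_1}|\cdots|\mathbf{e}_{f_\eta})$, then bound the residual via $\|\mathbf{R}^T\|_{\sup}$ (using Lemma~\ref{trapdoor}(4) and the formula for $\sigma_\eta$) times $\|(\mathbf{e}_{\mathsf{in}}|\mathbf{e}_{f_1}|\cdots|\mathbf{e}_{f_\eta})\|$ (using Lemma~\ref{eval}(2)). If anything, you are more careful than the paper in tracking the initial noise $\delta$ through $\mathbf{e}_i=\mathbf{S}_i^T\mathbf{e}_{\mathsf{in}}$; the paper simply writes $\Delta<\chi_0\cdot\beta_{\mathcal{F}}$ without accounting for the $\mathbf{S}_i$ blow-up, whereas your $\delta=O(m\chi_0)$ is the honest bound---the discrepancy is absorbed into the loose asymptotic form of condition~\eqref{key}.
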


\begin{proof}
%To argue the correctness of the proposed scheme, 	it suffices to compute $\bar{\mu}= \mathbf{c}_{\textsf{out}}-\textbf{R}^T(\textbf{c}_\textsf{in}|\textbf{c}_{f_{1}}|\cdots|\textbf{c}_{f_{\eta}})$,	which leads to correct decryption. 
We have
	$ \bar{\mu}= \textbf{c}_{\textsf{out}}-\textbf{R}^T(\textbf{c}_\textsf{in}|\textbf{c}_{f_{1}}|\cdots|\textbf{c}_{f_{\eta}})=\mu \lceil \frac{q}{2} \rceil+ \textbf{e}_{\textsf{out}} -\textbf{R}^T(\textbf{e}_{\textsf{in}}|\textbf{e}_{f_1}|\cdots|\textbf{e}_{f_\eta}).$
	\iffalse
	\begin{eqnarray*}
		\bar{\mu}&=& \textbf{c}_{\textsf{out}}-\textbf{R}^T(\textbf{c}_\textsf{in}|\textbf{c}_{f_{1}}|\cdots|\textbf{c}_{f_{\eta}})\\
	%	&=& \textbf{U}^T \textbf{s}+\textbf{e}_{\textsf{out}}+\mu \lceil \frac{q}{2} \rceil-\textbf{R}^T[\textbf{A}^T|(f_{1}(\textbf{t})\cdot\textbf{G}+\textbf{B}_{f_1})^T|\cdots \\
	%	&& |(f_{\eta}(\textbf{t})\cdot \textbf{G}+\textbf{B}_{f_\eta})^T]\textbf{s} -\textbf{R}^T(\textbf{e}_{\textsf{in}}|\textbf{e}_{f_1}|\cdots|\textbf{e}_{f_\eta})\\
	%	&=&\textbf{U}^T \textbf{s}-\textbf{R}^T[\textbf{A}^T|(y\mathbf{G}+\textbf{B}_{f_1})^T|\cdots|(y\mathbf{G}+\textbf{B}_{f_\eta})^T]\textbf{s}+\mu \lceil \frac{q}{2}\rceil +\textbf{e}_{\textsf{out}}-\textbf{R}^T(\textbf{e}_{\textsf{in}}|\textbf{e}_{f_1}|\cdots|\textbf{e}_{f_\eta})\\
		&=& \mu \lceil \frac{q}{2} \rceil+ \textbf{e}_{\textsf{out}} -\textbf{R}^T(\textbf{e}_{\textsf{in}}|\textbf{e}_{f_1}|\cdots|\textbf{e}_{f_\eta}).
	\end{eqnarray*}
	The third equality is because  $f_{j}(\textbf{t})=y$  for all $j \in [\eta]$.
		Clearly,  if there exists  $j\in [\eta]$ such that $f_j (\mathbf{t})\neq y$, then we cannot obtain the third equality, hence the decryption fails. 
		 Let $\overline{\textbf{e}}=(\textbf{e}_{\textsf{in}}|\textbf{e}_{f_1}|\cdots|\textbf{e}_{f_\eta})$.
		\fi
Next, we evaluate the norm of $e_{\textsf{out}} -\textbf{R}^T(\textbf{e}_{\textsf{in}}|\textbf{e}_{f_1}|\cdots|\textbf{e}_{f_\eta})$. Since $\textbf{c}_{f_j} \in E_{\textbf{s}, \Delta}(y, \textbf{B}_{f_\eta})$, for all $j \in [\eta]$, where $\Delta<\chi_0\cdot \beta_{\mathcal{F}}$, then $\|(\textbf{e}_{\textsf{in}}|\textbf{e}_{f_1}|\cdots|\textbf{e}_{f_\eta})\| \leq \eta\cdot \Delta+\chi_0\leq (\eta\cdot \beta_{\mathcal{F}}+1)\chi_0.$ 	
	Then
	\begin{equation*}
	\begin{split}
	\|\mathbf{e}_{\textsf{out}} -\textbf{R}^T(\textbf{e}_{\textsf{in}}|\textbf{e}_{f_1}|\cdots|\textbf{e}_{f_\eta})\|_\infty &\leq \|\mathbf{e}_{\textsf{out}}\|_\infty+ \| \textbf{R}^T\|_{sup} \cdot\|(\textbf{e}_{\textsf{in}}|\textbf{e}_{f_1}|\cdots|\textbf{e}_{f_\eta})\|\\
%	&\leq \|\mathbf{e}_{\textsf{out}}\|_\infty+ \| \textbf{R}^T\|_{sup} \cdot\|(\textbf{e}_{\textsf{in}}|\textbf{e}_{f_1}|\cdots|\textbf{e}_{f_\eta})\| \\
%	&\leq\chi_0+ (\eta+1)m\cdot \sigma_\eta\cdot(\sqrt{m\log m})^{\eta-1}\cdot (\eta\cdot \beta_{\mathcal{F}}+1)\chi_0\\
%	&\leq\chi_0+ (\eta+1)m\cdot \omega(\beta_{\mathcal{F}}\cdot \sqrt{\log m})\cdot(\sqrt{m\log m})^{\eta-1}\cdot (\eta\cdot \beta_{\mathcal{F}}+1)\chi_0,\\
%&\leq 2\chi_0+ (\eta+1)m\cdot \omega(\beta_{\mathcal{F}}\cdot \sqrt{\log m})\cdot(\sqrt{m\log m})^\eta\cdot \eta\cdot \beta_{\mathcal{F}}\cdot \chi_0,\\
	&\leq ((\eta+1)^2\cdot \sqrt{m}\cdot \omega( (\sqrt{m\log m})^{\eta})\cdot \beta_{\mathcal{F}}^2+2)\cdot \chi_0,\\
\end{split}
	\end{equation*}
where $\| \textbf{R}^T\|_{sup} \leq (\eta+1)m\sigma_\eta$ by Item 4 of Lemma \ref{trapdoor} and $\sigma_\eta=\sigma_1\cdot(\sqrt{m\log m})^{\eta-1}=\omega(\beta_{\mathcal{F}}\cdot \sqrt{\log m})\cdot(\sqrt{m\log m})^{\eta-1}$.
 
By choosing parameters such that $((\eta+1)^2\cdot \sqrt{m}\cdot \omega( (\sqrt{m\log m})^{\eta})\cdot \beta_{\mathcal{F}}^2+2)\cdot \chi_0<q/4$, which yields Equation \eqref{key},
	then the decryption is successful.
 \qed
\end{proof}

%Basing on the DLWE assumption, we prove that the proposed DFKHE is indistinguishable under chosen plaintext attacks. Formally, 
\begin{theorem}[IND-sVAR-CPA of $\Pi$]  \label{varcpa}
	Assuming the hardness of $(n,2m,q,\chi)$--$\mathsf{DLWE}$, the proposed DFKHE $\Pi$ is IND-sVAR-CPA.
\end{theorem}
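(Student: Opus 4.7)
The plan is to prove IND-sVAR-CPA security of $\Pi$ through a sequence of hybrid games, in the classical BGG+14 style, relying on Lemma~\ref{eval} (the simulated evaluator $\mathsf{Eval}_\mathsf{sim}$), Lemma~\ref{lhl} (leftover hash), and the trapdoor tools in Lemma~\ref{trapdoor}.

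\textbf{Game 0.} The real $\mathsf{IND}$-$\mathsf{sVAR}$-$\mathsf{CPA}$ game, in which the challenger receives the target variable $\widehat{\mathbf{t}}=(\widehat{t}_1,\ldots,\widehat{t}_d)$ from $\mathcal{B}$ and then answers delegated key queries and a single challenge ciphertext honestly.

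\textbf{Game 1.} The challenger samples $\mathbf{S}_1,\ldots,\mathbf{S}_d\xleftarrow{\$}\{-1,1\}^{m\times m}$ and programs $\mathbf{B}_i:=\mathbf{A}\mathbf{S}_i-\widehat{t}_i\mathbf{G}$ instead of sampling $\mathbf{B}_i$ uniformly. At encryption time, the same $\mathbf{S}_i$'s are reused in producing $\mathbf{e}=(\mathbf{I}_m|\mathbf{S}_1|\cdots|\mathbf{S}_d)^T\mathbf{e}_{\mathsf{in}}$, which is exactly consistent with the scheme. By Lemma~\ref{lhl}, the joint distribution $(\mathbf{A},\mathbf{A}\mathbf{S}_i,\mathbf{S}_i^T\mathbf{e}_{\mathsf{in}})$ is statistically close to $(\mathbf{A},\mathbf{B}_i,\mathbf{S}_i^T\mathbf{e}_{\mathsf{in}})$ for each $i$, so Games~0 and~1 are statistically indistinguishable.

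\textbf{Game 2.} The challenger samples $\mathbf{A}\xleftarrow{\$}\mathbb{Z}_q^{n\times m}$ directly (dropping $\mathbf{T}_\mathbf{A}$), and uses the publicly known $\mathbf{T}_\mathbf{G}$ together with $\mathsf{Eval}_\mathsf{sim}$ to simulate key queries. By Item~1 of Lemma~\ref{trapdoor}, this is negligibly close to Game~1. For a query $(y,f_1,\ldots,f_k)$ with the guarantee that some $j^\star\in[k]$ satisfies $f_{j^\star}(\widehat{\mathbf{t}})\ne y$, the challenger first computes $\mathbf{S}_{f_{j^\star}}\leftarrow\mathsf{Eval}_\mathsf{sim}(f_{j^\star},((\widehat{t}_i,\mathbf{S}_i))_{i=1}^d,\mathbf{A})$, which by Lemma~\ref{eval} satisfies $\mathbf{A}\mathbf{S}_{f_{j^\star}}-f_{j^\star}(\widehat{\mathbf{t}})\mathbf{G}=\mathbf{B}_{f_{j^\star}}$ with $\|\mathbf{S}_{f_{j^\star}}\|_{sup}<\beta_\mathcal{F}$. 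Hence
\[
[\,\mathbf{A}\ \vert\ y\mathbf{G}+\mathbf{B}_{f_{j^\star}}\,]=[\,\mathbf{A}\ \vert\ \mathbf{A}\mathbf{S}_{f_{j^\star}}+(y-f_{j^\star}(\widehat{\mathbf{t}}))\mathbf{G}\,],
\]
and because $y-f_{j^\star}(\widehat{\mathbf{t}})\ne 0$ is invertible modulo $q$, $\mathsf{ExtBasisRight}$ converts $\mathbf{T}_\mathbf{G}$ into a trapdoor of this $2m$-column matrix with Gram--Schmidt norm bounded by $O(\beta_\mathcal{F})$. Append the remaining columns $y\mathbf{G}+\mathbf{B}_{f_j}$ for $j\ne j^\star$ and apply $\mathsf{ExtBasisLeft}$ to preserve the norm, then $\mathsf{RandBasis}$ with parameter $\sigma_k$ to randomize. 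The choice $\sigma_k=\sigma_1\cdot(\sqrt{m\log m})^{k-1}=\omega(\beta_\mathcal{F}\sqrt{\log m})\cdot(\sqrt{m\log m})^{k-1}$ ensures $\sigma_k$ exceeds the Gram--Schmidt bound required by $\mathsf{RandBasis}$, so the simulated $\mathbf{T}_{y,f_1,\ldots,f_k}$ is statistically close to the real distribution produced by honest $\mathsf{KHom}/\mathsf{KDel}$ chains.

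\textbf{Game 3.} Replace $(\mathbf{A}^T\mathbf{s}+\mathbf{e}_{\mathsf{in}},\mathbf{U}^T\mathbf{s}+\mathbf{e}_{\mathsf{out}})$ in the challenge ciphertext by uniform $(\mathbf{c}^\star_{\mathsf{in}},\mathbf{c}^\star_{\mathsf{out}})\xleftarrow{\$}\mathbb{Z}_q^m\times\mathbb{Z}_q^m$. Any distinguisher between Game~2 and Game~3 yields a $(n,2m,q,\chi)$-$\mathsf{DLWE}$ distinguisher: the reduction embeds the LWE matrix into $[\mathbf{A}\,\vert\,\mathbf{U}]$ and obtains the challenge as its LWE samples, while still being able to answer key queries by the mechanism of Game~2 (which no longer uses $\mathbf{T}_\mathbf{A}$). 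The remaining ciphertext components $\mathbf{c}_i=(\widehat{t}_i\mathbf{G}+\mathbf{B}_i)^T\mathbf{s}+\mathbf{e}_i=(\mathbf{A}\mathbf{S}_i)^T\mathbf{s}+\mathbf{S}_i^T\mathbf{e}_{\mathsf{in}}=\mathbf{S}_i^T\mathbf{c}_{\mathsf{in}}$ are computed deterministically from $\mathbf{c}_{\mathsf{in}}$ and $\mathbf{S}_i$, so replacing $\mathbf{c}_{\mathsf{in}}$ with uniform makes the whole challenge ciphertext information-theoretically independent of the bit $b$. Therefore $\mathcal{B}$'s advantage in Game~3 is zero, and combining the transitions yields $\mathsf{Adv}_\Pi^{\mathsf{IND}\text{-}\mathsf{sVAR}\text{-}\mathsf{CPA}}(\mathcal{B})\le\mathsf{negl}(\lambda)$.

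\textbf{Main obstacle.} The delicate part is Game~2: simulating every delegated key $\mathsf{dfkhe}.sk_{y,f_1,\ldots,f_k}$ without $\mathbf{T}_\mathbf{A}$, while keeping the output distribution statistically close to the real chain $\mathsf{KHom}\to\mathsf{KDel}\to\cdots\to\mathsf{KDel}$. This forces a careful bookkeeping between the growth rate of the Gram--Schmidt bound under successive $\mathsf{ExtBasisLeft}/\mathsf{RandBasis}$ applications and the Gaussian widths $\sigma_k$; the parameters above are chosen precisely so that the simulated trapdoor at every level $k$ is indistinguishable from the honest one, which in turn is what makes the DLWE reduction in Game~3 sound.\qed
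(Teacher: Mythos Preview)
Your proposal is correct and follows essentially the same four-game hybrid as the paper's own proof: program $\mathbf{B}_i=\mathbf{A}\mathbf{S}_i-\widehat{t}_i\mathbf{G}$ via leftover hash, switch $\mathbf{A}$ to uniform and simulate key queries through $\mathsf{Eval}_\mathsf{sim}$ and the $\mathbf{G}$-trapdoor, then invoke $(n,2m,q,\chi)$-$\mathsf{DLWE}$ on the pair $(\mathbf{A},\mathbf{U})$. The only noteworthy deviation is in Game~2: the paper assumes w.l.o.g.\ that the \emph{last} function satisfies $f_k(\widehat{\mathbf{t}})\neq y$ and calls a single $\mathsf{ExtBasisRight}$ on the full matrix $[\mathbf{A}\,|\,\mathbf{A}\widehat{\mathbf{S}}_{f_1}+(y-f_1(\widehat{\mathbf{t}}))\mathbf{G}\,|\cdots|\,\mathbf{A}\widehat{\mathbf{S}}_{f_k}+(y-f_k(\widehat{\mathbf{t}}))\mathbf{G}]$, whereas you pick an arbitrary index $j^\star$, build a trapdoor for the two-block matrix $[\mathbf{A}\,|\,y\mathbf{G}+\mathbf{B}_{f_{j^\star}}]$ via $\mathsf{ExtBasisRight}$, and then grow it with $\mathsf{ExtBasisLeft}$; both routes yield a basis with Gram--Schmidt norm $O(\beta_{\mathcal{F}})$ and are smoothed by $\mathsf{RandBasis}$ at width $\sigma_k$, so they are equivalent up to an implicit column permutation that you should mention explicitly.
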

\begin{proof} %The proof for the IND-sVAR-CPA security for $\Pi$ is adapted from the one for the LWE--based FKHE scheme in \cite[Section 4]{BGG+14} and the one for the DLWE--based delegatable ABE \cite[Section 5]{BGG+14}.  
	The proof consists of a sequence of four games, in which the first Game 0 is 
	the original  $\mathsf{IND}$-$\mathsf{sVAR}$-$\mathsf{CPA}^{\mathsf{sel},\mathcal{A}}_{\Psi}$ game.
	The last game chooses the challenge ciphertext uniformly at random.
	Hence, the advantage of the adversary in the last game is zero.
	The games 2 and 3 are indistinguishable thanks to a reduction from the DLWE hardness. 
	\begin{description}
		\item  \textbf{Game 0.} This is the original  $\mathsf{IND}$-$\mathsf{sVAR}$-$\mathsf{CPA}^{\mathsf{sel},\mathcal{A}}_{\Psi}$ 
		game being played by an adversary $\mathcal{A}$ and a challenger. 
		At the initial phase, $\mathcal{A}$ announces a target variable  $\widehat{\mathbf{t}}=(\widehat{t_1}, \cdots, \widehat{t_d})$. 
		Note that, the challenger has to reply delegate key queries DKQ$(y, f_1,\cdots, f_k)$. However, if $(y,(f_1,\cdots, f_k)) \in \mathbb{Z}_q\times \mathcal{F}^k$ such that $f_1(\widehat{\mathbf{t}})=\cdots=f_1(\widehat{\mathbf{t}})=y$ then the query will be aborted.
		
		At the setup phase, the challenger generates $pk=\{\textbf{A},\textbf{B}_1, \cdots \textbf{B}_d , \textbf{U}\}$, 
		the initial secret key $sk=\{\textbf{T}_\textbf{A}\}$, where $\textbf{B}_1, \cdots \textbf{B}_d \xleftarrow{\$} \mathbb{Z}_q^{n \times m}$, 
		$\textbf{U} \xleftarrow{\$} \mathbb{Z}_q^{n\times m}$, $(\textbf{A}, \textbf{T}_\textbf{A})$ $\leftarrow$ $\textsf{TrapGen}(n,m,q)$. 
		The challenger then sends $pk$ to the adversary, while it keeps $sk$ secret. 
		Also, in order to produce the challenge ciphertext $\widehat{ct}$ in the challenge phase, 
		$\widehat{\textbf{S}}_1, \cdots, \widehat{\textbf{S}}_{d} \in \{-1,1\}^{m \times m}$ are generated (Step 2 of \textsf{DFKHE.Enc}). 
		\item \textbf{Game 1.} This game slightly changes the way $\textbf{B}_0, \cdots \textbf{B}_d$ are generated in the setup phase. 
		Instead in the challenge phase, $\widehat{\textbf{S}}_1, \cdots, \widehat{\textbf{S}}_{d} \in \{-1,1\}^{m \times m}$ are sampled in the setup phase.
		This allows to compute $\textbf{B}_i:=\textbf{A}\widehat{\textbf{S}}_i-\widehat{t_i}\textbf{G}$ for $i\in [d]$.
		The rest of the game is the same as Game 0. \\
		Game 1 and Game 0 are indistinguishable thanks to the leftover hash lemma (i.e., Lemma \ref{lhl}).
		\item  \textbf{Game 2.} In this game,  the matrix $\textbf{A}$ is not generated by \textsf{TrapGen} 
		but chosen uniformly at random from $\mathbb{Z}_q^{n \times m}$. 
		The matrices $\textbf{B}_1, \cdots \textbf{B}_d$ are constructed as in Game 1. 
		The secret key is $sk_0=\{\textbf{T}_\textbf{G}\}$ instead. 
		
		The challenger replies to a delegated key query DKQ$(y, f_1,\cdots, f_k)$ as follows:
		
		\begin{enumerate}
			\item If $f_1(\widehat{\mathbf{t}})=f_k(\widehat{\mathbf{t}})=y$, the challenger aborts and restarts the game until there exists at least one $f_j(\widehat{\mathbf{t}})\neq y$.  Without loss of generality, we can assume that $f_k(\widehat{\mathbf{t}})\neq y$.
			\item For all $i\in [k]$, compute
			$\widehat{\textbf{S}}_{f_i}\leftarrow \textsf{Eval}_\textsf{sim}(f_{i}, ((\widehat{t_j},\widehat{\textbf{S}_j}))_{j=1}^d , \textbf{A})$, and let  $\textbf{B}_{f{i}}=\textbf{A}\widehat{\textbf{S}}_{f_i}-f_{i}(\widehat{\mathbf{t}})\textbf{G}$. Remark that, $\textbf{B}_{f_{1}}=\textsf{Eval}_\textsf{pk}(f_{i}, (\textbf{B}_j)_{j=1}^d)$. For choosing Gaussian parameters, note that  $\|\widehat{\textbf{S}}_{f_i}\|_{sup} \leq \beta_{\mathcal{F}}$ due to Item 3 of Lemma \ref{eval}.
			\item $\textbf{E}_{y,f_1, \cdots, f_k} \leftarrow \textsf{ExtBasisRight}([\textbf{A}|\textbf{A}\widehat{\textbf{S}}_{f_1}+(y-f_1(\widehat{\mathbf{t}}))\textbf{G}|\cdots |\textbf{A}\widehat{\textbf{S}}_{f_k}+(y-f_k(\widehat{\mathbf{t}}))\textbf{G}], \textbf{T}_{\textbf{G}}).$ Note that,  $\| \textbf{E}_{y,f_1, \cdots, f_k}\| \leq \| \widetilde{\mathbf{T}_\mathbf{G}}\|(1+\|\mathbf{S}_{f_k}\|_{sup})=\sqrt{5}(1+\beta_{\mathcal{F}})$ for all $k \in [\eta]$ by Item 2 of Lemma \ref{trapdoor}. 
			\item $\textbf{T}_{y,f_1, \cdots, f_k} \leftarrow \textsf{RandBasis}([\textbf{A}|\textbf{A}\widehat{\textbf{S}}_{f_1}+(y-f_1(\widehat{\mathbf{t}}))\textbf{G}|\cdots |\textbf{A}\widehat{\textbf{S}}_{f_k}+(y-f_k(\widehat{\mathbf{t}}))\textbf{G}],$ $ \textbf{E}_{y,f_1, \cdots, f_k} , \sigma_k).$  
			%\textcolor{red}{Note that,  $\sigma_k=\| \widetilde{\textbf{E}}_{y,f_1, \cdots, f_k}\|\cdot\omega(\sqrt{\log ((k+1)m)})=\omega(\beta_\mathcal{F}\cdot \sqrt{\log ((k+1)m)})=\omega(\beta_\mathcal{F}\cdot \sqrt{\log m})$  and $\| \widetilde{\textbf{T}}_{y,f_1, \cdots, f_k}\| <\sigma_k \cdot \sqrt{(k+1)m}$ for all $k \in [\eta]$ by Item 5 of Lemma \ref{trapdoor} }
			\item Return $sk_{y,f_1, \cdots, f_k}:=\{\textbf{T}_{y, f_1, \cdots, f_k} \}$.
		\end{enumerate}			
		Game 2 and Game 1 are indistinguishable. The reason is that the distributions of $\textbf{A}$'s in both games are statistically close and that the challenger's response to the adversary's query is also the output of \textsf{RandBasis}.
		
		\item  \textbf{Game 3.} This game is similar to Game 2, 
		except that the challenge ciphertext $\widehat{ct}$ %=(\textbf{c}_{\textsf{in}}, \textbf{c}_1, \cdots, \textbf{c}_{d}, \textbf{c}_{\textsf{out}})$ 
		is chosen randomly. 
		Therefore, the advantage of the adversary $\mathcal{A}$ in Game 3 is zero.\\
		Now we show that Games 2 and 3 are indistinguishable using a reduction from DLWE.
		
		\item \textbf{Reduction from DLWE.} Suppose that $\mathcal{A}$ can distinguish Game 2 from Game 3 
		with a non-negligible advantage. 
		Using $\mathcal{A}$, we construct a DLWE solver $\mathcal{B}$. The reduction is as follows:
		\begin{itemize}
			\item $(n,2m,q,\chi)$--\textbf{DLWE instance.} $\mathcal{B}$ is given a
			$\textbf{F} \xleftarrow{\$}\mathbb{Z}_q^{n \times 2m}$, and a vector $\mathbf{c}\in \mathbb{Z}_q^{2m}$,
			where  either (i) $\textbf{c}$ is random or 
			(ii) $\textbf{c}$ is in the LWE form
			$\textbf{c}=\textbf{F}^T \textbf{s}+\textbf{e},$
			
			for some random vector $\textbf{s}\in \mathbb{Z}_q^{n}$ and $\textbf{e} \leftarrow \chi^{2m}$. 
			The goal of $\mathcal{B}$ is to decide whether  $\textbf{c}$ is random or generated from LWE.
			
			\item \textbf{Initial.} $\mathcal{B}$ now parses   $[\mathbf{c}_{\textsf{in}}^T|\mathbf{c}_{\textsf{out}}^T]^T \leftarrow \mathbf{c}$, where $\mathbf{c}_{\textsf{in}}, \mathbf{c}_{\textsf{out}}\in \mathbb{Z}_q^{m}$, $[\mathbf{e}_{\textsf{in}}^T|\mathbf{e}_{\textsf{out}}^T]^T \leftarrow \mathbf{e}$, where $\mathbf{e}_{\textsf{in}}, \mathbf{e}_{\textsf{out}}\leftarrow \chi^{m}$, and  $[\mathbf{A} |\mathbf{U}] \leftarrow \mathbf{F}$, where $\mathbf{A}, \textbf{U}\in \mathbb{Z}_q^{n \times m}$. That is,
			\begin{equation}\label{key89}
			\textbf{c}_{\textsf{in}}=\textbf{A}^T \textbf{s}+\textbf{e}_{\textsf{in}},	 \quad \textbf{c}_{\textsf{out}}=\textbf{U}^T \textbf{s}+\textbf{e}_{\textsf{out}}.
			\end{equation}
			
			Now
			$\mathcal{B}$ calls $\mathcal{A}$ to get the target variable $\widehat{\mathbf{t}}=(\widehat{t_1}, \cdots, \widehat{t_d})$ to be challenged.

			\item \textbf{Setup.}   
			$\mathcal{B}$ generates the keys as in Game 2. 
			That is,  $\widehat{\textbf{S}}_1, \cdots, \widehat{\textbf{S}}_{d} \xleftarrow{\$} \{-1,1\}^{m \times m}$ and 
			$\textbf{B}_i:=\textbf{A}\widehat{\textbf{S}_i}-\widehat{t_i}\textbf{G}$  for $i\in [d]$. 
			Finally, $\mathcal{B}$ sends $\mathcal{A}$ the public key $pk=(\textbf{A}, \textbf{B}_1, \cdots, \textbf{B}_d, \textbf{U})$. Also, $\mathcal{B}$ 
			keeps $sk=\{\textbf{T}_\textbf{G}\}$ as the initial secret key.
			\item \textbf{Query.} Once $\mathcal{A}$ makes a  delegated key query, $\mathcal{B}$ replies as in Game 2.
			\item \textbf{Challenge.} Once $\mathcal{A}$ submits two messages $\mu_0$ and $\mu_1$, $\mathcal{B}$ chooses uniformly at random $b \xleftarrow{\$} \{0,1\}$, then computes 
			$\widehat{\textbf{c}}\leftarrow [\textbf{I}_m|\widehat{\textbf{S}}_1|\cdots|\widehat{\textbf{S}}_{d}]^T\textbf{c}_\textsf{in} \in \mathbb{Z}_q^{(d+1)m}$  and $\widehat{\textbf{c}}_{\textsf{out}} \leftarrow \textbf{c}_{\textsf{out}}+\mu_b \lceil \frac{q}{2} \rceil \in \mathbb{Z}_q$.
			\begin{itemize}
				\item  Suppose $\textbf{c}$ is generated by LWE, 
				i.e.,  $\textbf{c}_{\textsf{in}}$, $ \textbf{c}_{\textsf{out}}$ satisfy Equation \eqref{key89}. In the \textsf{DFKHE.Enc} algorithm,
				 $\textbf{H}=[\textbf{A}|\widehat{t_1}\textbf{G}+\textbf{B}_1|\cdots |\widehat{t_d} \textbf{G}+\textbf{B}_d]=[\textbf{A}|\textbf{A}\widehat{\textbf{S}}_1|\cdots|\textbf{A}\widehat{\textbf{S}}_d]$. 
				 Then 
				$$
				\widehat{\textbf{c}}= [\textbf{I}_m|\widehat{\textbf{S}}_1|\cdots|\widehat{\textbf{S}}_{d}]^T(\textbf{A}^T\textbf{s}+\textbf{e}_\textsf{in})=\textbf{H}^T\textbf{s}+\widehat{\textbf{e}}, 
				%&=[\textbf{I}_m|\widehat{\textbf{S}}_1|\cdots|\widehat{\textbf{S}}_{d}]^T(\textbf{A}^T\textbf{s})+[\textbf{I}_m|\widehat{\textbf{S}}_1|\cdots|\widehat{\textbf{S}}_{d}]^T\textbf{e}_\textsf{in}\\
				$$
				where $\widehat{\textbf{e}}=[\textbf{I}_m|\widehat{\textbf{S}}_1|\cdots|\widehat{\textbf{S}}_{d}]^T\textbf{e}_\textsf{in}$. 
				It is easy to see that $\widehat{\textbf{c}}$ is computed as in Game 2. 	Additionally, $\widehat{\textbf{c}}_{\textsf{out}} =\textbf{U}^T \textbf{s}+\widehat{\textbf{e}}_{\textsf{out}}+\mu_b \lceil \frac{q}{2} \rceil \in \mathbb{Z}_q$. Then $\widehat{ct}:=(\widehat{\textbf{c}}, \widehat{\textbf{c}}_{\textsf{out}} ) \in \mathbb{Z}_q^{(d+2)m}$ is a valid ciphertext of $\mu_b$. 
				
				\item If $\textbf{c}_{\textsf{in}}$, $ \textbf{c}_{\textsf{out}}$ are random then $\widehat{\textbf{c}}$ is random (following a standard left over hash lemma argument). And since $\widehat{\textbf{c}}_{\textsf{out}}$ is also random, $\widehat{ct}:=(\widehat{\textbf{c}}, \widehat{\textbf{c}}_{\textsf{out}})$ is random in $\mathbb{Z}_q^{(d+2)m}$ which behaves similarly to Game 3.
			\end{itemize}
			\item \textbf{Guess.} Eventually, once $\mathcal{A}$ outputs his guess of whether he is interacting with Game 2 or Game 3, $\mathcal{B}$ outputs his decision for the DLWE problem.  
			
		\end{itemize}
		We have shown that $\mathcal{B}$ can solve the $(n,2m,q,\chi)$--$\textsf{DLWE}$ instance. \qed
		
	\end{description}
\end{proof}

\noindent \textbf{Setting Parameters.} In order to choose parameters, we should take the following into consideration:
	\begin{itemize}
		\item For the hardness of DLWE, by Theorem \ref{dlwehard}, we choose $\epsilon, n, q, \chi$, where  $\chi$ is a $\chi_0$-bounded distribution, such that $q/\chi_0\geq 2^{n^\epsilon}$. We also note that, the hardness of DLWE via the traditional worst-case reduction (e.g., Lemma \ref{dlwehard}) does not help  us much in proposing concrete parameters for lattice-based cryptosystems. Instead, a more conservative methodology that has been usually used in
the literature is the so-called ``core-SVP hardness"; see \cite[Subsection 5.2.1]{ABD+20}  for a detailed reference.
		\item Setting Gaussian parameters:
			\begin{enumerate}
		\item \textit{First approach:} Without caring the security proof, for trapdoor algorithms  to work, we can set $\sigma_1=\| \widetilde{\textbf{T}_{\textbf{A}}}\|\cdot\omega(\sqrt{\log (2m)})$, with $\| \widetilde{\textbf{T}_{\textbf{A}}}\|=O(\sqrt{n\log m})$ by Item 1 of Lemma \ref{trapdoor}. Note that, in \textsf{DFKHE.KHom} we have $\| \widetilde{\textbf{T}}_{y,f_1}\| <\sigma_1 \cdot \sqrt{2m}$  by Item 5 of Lemma \ref{trapdoor}. Then, $\sigma_2=\| \widetilde{\textbf{T}}_{y,f_1}\| \cdot \omega(\sqrt{\log (3m)})=\sigma_1 \cdot \omega(\sqrt{m\log m}).$ Similarly, we can set $\sigma_k=\sigma_1\cdot(\sqrt{m\log m})^{k-1}$ for all $k \in [\eta]$.
\item \textit{Second approach:} For  the security proof to work, we have to be careful in choosing Gaussian parameters $\sigma_1, \cdots, \sigma_{\eta}$.  Indeed, we have to choose $\sigma_1=\omega(\beta_\mathcal{F}\cdot \sqrt{\log m})$. In fact, we remarked in Step 2 of \textbf{Game 2} of the  proof for Theorem \ref{varcpa} that $\|\widehat{\textbf{S}}_{f_i}\|_{sup} \leq \beta_{\mathcal{F}}$ for all $i$.   And  for a generic $k$ we still obtain $\|\widetilde{ \textbf{E}}_{y,f_1, \cdots, f_k}\| \leq \| \widetilde{\mathbf{T}_\mathbf{G}}\|(1+\|\mathbf{S}_{f_k}\|_{sup})=\sqrt{5}(1+\beta_{\mathcal{F}})$ as we just exploit $\textbf{T}_\textbf{G}$ as the secret key. Hence, $\sigma_k=\| \widetilde{\textbf{E}}_{y,f_1, \cdots, f_k}\|\cdot\omega(\sqrt{\log ((k+1)m)})=\omega(\beta_\mathcal{F}\cdot \sqrt{\log m})$ for all $k \in [\eta]$. 
\item Compared with $\sigma_k$ of the first approach, $\sigma_k$'s of the second approach are essentially smaller. Therefore, in order for both trapdoor algorithms and the security to work, we should set $\sigma_1=\omega(\beta_\mathcal{F}\cdot \sqrt{\log m})$ and choose $\beta_{\mathcal{F}} >\| \widetilde{\textbf{T}_{\textbf{A}}}\|=\sqrt{n\log m}$ and then follow the first approach in setting Gaussian parameters.  Recall that, $\beta_{\mathcal{F}}=(\frac{p^d-1}{p-1}\cdot m)^\tau \cdot 20\sqrt{m}=O((p^{d-1}m)^\tau\sqrt{m})$ by Lemma \ref{eval2}.
	\end{enumerate}
\item For the correctness: We need Condition \eqref{key} to hold, i.e., $		(\eta+1)^2\cdot \sqrt{m}\cdot \omega( (\sqrt{m\log m})^{\eta})\cdot \beta_{\mathcal{F}}^2+2<\frac{1}{4}(q/\chi_0)$.
\end{itemize}

\noindent 	 \textbf{Sizes of Keys and Ciphertext.} Recall that, throughout this work, we set $m=\Theta(n\log q)$. The public key corresponding  $d$ variables consists of $d+1$ matrices of dimension $n\times m$ over $\mathbb{Z}_q$. Then the public key size is $O((d+1)\cdot n^2 \log^2 q)$. The initial secret key is the short trapdoor matrix $\textbf{T}_\textbf{A}$ of dimension $m\times m$ generated by \textsf{TrapGen} such that $\|\textbf{T}_\textbf{A}\| \leq O(\sqrt{n \log q})$, then size is $O(n^2 \log^2 q\cdot \log( n\log q))$. The secret key after delegating $\eta$ functions  is the trapdoor matrix $\textbf{T}_{y,f_1, \cdots, f_\eta}$ of dimension $(\eta+1)m \times (\eta+1)m$ and    $\| \textbf{T}_{y,f_1, \cdots, f_\eta} \| <\sigma_\eta\cdot \sqrt{(\eta+1)m}=\beta_{\mathcal{F}}\cdot \omega((\sqrt{m\log m})^{\eta})$ with overwhelming probability by Lemma \ref{thm:Gauss}. Therefore its size is  $ (\eta+1) \cdot n \log q \cdot( O(\log(\beta_{\mathcal{F}})+\eta\cdot \log (n \log q)))$. The ciphertext is a tuple of $(d+2)$ vectors of in $\mathbb{Z}^m_q$ hence its size is $O((d+2)\cdot n\log^2q))$. 

\subsection{LWE-based PE Construction from DFKHE} \label{cons}
We define the family of equality functions $\mathcal{F}:=\{ f_{t^*}: \mathbb{Z}_q^d \rightarrow  \mathbb{Z}_q| t^* \in \mathbb{Z}_q\}$, 
where $f_{t^*}(\mathbf{t}):=eq_{t^*}(t_1)+\cdots+ eq_{t^*}(t_d)$, $\mathbf{t}=(t_1, \cdots, t_d)$,  $eq_{t^*}: \mathbb{Z}_q\rightarrow \mathbb{Z}_q $, satisfying that $\forall t\in \mathbb{Z}_q$, $eq_{t^*}(t)=1 \text{ (mod } q)$ iff $t=t^*$, 
otherwise $eq_{t^*}(t)=0 \text{ (mod } q)$.
Then  $f_{t^*}(\textbf{t})=0 \text{ (mod } q)$ iff $eq_{t^*}(t_i)=0 \text{ (mod } q)$ if  $d<q$, 
for all $i\in [d]$. 
By applying the generic framework in Section \ref{generic} to DFKHE demonstrated in Subsection \ref{dfkhe} and modifying the resulting PE, we come up with the LWE-based  \textsf{PE} construction $\Psi=\{\textsf{PE.key},$ $ \textsf{PE.enc}, $ $\textsf{PE.pun}, \textsf{PE.dec}\}$ presented below:
\begin{description}	
	\item \underline{$\textsf{PE.key}(1^\lambda )$:} 
	For the input  security parameter $\lambda$, do the following:
	\begin{enumerate}
		\item Choose $n=n(\lambda)$, $q=q(\lambda)$ prime,  and  the maximum number of tags $d=d(\lambda)$ per a ciphertext such that $d<q$.
		\item Choose  $m=\Theta{(n\log q)}$. The plaintext space is $\mathcal{M}:= \{0,1\}^m$,  $\mathcal{T}:=\mathbb{Z}_q$. Additionally, let $\chi$ be a $\chi_0$--bounded noise distribution (i.e, its support belongs to $[-\chi_0, \chi_0]$ for which the $(n,2m, q,\chi)$--DLWE is hard. Set $\sigma=\omega(\beta_\mathcal{F}\cdot \sqrt{\log m}).$

		\item Sample $(\textbf{A},\textbf{T}_\textbf{A}) \leftarrow \textsf{TrapGen}(n,m,q)$,  $\textbf{U},\textbf{B}_1, \cdots ,\textbf{B}_d \xleftarrow{\$} \mathbb{Z}_q^{n \times m}$.
		\item Output $pk=\{\textbf{A},\textbf{B}_1, \cdots \textbf{B}_d , \textbf{U}\}$ and $sk_0=\{\textbf{T}_\textbf{A}\}$.
	\end{enumerate}
	%================
	\item \underline{$\textsf{PE.enc}(\mu,  pk, \{t_1, \cdots, t_d\})$:} 
	For the input consiting of (a message $\mu$,  the public key $pk$ and ciphertext tags $(t_1, \cdots, t_d) \in \mathcal{T}^d$), 
	perform the following steps:
	\begin{enumerate}
		\item Sample $\textbf{s} \xleftarrow{\$}\mathbb{Z}_q^{n}$, 
		$\textbf{e}_\textsf{out}, \textbf{e}_\textsf{in} \leftarrow \chi^{m}$,  $\textbf{S}_1, \cdots, \textbf{S}_{d} \xleftarrow{\$} \{-1,1\}^{m \times m}$.
		\item Compute $\textbf{e}\leftarrow (\textbf{I}_m|\textbf{S}_1|\cdots|\textbf{S}_{d})^T\textbf{e}_\textsf{in}=(\textbf{e}_{ \textsf{in}}^T, \textbf{e}_{1}^T, \cdots, \textbf{e}_{d}^T)^T $.
		\item Form $\textbf{H}\leftarrow [\textbf{A}|t_1\textbf{G}+\textbf{B}_1|\cdots |t_d \textbf{G}+\textbf{B}_d]$ 
		and compute $\textbf{c}=\textbf{H}^T\textbf{s}+\textbf{e} \in \mathbb{Z}_q^{(d+1)m}$,\\
		$\textbf{c}=[\textbf{c}_\textsf{in}|\textbf{c}_1|\cdots |\textbf{c}_d]$, where $\textbf{c}_{\textsf{in}}=\textbf{A}^T \textbf{s}+\textbf{e}_{\textsf{in}}$
		and  $\textbf{c}_{i}=(t_i\textbf{G}+\textbf{B}_i)^T \textbf{s}+\textbf{e}_{i}$ for $i\in [d]$.
		\item Compute $\textbf{c}_{\textsf{out}} \leftarrow \textbf{U}^T \textbf{s}+\textbf{e}_{\textsf{out}}+\mu \lceil \frac{q}{2} \rceil$, output $(ct= (\textbf{c}_{\textsf{in}}, \textbf{c}_1, \cdots, \textbf{c}_d, \textbf{c}_{\textsf{out}}), (t_1, $ $ \cdots, t_d)) $.
	\end{enumerate}
	\iffalse
	\item \underline{$\textsf{PE. pun}(\textbf{A}, kp_{0},t^*_1)$:}  // for the case $i=1$
	\begin{enumerate}
		\item Parse $kp_{0}:=\textbf{T}_{\textbf{A}}$
		\item $\textbf{B}_{eq_{1}}^{(1)} \leftarrow \textsf{Eval}_\textsf{pk}(eq_{t^*_1}, \textbf{B}_1)$, $\cdots$, $\textbf{B}_{eq_{1}}^{(d)} \leftarrow \textsf{Eval}_\textsf{pk}(eq_{t^*_1}, \textbf{B}_d)$
		\item Let $\textbf{B}_{eq_{1}}:=\textbf{B}_{eq_{1}}^{(1)}+\cdots+\textbf{B}_{eq_{_1}}^{(d)}$
		\item $\textbf{T}_{eq_{1}} \leftarrow \textsf{ExtBasis}([\textbf{A} |\textbf{B}_{eq_{1}}], \textbf{T}_{\textbf{A}}, \sigma')$
		\item Output $kp_1:=\{\textbf{T}_{eq_{1}},t^*_1, \textbf{B}_{eq_{1}}\}$
	\end{enumerate}
	\fi   
	
	\item \underline{$\textsf{PE.pun}(sk_{\eta-1},t^*_\eta)$: }
	For the input (a puncture key $sk_{\eta-1}$ and a punctured tag $t^*_\eta \in \mathcal{T}$), do:
	\begin{enumerate}
		%\item Parse $sk_{\eta-1}:=\{\textbf{T}_{eq_{\eta-1}},(t^*_1,\cdots, t^*_{\eta-1}), (\textbf{B}_{eq_{1}}, \cdots, \textbf{B}_{eq_{\eta-1}})\}$. Note that, $sk_{0}:=\{\textbf{T}_{\textbf{A}}\}$.
		\item Evaluate $\textbf{B}_{eq_\eta} \leftarrow \textsf{Eval}_\textsf{pk}(f_{t^*_{\eta}}, (\textbf{B}_k)_{k=1}^d)$.
		\item Compute  $\textbf{E}_{eq_{\eta}} \leftarrow \textsf{ExtBasisLeft}([\textbf{A}|\textbf{B}_{eq_{1}}|\cdots |\textbf{B}_{eq_{{\eta-1}}}|\textbf{B}_{eq_{{\eta}}}], \textbf{T}_{eq_{{\eta-1}}})$.
			\item  $\textbf{T}_{eq_{\eta}} \leftarrow \textsf{RandBasis}([\textbf{A}|\textbf{B}_{eq_{1}}|\cdots |\textbf{B}_{eq_{{\eta-1}}}|\textbf{B}_{eq_{{\eta}}}], \textbf{E}_{eq_{\eta}}, \sigma_\eta)$.
		\item Output $sk_{\eta}:=(\textbf{T}_{eq_{\eta}},(t^*_1,\cdots, t^*_{\eta}), (\textbf{B}_{eq_{1}}, \cdots, \textbf{B}_{eq_{\eta}}))$.
	\end{enumerate}
	
	\item \underline{$\textsf{PE.dec}(ct, \textbf{t}, (sk_\eta, \{t^*_1, \cdots, t^*_{\eta}\}))$: }
	For the input (a ciphertext $ct=(\textbf{c}_{\textsf{in}}, \textbf{c}_1, \cdots, \textbf{c}_{d},$ $ \textbf{c}_{\textsf{out}}) $, 
	the associated tags $\textbf{t}=(t_1, \cdots, t_d)$, 
	a puncture key $sk_{\eta}$ and the associated punctured tags $\{t^*_1, \cdots, t^*_{\eta}\} \subset \mathcal{T}$), execute the following steps:
	\begin{enumerate}
		\item If there exists $j\in [\eta]$ such that $f_{t^*_j}(\textbf{t}) \neq 0$, then output $\bot$. Otherwise, go to Step 2.
		\item Parse $sk_{\eta}:=(\textbf{T}_{eq_{\eta}},(t^*_1,\cdots, t^*_{\eta}), (\textbf{B}_{eq_{1}}, \cdots, \textbf{B}_{eq_{\eta}}))$.
		\item Sample $\textbf{R} \leftarrow \textsf{SampleD}([\textbf{A} |\textbf{B}_{eq_{1}}|\cdots |\textbf{B}_{eq_{{\eta}}}], \textbf{T}_{eq_{{\eta}}}, \textbf{U}, \sigma_\eta)$.
		\item Evaluate  $\textbf{c}_{eq_j}\leftarrow \textsf{Eval}_\textsf{ct}(f_{t^*_j},  ((t_k, \textbf{B}_k, \textbf{c}_{k}))_{k=1}^{d})$, for $j\in [\eta]$.
		\item Compute $\bar{\mu}=(\bar{\mu}_1, \cdots, \bar{\mu}_m) \leftarrow \textbf{c}_{\textsf{out}}-\textbf{R}^T(\textbf{c}_\textsf{in}|\textbf{c}_{eq_{1}}|\cdots|\textbf{c}_{eq_\eta})$.
		
		\item For $\ell \in [m]$,  if $ |\bar{\mu}_\ell | <q/4$ then output $\mu_\ell=0$;  otherwise, output $\mu_\ell=1$.
	\end{enumerate}  
\end{description}

We remark that all analysis done for the LWE-based DFKHE in Subsection \ref{dfkhe} can perfectly applied to our LWE-based PE. Therefore, we do not mention the analysis again in this section. For completeness, we only state two main theorems as below.
\begin{theorem}[Correctness of $\Psi$] 
	The proposed $\mathsf{PE}$ $\Psi$ is correct 
	if 
			$(\eta+1)^2\cdot m^{1+\frac{\eta}{2}}\cdot \omega( (\sqrt{\log m})^{\eta+1})\cdot \beta_{\mathcal{F}}^2+2<\frac{1}{4}(q/\chi_0),$
 assumming   that $t^*_j \neq t_k$ for all $(j,k)\in [\eta]\times [d]$.
\end{theorem}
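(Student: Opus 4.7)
The plan is to leverage the close structural analogy between $\Psi$ and the DFKHE scheme $\Pi$: by the generic construction of Section \ref{generic} and the concrete instantiation in Subsection \ref{cons}, $\Psi$ is precisely $\Pi$ specialised to the family $\mathcal{F}$ of equality test functions with distinguished value $y_0 = 0$. Consequently the correctness proof will closely mirror that of Theorem \ref{theo2}, with only two small adjustments needed.

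First, I would verify that the puncturing hypothesis $t^*_j \neq t_k$ for all $(j,k) \in [\eta] \times [d]$ translates to the DFKHE decryption condition $f_{t^*_j}(\mathbf{t}) = 0 \pmod q$ for every $j \in [\eta]$. By the definition $f_{t^*_j}(\mathbf{t}) = \sum_{k=1}^d eq_{t^*_j}(t_k)$, each summand vanishes under the hypothesis, and since $d < q$ there is no modular wrap-around. Hence Step~1 of $\textsf{PE.dec}$ passes, and $\textsf{SampleD}$ returns a short $\mathbf{R}$ with $[\mathbf{A} \mid \mathbf{B}_{eq_1} \mid \cdots \mid \mathbf{B}_{eq_\eta}]\, \mathbf{R} = \mathbf{U} \pmod q$. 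A direct expansion identical to the one in the proof of Theorem \ref{theo2} then gives
\[
\bar{\mu} \;=\; \mu \lceil q/2 \rceil \;+\; \mathbf{e}_{\textsf{out}} \;-\; \mathbf{R}^T \bigl(\mathbf{e}_{\textsf{in}} \mid \mathbf{e}_{eq_1} \mid \cdots \mid \mathbf{e}_{eq_\eta}\bigr).
\]

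Second, I would bound the $\ell_\infty$ norm of the error term. Item~2 of Lemma \ref{eval} gives $\|\mathbf{e}_{eq_j}\| \leq \chi_0 \beta_{\mathcal{F}}$ for every $j$, so the concatenated noise vector has norm at most $(\eta \beta_{\mathcal{F}} + 1)\chi_0$. Item~4 of Lemma \ref{trapdoor} yields $\|\mathbf{R}^T\|_{sup} \leq \sigma_\eta \sqrt{(\eta+1)m \cdot m}$, and the parameter discussion following Theorem \ref{varcpa} sets $\sigma_\eta = \omega(\beta_{\mathcal{F}} \sqrt{\log m}) \cdot (\sqrt{m \log m})^{\eta-1}$. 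Collecting these estimates, the $\ell_\infty$ error is bounded above by a quantity of the form $(\eta+1)^2 \cdot m^{1 + \eta/2} \cdot \omega((\sqrt{\log m})^{\eta+1}) \cdot \beta_{\mathcal{F}}^2 \cdot \chi_0 + 2\chi_0$; requiring this to stay below $q/4$ yields exactly the claimed inequality and hence correct rounding to recover every bit $\mu_\ell$.

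The main obstacle is arithmetic bookkeeping rather than any new conceptual step: one must compose the two $\omega$-factors coming from $\sigma_1$ and from the iterated $\textsf{RandBasis}$ blow-up in $\sigma_\eta$, and carefully track the factor $\eta$ introduced by $f_{t^*}$ being itself a sum of $d$ basic equality tests, all without double-counting $\beta_{\mathcal{F}}$ when passing from a single $eq_{t^*}$ to its $d$-fold sum. Once this accounting is done, the rest of the argument is a straightforward specialisation of Theorem \ref{theo2} with $y = 0$ and the concrete family $\mathcal{F}$ of equality test functions.
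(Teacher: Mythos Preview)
Your proposal is correct and follows exactly the approach the paper takes: the paper does not give a standalone proof of this theorem but simply remarks that ``all analysis done for the LWE-based DFKHE in Subsection~\ref{dfkhe} can perfectly [be] applied to our LWE-based PE,'' and your write-up is precisely that specialisation of Theorem~\ref{theo2} with $y=0$, together with the verification (which the paper leaves implicit) that $t^*_j\neq t_k$ for all $(j,k)$ forces $f_{t^*_j}(\mathbf{t})=0\pmod q$ since $d<q$. The extra $\sqrt{m}\cdot\omega(\sqrt{\log m})$ factor in the PE bound compared to Condition~\eqref{key} is a minor bookkeeping discrepancy already present in the paper's two stated bounds, not an artefact of your argument.
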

\begin{theorem}[IND-sPUN-CPA] 
	The proposed PE $\Psi$ scheme is IND-sPUN-CPA thanks to the IND-sVAR-CPA of the underlying DFKHE $\Pi$.
\end{theorem}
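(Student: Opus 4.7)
The plan is to reduce this to the two results already in hand: the generic reduction of Theorem \ref{pe} (which converts any selectively CPA-secure DFKHE into a selectively CPA-secure PE via the $(\mathcal{T},\mathcal{Y},\mathcal{F})$ template of Section \ref{generic}) and Theorem \ref{varcpa} (which establishes IND-sVAR-CPA of the concrete LWE-based $\Pi$ under $(n,2m,q,\chi)$-DLWE). First I would verify that $\Psi$ of Subsection \ref{cons} is exactly the output of the generic construction of Section \ref{generic} applied to $\Pi$ with the instantiation $\mathcal{T}=\mathbb{Z}_q$, $\mathcal{Y}=\mathbb{Z}_q$, special element $y_0=0\pmod q$, and $\mathcal{F}=\{f_{t^*}\}_{t^*\in\mathbb{Z}_q}$ the family of equality test functions $f_{t^*}(\mathbf{t})=\sum_{i=1}^d eq_{t^*}(t_i)$. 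Since $d<q$, the defining property $f_{t^*}(\mathbf{t})=y_0 \iff t^*\neq t_i~\forall i\in[d]$ holds, so $\mathcal{F}$ satisfies the requirement stated around Equation \eqref{eq10}, and the four PE algorithms of Subsection \ref{cons} coincide step-by-step with $\mathsf{PE.key}/\mathsf{PE.enc}/\mathsf{PE.pun}/\mathsf{PE.dec}$ applied to $\Pi$.

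Once this identification is made, the rest is a direct black-box simulation in the style of the proof of Theorem \ref{pe}. Given an adversary $\mathcal{A}$ attacking IND-sPUN-CPA of $\Psi$ with advantage $\delta$, I would construct $\mathcal{B}$ attacking IND-sVAR-CPA of $\Pi$: $\mathcal{B}$ forwards $\mathcal{A}$'s target tags as its own target variable $\widehat{\mathbf{t}}$; forwards the public key $pk=(\mathbf{A},\mathbf{B}_1,\ldots,\mathbf{B}_d,\mathbf{U})$; answers each $\mathsf{PQ}(t^*_k)$ internally (only updating the counter and list $t^*_1,\ldots,t^*_k$ of pending functions $f_{t^*_1},\ldots,f_{t^*_k}$); and, at the first $\mathsf{CQ}()$, queries the DFKHE oracle $KG(\mathsf{dfkhe}.sk,\widehat{\mathbf{t}},y_0,(f_{t^*_1},\ldots,f_{t^*_k}))$ and hands the returned delegated key to $\mathcal{A}$. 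The challenge phase and Query 2 are forwarded verbatim, and $\mathcal{B}$ outputs $\mathcal{A}$'s final guess.

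The one point that needs checking — and which is the only real obstacle — is that the DFKHE oracle never aborts on a query $\mathcal{B}$ actually issues. The oracle rejects only if $f_{t^*_j}(\widehat{\mathbf{t}})=y_0$ for all $j\in[k]$. But the IND-sPUN-CPA game forces $\{\widehat{t_1},\ldots,\widehat{t_d}\}\cap\mathcal{C}^*\neq\emptyset$ for the corruption to succeed, i.e.\ there exists $j\in[k]$ and $i\in[d]$ with $t^*_j=\widehat{t_i}$, which by the definition of $f_{t^*_j}$ yields $f_{t^*_j}(\widehat{\mathbf{t}})\neq y_0$; hence the query is legitimate and is answered. The simulated view is identically distributed to the real IND-sPUN-CPA game, so $\mathsf{Adv}^{\mathsf{IND}\text{-}\mathsf{sPUN}\text{-}\mathsf{CPA}}_{\Psi}(\mathcal{A})=\mathsf{Adv}^{\mathsf{IND}\text{-}\mathsf{sVAR}\text{-}\mathsf{CPA}}_{\Pi}(\mathcal{B})$. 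Combining with Theorem \ref{varcpa} yields IND-sPUN-CPA of $\Psi$ under the hardness of $(n,2m,q,\chi)$-DLWE, as claimed. \qed
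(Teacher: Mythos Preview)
Your proposal is correct and takes essentially the same approach as the paper. The paper does not give a separate proof of this theorem at all; it simply remarks that the analysis for the LWE-based DFKHE $\Pi$ carries over and states the result as a corollary of the generic reduction (Theorem~\ref{pe}) combined with the IND-sVAR-CPA security of $\Pi$ (Theorem~\ref{varcpa}). Your write-up spells out exactly this chain of implications and additionally verifies the compatibility conditions (that $d<q$ makes $f_{t^*}(\mathbf{t})=0$ equivalent to $t^*\notin\{t_1,\ldots,t_d\}$, that $y_0=0$ so the $y\mathbf{G}$ shifts disappear and the concrete $\mathsf{PE.pun}/\mathsf{PE.dec}$ coincide with the generic ones, and that the PE corruption condition $\{\widehat{t_1},\ldots,\widehat{t_d}\}\cap\mathcal{C}^*\neq\emptyset$ guarantees the DFKHE $KG$ oracle does not abort), which the paper leaves implicit.
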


\section{Discussion on Unbounded Number of Ciphertext Tags} \label{unbounded} 
The  idea of   \cite{BV16}   might help us to extend the LWE-based DFKHE construction from Subsection \ref{dfkhe} 
(resp.,  PE from Subsection \ref{cons}) to a variant   
that supports arbitrary number of variables (resp., ciphertext tags). We call this variant \textit{\textsf{unDFKHE}}. 
Although, the original idea of \cite{BV16} is applied to ABE with attributes belonging to $\{0,1\}$ using the XOR operation, we believe that it might be adapted to work well with our DFKHE with variables and punctures over $\mathbb{Z}_q$ using the addition modulo $q$ (denoted $\oplus_q$. %In what follows, we sketch main modifications of \textsf{unDFKHE} compared to the LWE-based DFKHE. 

In \textsf{unDFKHE}, the maximum number of ciphertext tags $d$ is not fixed  in advance. 
Then, in the key generation algorithm, we cannot generate $\textbf{B}_1, \cdots, \textbf{B}_d$ and give them to the public. 
In order to solve this issue, we utilize a  family of pseudorandom functions \textsf{PRF}=(\textsf{PRF.Gen}, \textsf{PRF.Eval}), 
where $\textsf{PRF.Gen}(1^\lambda)$ takes as input a security parameter $\lambda$ and outputs a seed 
$\textbf{s} \in \mathbb{Z}_q^\ell$ of length $\ell=\ell({\lambda})$ (which depends on  $\lambda$) and $\textsf{PRF.Eval}(\textbf{s},\textbf{x})$ 
takes as input a seed $\textbf{s} \in \mathbb{Z}_q^{\ell}$ and a variable $\textbf{x}\in \mathbb{Z}_q^*$ of \textit{arbitrary length} and  
returns an element in $\mathbb{Z}_q$. 
The family of pseudorandom functions  helps us to stretch a variable of fixed length $\ell$ to one of arbitrary length $d $ as follows. In \textsf{unDFKHE.KGen}, for a variable $\textbf{t}$ of length $d=|\textbf{t}|$, instead of  $\textbf{B}_1, \cdots, \textbf{B}_d$, we generate $\overline{\textbf{B}}_1, \cdots, \overline{\textbf{B}}_\ell$ and use them to produce $\textbf{B}_1, \cdots, \textbf{B}_d$ later. This can be done by running $\textsf{Eval}_\textsf{pk}(\textsf{PRF.Eval}(\cdot,i), (\overline{\textbf{B}}_k)_{k=1}^\ell)$, for $i\in [d]$, where $\textsf{PRF.Eval}(\cdot,i)$ acts as a function that can be evaluated by $\textsf{Eval}_\textsf{pk}$. Accordingly, any function $f \in \mathcal{F}$ will also be transformed to $f_{\Delta}$ defined by $f_{\Delta}(\textbf{t}):=f(\textbf{t}\oplus_q\Delta_{\le d})$ before joining to any computation later on. Here $\Delta_i:=\textsf{PRF.Eval}(\textbf{s},i)$  for $i\in [d]$, $\Delta_{\le d}=(\Delta_1,\cdots, \Delta_d)$.  Also remark that, $f_{\Delta}(\textbf{t}\oplus_q (q_{\le d}-\Delta_{\le d}))=f(\textbf{t})$, where $q_{\le d}=(q, \cdots, q) \in \mathbb{Z}^d$.
Therefore, in \textsf{unDFKHE.KHom},  $\textbf{B}_{f} \leftarrow \textsf{Eval}_\textsf{pk}(f_{\Delta}, (\textbf{B}_k)_{k=1}^d)$.%and  in \textsf{unDFKHE.Enc}, the form of the matrix \textbf{H} will be $$ [\textbf{A}|(t_1\oplus_q\Delta_1)\cdot \textbf{G}+\textbf{B}_1|(t_1\oplus_q(q-\Delta_1))\cdot \textbf{G}+\textbf{B}_1|\cdots |(t_d\oplus_q\Delta_d)\cdot  \textbf{G}+\textbf{B}_d|(t_d\oplus_q(q-\Delta_d))\cdot \textbf{G}+\textbf{B}_d].$$  
\iffalse
\textcolor{red}{For a proof of the semi-adaptive security, we can follow the proof of \cite[Section 4.2]{BV16} with noting that Equation (3) in \cite{BV16} should be replaced ny $\Delta_i=\begin{cases} \mathsf{PRF.Eval}(\textbf{s},i)\oplus_q (q-\hat{t}_i) \text{ if } i \le d^*,\\  \mathsf{PRF.Eval}(\textbf{s},i), \text{ otherwise, } \end{cases}$ where $d^*$ is the number of challenged ciphertext tags.}
\fi

Actually, there are a lot of work left to be done. Due to space limitation, we leave details of this section for the full version of this paper.

\section{Conclusion and Future Works} \label{conclude}
In this paper, we show puncturable encryption can be constructed from the so-called delegatable fully key-homomorphic encryption.
From the framework, we instantiate our puncturable encryption construction using LWE. 
Our puncturable encryption enjoys the selective indistinguishability under chosen plaintext attacks, 
which can be converted into adaptive indistinguishability under chosen ciphertext attacks using well-known standard techniques.
For future works, there are few investigation directions worth pursuing such as design of:
(i) puncturable lattice-based ABE as in \cite{PNXW18},
(ii) efficient puncturable forward-secure encryption schemes as proposed in\cite{GM15} or 
(iii) puncturable encryption schemes, whose  puncture key size is constant or puncturable ecnryption schemes support unlimited number of punctures.

\subsubsection{Acknowledgment.} 
We thank Sherman S.M. Chow and anonymous reviewers for their insightful comments which improve the content and presentation of the manuscript.  This work is supported by the Australian Research Council Linkage Project LP190100984. Huy Quoc Le has been sponsored by a CSIRO Data61 PhD Scholarship and CSIRO Data61 Top-up Scholarship.  Josef Pieprzyk has been supported by the Australian ARC grant DP180102199 and Polish NCN grant 2018/31/B/ST6/03003.

\end{document}